\newtheorem{theorem}{Theorem}[section]
\newtheorem{lemma}[theorem]{Lemma}
\newtheorem{proposition}[theorem]{Proposition}
\newtheorem{definition}[theorem]{Definition}
\newtheorem{remark}[theorem]{Remark}
\newtheorem{corollary}[theorem]{Corollary}
\newtheorem{assumption}[theorem]{Assumption}
\numberwithin{equation}{section}
\newcommand{\nc}{\normalcolor}
\newcommand{\dif}{\mathrm{d}}
\newcommand{\E}{\mathbf{E}}
\newcommand{\R}{\mathbf{R}}
\newcommand{\C}{\mathbf{C}}
\newcommand{\Av}{\mathrm{Av}}
\newcommand{\N}{\mathbf{N}}
\newcommand{\Var}{\mathrm{Var}}
\newcommand{\ii}{\mathrm{i}}
\newcommand{\bw}{{\bm u}}
\newcommand{\bq}{{\bm q}}
\title{Gaussian fluctuations in the Equipartition Principle for Wigner matrices}
\subjclass[2020]{60B20, 82B10, 58J51} 
\keywords{Quantum unique ergodicity, Matrix Dyson equation, Local law}
\date{\today}
\begin{document}
	\maketitle
		\vspace{0.25cm}
	
	\renewcommand{\thefootnote}{\fnsymbol{footnote}}	
	\noindent
	\mbox{}%
	\hfill%
		\begin{minipage}{0.21\textwidth}
		\centering
		{Giorgio Cipolloni}\footnotemark[1]\\
		\footnotesize{\textit{gc4233@princeton.edu}}
	\end{minipage}
	\hfill%
	\begin{minipage}{0.21\textwidth}
		\centering
		{L\'aszl\'o Erd\H{o}s}\footnotemark[2]\\
		\footnotesize{\textit{lerdos@ist.ac.at}}
	\end{minipage}
	\hfill%
	\begin{minipage}{0.21\textwidth}
		\centering
		{Joscha Henheik}\footnotemark[2]\\
		\footnotesize{\textit{joscha.henheik@ist.ac.at}}
	\end{minipage}
	\hfill%
	\begin{minipage}{0.21\textwidth}
		\centering
		{Oleksii Kolupaiev}\footnotemark[2]\\
		\footnotesize{\textit{oleksii.kolupaiev@ist.ac.at}}
	\end{minipage}
	\hfill%
	\mbox{}%
	\footnotetext[1]{Princeton Center for Theoretical Science, Princeton University, Princeton, NJ 08544, USA.}
	\footnotetext[2]{Institute of Science and Technology Austria, Am Campus 1, 3400 Klosterneuburg, Austria.}
			\footnotetext[2]{Supported by ERC Advanced Grant ``RMTBeyond'' No.~101020331.}
	
	\renewcommand*{\thefootnote}{\arabic{footnote}}
	\vspace{0.25cm}
\begin{abstract}
The total energy of an eigenstate in a composite quantum system tends
to be distributed equally among its constituents. 
 We identify  the quantum fluctuation
around this equipartition principle in the simplest disordered quantum system
consisting of linear combinations of Wigner matrices. As our main ingredient, we prove the Eigenstate
Thermalisation Hypothesis and Gaussian fluctuation for general quadratic forms of the bulk eigenvectors
of  Wigner matrices with an arbitrary  deformation.  
\end{abstract}

\section{Introduction}

The general principle of the equipartition of  energy for a classical  ergodic system
asserts that in equilibrium the total energy  is equally distributed among all elementary degrees of freedom.
A similar principle   for the kinetic energy has recently been verified 
for a general quantum system coupled to a heat bath~\cite{Luczka}, see
also the previous works on the free Brownian particle and
a dissipative harmonic oscillator in~\cite{BSL2018, SBL2018, BialasSpiechowiczLuczka} and extensive literature therein. 
Motivated by E.~Wigner's original vision to model any sufficiently complex quantum
system by random matrices,  a particularly strong \emph{microcanonical}
version of 
the equipartition principle for Wigner matrices was first formulated
and proven in~\cite{Equipartition}. 
In its simplest form, 
consider
a fully mean field random Hamilton operator $H$ acting on the 
high  dimensional quantum state space $\C^N$ that
consists of the sum of two independent $N\times N$ Wigner 
matrices, 
$$
H = W_1+W_2\,,
$$
as two constituents of the system. Recall that Wigner matrices $W= (w_{ij})$
	are real or complex Hermitian random matrices with independent (up to the symmetry constraint $w_{ij}=\bar w_{ji}$),
	identically distributed entries. 
 Let $\bm{u}$ be a normalised eigenvector of $H$ with  eigenvalue (energy) 
$\lambda=\langle \bm{u}, H\bm{u} \rangle $, 
then equipartition asserts that  
$\langle \bm{u}, W_l\bm{u} \rangle\approx \frac{1}{2}\lambda$
for $l=1,2$. In~\cite[Theorem 3.4]{Equipartition} even a precise  error bound was proven, i.e. that
\begin{equation}\label{old}
	\Big|  \langle \bm{u}, W_l\bm{u} \rangle - \frac{1}{2}\lambda \Big| \le \frac{N^\epsilon}{\sqrt{N}}\,, \qquad l=1,2\,,
\end{equation}
holds  with very high probability for any fixed $\epsilon>0$; this estimate is optimal up the $N^\epsilon$ factor.
The main result of the current paper is to identify the fluctuation in~\eqref{old}, more precisely we
will show that
$$
\sqrt{N} \Big[  \langle \bw, W_l\bw \rangle - \frac{1}{2}\lambda \Big]
$$
converges to a centred normal distribution as $N\to\infty$.
 We also compute its variance that turns out to be independent of the energy $\lambda$
but depends on the symmetry class (real or complex).
The result can easily be extended to the case when $H$ is a more general linear combination of several independent Wigner matrices.

The estimate~\eqref{old} is reminiscent to the recently proven \emph{Eigenstate Thermalisation Hypothesis (ETH)},
also known as the \emph{Quantum Unique Ergodicity (QUE)},\footnote{ETH for Wigner matrices was first conjectured
	by Deutsch~\cite{deutsch}. Quantum ergodicity has
	a long history in the context of the  quantisations of chaotic classical dynamical systems starting from the fundamental theorem
	by \u{S}nirel’man~\cite{Shnirelman}. 
	For more background and
	related literature, see the Introduction of~\cite{ETHpaper}.} for Wigner matrices in~\cite[Theorem 2.2]{ETHpaper}
(see also~\cite[Theorem 2.6]{A2} for an improvement) which asserts that
\begin{equation}\label{ethold}
	\Big| \langle \bm{u}, A\bm{u}\rangle - \langle A\rangle \Big| \le \frac{N^\epsilon}{\sqrt{N}}\,, \qquad \langle A\rangle:= \frac{1}{N} \mathrm{Tr}  A\,,
\end{equation} 
holds for any bounded \emph{deterministic} matrix $A$. In fact, 
even the Gaussian fluctuation of  
\begin{equation}\label{quefluc}
	\sqrt{N} \Big[ \langle \bm{u}, A\bm{u}\rangle - \langle A\rangle \Big] 
\end{equation} 
was proven in~\cite[Theorem 2.2]{normalfluc} and \cite[Theorem 2.8]{A2}, see also~\cite{BenigniLopatto2103.12013} and the recent generalisation~\cite{CipolloniBenigni2022}
to off-diagonal elements as well as joint Gaussianity of several quadratic forms.
Earlier results on ETH~\cite{ESYDeloc, KnowlesYinIsotropic, BloemendalEKYY, BenigniLopattoDelocalization} and its 
fluctuations~\cite{BourgadeYau1312.1301, KnowlesYin13, TVEigenvector, MarcinekYau2005.08425}
for Wigner matrices  typically concerned rank one or finite rank observables $A$.

Despite their apparent similarities,
 the quadratic form in~\eqref{old} is essentially different from that in~\eqref{ethold} since $W_l$ is strongly correlated with $\bw$
while $A$ in~\eqref{ethold} is deterministic. This explains the difference in the two leading terms; note that
$\frac{1}{2}\lambda$ in~\eqref{old} is energy dependent and it is far from the value $\langle W_l\rangle \approx 0$, which might  be erroneously guessed from~\eqref{ethold}. Still, the basic approach
leading to ETH~\eqref{ethold}  is very useful to study   $\langle \bm{u}, W_l\bm{u} \rangle$ as well. 
The basic idea is  to  condition on one of the Wigner matrices, say $W_2$, and consider $H=W_1+W_2$ in the
probability space of $W_1$ as a Wigner matrix with an additive \emph{deterministic} deformation $W_2$. 
Assume that we can prove the generalisation of ETH~\eqref{ethold}  for such \emph{deformed Wigner} matrices.
In the space of $W_1$ this would result in a concentration
of $\langle \bw, W_2\bw\rangle$ around some quantity $f(W_2)$ depending only on $W_2$; however, the answer will nontrivially 
depend on the deformation, i.e. it 
will not  be simply  $\langle W_2\rangle$. Once the correct form of  $f(W_2)$  is established, we can 
find its concentration in the probability space of $W_2$, yielding the final answer.

To achieve these results we prove more general  ETH and fluctuation results
for  eigenvector overlaps of  deformed Wigner matrices of the general form $H=W+D$, where $W$ is an $N\times N$ Wigner
matrix and $D$ is arbitrary, bounded deterministic matrix. The goal is to establish the concentration and the fluctuation
of the quadratic form $\langle \bw, A\bw\rangle$ for a normalised eigenvector $\bw$ of $H$ with a bounded deterministic matrix $A$.
We remark that for the special case of a rank one matrix $A=|\bq\rangle \langle \bq|$,
ETH is equivalent to the \emph{complete isotropic delocalisation}
of the eigenvector $\bw$, i.e. that $|\langle \bq, \bw\rangle|\le N^\epsilon/\sqrt{N}$ for any deterministic vector $\bq$ with $\|\bq\|=1$.
For a diagonal deformation $D$ this has been achieved in
\cite{LeeSchnelli, LandonYau}
and for the general deformation $D$  in~\cite{slowcorr}. 
The normal fluctuation of $\langle \bw, A\bw\rangle$  
for a finite rank $A$ and
diagonal $D$  was obtained in~\cite{Benigni}.

It is well known that for very general mean-field type random matrices $H$
their resolvent $G(z)=(H-z)^{-1}$  concentrates around a deterministic matrix $M=M(z)$; such
results are called \emph{local laws}, and we will recall them precisely  in~\eqref{eq:singlegllaw}.
Here $M$ is a solution of the 
\emph{matrix Dyson equation (MDE)}, which, in case of $H=W+D$, reads as
\begin{equation}\label{mde}
	-\frac{1}{M(z)}= z-D +\langle M (z)\rangle\,. 
\end{equation}
Given $M$, it turns out that
\begin{equation}\label{ethuj}
	\langle \bw_i, A\bw_j \rangle \approx \delta_{i, j}
	\frac{ \langle\Im M(\lambda_i) A\rangle}{\langle \Im M(\lambda_i)\rangle}\,,
\end{equation}
where $\lambda_i$ is the eigenvalue corresponding to the normalised eigenvector
$\bw_i$, i.e.~$H\bw_i=\lambda_i \bw_i$. 
Since the eigenvalues are \emph{rigid}, i.e.
they fluctuate only very little, the right hand side of~\eqref{ethuj} is essentially deterministic and in general  it  depends on the energy.
Similarly to~\eqref{quefluc}, we will also establish the Gaussian fluctuation around the approximation~\eqref{ethuj}.
For zero deformation, $D=0$, the matrix $M$ is constant and~\eqref{ethuj} recovers~\eqref{ethold}--\eqref{quefluc}
as a special case.
For simplicity, in the current paper we establish all these results only in the \emph{bulk} of the  spectrum, but similar 
results may be obtained at the \emph{edge} and at the possible \emph{cusp regime} 
of the spectrum as well; the details are deferred to later works.

We now comment on the new aspects of our methods. The proof of~\eqref{ethuj} relies on a basic observation  about
the local law for $H=W+D$. Its  \emph{average} form  asserts that
\begin{equation}\label{locl}
	\Big| \big\langle (G(z)-M(z)) A\big\rangle\Big| \le \frac{N^\epsilon}{N \eta}\,, \qquad \eta : =|\Im z| \gg \frac{1}{N}\,,
\end{equation}
holds with very high probability and the error is essentially optimal for any bounded deterministic matrix $A$.
However, there is a codimension one subspace of the matrices $A$ for which the estimate improves to 
$N^\epsilon/(N\sqrt{\eta})$, gaining a $\sqrt{\eta}$ factor in the relevant small $\eta\ll 1$ regime.
For Wigner matrices $H=W$ without deformation, the \emph{traceless} matrices $A$ played this special role.
The key idea behind the proof of ETH for Wigner matrices  in~\cite{ETHpaper} was to decompose
any deterministic matrix as $A =: \langle A\rangle + \mathring{A}$ into its tracial and traceless parts
and prove multi-resolvent generalisations of the local law~\eqref{locl} with an error term distinguishing
whether the deterministic matrices  are traceless or not. For example, 
for a typical $A$ we have 
$$
\langle G(z) A G(z)^* A \rangle\sim \frac{1}{\eta}
$$
but for the traceless part of $A$ we have
$$
\langle G(z) \mathring{A} G(z)^* \mathring{A} \rangle\sim 1\,, \qquad \eta\gg \frac{1}{N}\,,
$$
with appropriately matching error terms.
In general, each traceless $A$ improves the typical estimate by a factor $\sqrt{\eta}$.  ETH then follows
from the spectral theorem,
\begin{equation}\label{specthm}
	\frac{1}{N}\sum_{i,j=1}^N 
	\frac{\eta}{(\lambda_i-e)^2+\eta^2} \frac{\eta}{(\lambda_j-e)^2+\eta^2} |\langle \bw_i, \mathring{A} \bw_j\rangle|^2
	= \langle \Im G(z)  \mathring{A} \Im G(z) \mathring{A}  \rangle \le N^\epsilon\,;
\end{equation}
choosing $z=e+\ii \eta$ appropriately with $\eta \sim N^{-1+\epsilon}$ we obtain that 
$ |\langle \bw_i, \mathring{A} \bw_j\rangle|^2\le N^{-1+3\epsilon}$ which even includes
an off-diagonal version of~\eqref{ethold}.

To extend this argument  to deformed Wigner matrices requires to 
identify the appropriate \emph{singular} (``tracial'')  and \emph{regular}  (``traceless'') parts of an arbitrary matrix.
It turns out that the improved local laws around an energy $e=\Re z$ 
hold if $A$ is orthogonal\footnote{The space of matrices is equipped with the
	usual Hilbert-Schmidt scalar product.} to $\Im M(e)$, see~\eqref{eq:defreg1} for the new definition of $\mathring{A}$,
which denotes the regular part of $A$.
In this theory the matrix $\Im M$ emerges as the critical eigenvector of a \emph{linear stability operator} 
$\mathcal{B}=I-M \langle\cdot\rangle M^*$ related to the MDE~\eqref{mde}.
The major complication compared with the pure Wigner case in~\cite{ETHpaper} is that now the regular part of a matrix becomes
energy dependent. In particular, in a multi-resolvent chain $\langle G(z_1)A_1 G(z_2) A_2 \ldots \rangle$ it is \emph{a priori}
unclear at which spectral parameters the matrices $A_i$ should be regularised; it turns out that the correct
regularisation depends on both $z_i$ and $z_{i+1}$, see~\eqref{eq:circ def} later. 
A similar procedure was performed for the Hermitisation of non-Hermitian i.i.d. matrices with a general deformation
in~\cite{iid}, see~\cite[Appendix A]{iid} for a more conceptual presentation.
Having identified the correct regularisation, we derive a system of \emph{master inequalities} for the
error terms in multi-resolvent local laws for regular observables; a  similar strategy (with minor modifications)
have been used in~\cite{multiG, A2} for Wigner matrices and in~\cite{iid} for i.i.d. matrices.
To keep the presentation short, here we will not aim at the
most general local laws with optimal errors unlike in~\cite{multiG, A2}. Although these would be achievable with our methods,
here  we prove only what is needed  for our main results on the equipartition.

The proof of the fluctuation around the ETH uses \emph{Dyson Brownian motion (DBM)} techniques, namely
the \emph{Stochastic Eigenstate Equation} for quadratic forms of eigenvectors. 
This theory has been gradually developed for Wigner matrices
in~\cite{BourgadeYau1312.1301, BourgadeYauYin1807.01559, MarcinekYau2005.08425},
we closely follow the presentation in~\cite{normalfluc, A2}. The extension of this technique to
deformed Wigner matrices is fairly straightforward, so our presentation will be brief.
The necessary inputs for this DBM analysis follow from 
the multi-resolvent local laws that we prove  for deformed Wigner matrices.

In a closing remark  we mention that 
the original proof of~\eqref{old} in~\cite{Equipartition} was 
considerably simpler than that of~\eqref{ethold}. 
This may appear quite surprising due to the complicated correlation between $\bw$ and $\mathcal{W}$, but a
special algebraic cancellation greatly  helped  in~\cite{Equipartition}.  
Namely, with the notation $\mathcal{W}:=W_1-W_2$ and $G(z):=(H-z)^{-1}$, $z=e+\ii \eta\in \C_+$,
a relatively straightforward cumulant expansion 
showed that $ \langle \Im G(z) \mathcal{W} \Im G(z) \mathcal{W} \rangle$
is essentially bounded\footnote{This means up to an  $N^\epsilon$ factor with arbitrary small $\epsilon$.} even for spectral parameters $z$ very close to the real axis, $\eta\ge N^{-1+\epsilon/2}$.
Within this cumulant expansion an algebraic cancellation emerged due to the special form of $\mathcal{W}$. Then, 
exactly as in~\eqref{specthm} we 
obtain $|\langle \bw, \mathcal{W}\bw\rangle|^2\lesssim N^{1+\epsilon}\eta^2 = N^{-1+2\epsilon}$.
In particular, it shows that  $\langle \bw, \mathcal{W} \bw\rangle = \langle \bw, W_1 \bw\rangle - \langle \bw, W_2 \bw\rangle$
is essentially of order $N^\epsilon/\sqrt{N}$ for every eigenvector of $H$. Since
$\langle \bw, W_1 \bw\rangle + \langle \bw, W_2 \bw\rangle = \langle \bw, H \bw\rangle =\lambda $, we immediately 
obtain the equipartition~\eqref{old}. Similar idea proved the more general case, see~\eqref{eq:equipartitionbound} later.
Note, however, that this trick does not help in establishing the fluctuations of
$\langle \bw, W_l \bw\rangle$. In fact,  the full ETH analysis for deformed Wigner matrices needs to  be performed to 
establish the necessary \emph{a priori} bounds for the Dyson Brownian motion arguments.

\subsection*{Notations and conventions} 

For positive quantities $f,g$ we write $f\lesssim g$ and $f\sim g$ if $f \le C g$ or $c g\le f\le Cg$, respectively, for some constants $c,C>0$ which depend only on the constants appearing in the moment condition, see~\eqref{eq:momass} later.
For any natural number $n$ we set $[n]: =\{ 1, 2,\ldots ,n\}$.

We denote vectors by bold-faced lower case Roman letters ${\bm x}, {\bm y}\in\C ^N$, for some $N\in\N$. Vector and matrix norms, $\lVert {\bm x}\rVert$ and $\lVert A\rVert$, indicate the usual Euclidean norm and the corresponding induced matrix norm. For any $N\times N$ matrix $A$ we use the notation $\langle A\rangle:= N^{-1}\mathrm{Tr}  A$ to denote the normalised trace of $A$. Moreover, for vectors ${\bm x}, {\bm y}\in\C^N$ and matrices  $A\in\C^{N\times N}$ we define the scalar product
\[
\langle {\bm x},{\bm y}\rangle:= \sum_{i=1}^N \overline{x}_i y_i\,. 
\]

Finally, we will use the concept of ``with very high probability'' \emph{(w.v.h.p.)} meaning that for any fixed $D>0$ the probability of an $N$-dependent event is bigger than $1-N^{-D}$ for $N\ge N_0(D)$. We introduce the notion of \emph{stochastic domination} (see e.g.~\cite{semicirclegeneral}): given two families of non-negative random variables
\[
X=\left(X^{(N)}(u) : N\in\N, u\in U^{(N)} \right) \quad \mathrm{and}\quad Y=\left(Y^{(N)}(u) : N\in\N, u\in U^{(N)} \right)
\] 
indexed by $N$ (and possibly some parameter $u$  in some parameter space $U^{(N)}$), 
we say that $X$ is stochastically dominated by $Y$, if for all $\xi, D>0$ we have 
\begin{equation}
\label{stochdom}
\sup_{u\in U^{(N)}} \mathbf{P}\left[X^{(N)}(u)>N^\xi  Y^{(N)}(u)\right]\le N^{-D}
\end{equation}
for large enough $N\ge N_0(\xi,D)$. In this case we use the notation $X\prec Y$ or $X= \mathcal{O}_\prec(|Y|)$.
We also use the convention that $\xi>0$ denotes an arbitrary small exponent which is independent of $N$.

\subsection*{Acknowledgement} G.C. and L.E.  gratefully acknowledge many discussions with Dominik Schr\"oder 
at the preliminary stage of this project, especially his  essential contribution to identify the correct
generalisation of traceless observables to the deformed Wigner ensembles.

\section{Main results}
We consider $N\times N$ real symmetric or complex Hermitian Wigner matrices $W = W^*$
having single-entry distributions $w_{ab}=N^{-1/2}\chi_{\mathrm{od}}$, for $a>b$, and $w_{aa}=N^{-1/2}\chi_d$, where $\chi_{\mathrm{od}}$ and $\chi_{\mathrm{od}}$ are two independent random variables satisfying the following assumptions:
\begin{assumption}
\label{ass:momass}
We assume that $\chi_{\mathrm{d}}$ is a real centred random variable, that $\chi_{\mathrm{od}}$ is a real or 
complex random variable such that $\E \chi_{\mathrm{od}}=0$ and $\E |\chi_{\mathrm{od}}|^2=1$; additionally 
in the complex case we also assume that $\E\chi_{\mathrm{od}}^2=0$.  Customarily, we use the parameter $\beta=1,2$ 
to indicate the real or complex case, respectively.
Furthermore, we assume that all the moments of $\chi_{\mathrm{od}}$ and $\chi_{\mathrm{d}}$ exist, i.e.~for any $p\in \N$ there exists a constant $C_p >0$ such that
\begin{equation}
\label{eq:momass}
\E|\chi_{\mathrm{od}}|^p+\E |\chi_{\mathrm{d}}|^p\le C_p.
\end{equation}
\end{assumption}
For definiteness, in the sequel we perform the entire analysis for the complex case; the real case being completely analogous and hence omitted. 

The equipartition principle concerns linear combinations of Wigner matrices.
Fix $k \in \N$ and consider
\begin{equation}
	\label{eq:defH}
	H:=p_1W_1+\dots+p_kW_k,
\end{equation}
for some fixed $N$-independent vector $\bm{p} = (p_1, ... , p_k) \in \R^k$ of weights and for $k$ independent $N\times N$ Wigner matrices $W_l$, belonging to the \emph{same} symmetry class (i.e.~the off-diagonal random variables $\chi_{\mathrm{od}}$ are either real or complex for each of the $W_l$, $l \in [k]$). Then, denoting by $\{\lambda_i\}_{i \in [N]}$ the eigenvalues of $H$, arranged in increasing order, with associated normalised eigenvectors $\{{\bm u}_i\}_{i \in [N]}$, 
the total energy $\langle \bm{u}_i, H \bm{u}_i \rangle = \lambda_i$ of the composite system \eqref{eq:defH} is \emph{proportionally}  distributed among the $k$ constituents, i.e. 
\begin{equation}
	\label{eq:equipartion}
	\langle {\bm u}_i, p_lW_l \, {\bm u}_i\rangle\approx \frac{p_l^2 }{\Vert \bm{p}\Vert^2} \lambda_i
\end{equation}
for every $l \in [k]$, where $\Vert \bm{p}\Vert := \big(\sum_{l=1}^k |p_l|^2\big)^{1/2}$ denotes the usual $\ell^2$-norm. This phenomenon, known as \emph{equipartition}, was first proven in \cite[Theorem~3.4]{Equipartition},  with an optimal error estimate:
\begin{equation} \label{eq:equipartitionbound}
	\left|\langle {\bm u}_i, \, p_lW_l {\bm u}_j\rangle-\delta_{i,j}\frac{p_l^2 }{\Vert \bm{p}\Vert^2} \lambda_i\right|\prec \frac{1}{\sqrt{N}}\,. 
\end{equation}
Our main result is the corresponding Central Limit Theorem to \eqref{eq:equipartitionbound} for $i = j$,
 i.e.~the proof of Gaussian fluctuations in Equipartition for Wigner matrices -- for energies in the \emph{bulk} of the spectrum of $H$.
 
\begin{theorem}[Gaussian Fluctuations in Equipartition]  \label{thm:main} ~\\  Fix $k\in \N$. Let $W_1,\dots, W_k$ be 
independent Wigner matrices satisfying Assumption \ref{ass:momass}, all of which being in the same real ($\beta=1$)
 or complex ($\beta=2$) symmetry class, and $\bm{p} = (p_1,\dots,p_k)\in \R^k$ be $N$-independent.
  Define $H$ as in \eqref{eq:defH} and denote by $\{\lambda_i\}_{i \in [N]}$ the eigenvalues of $H$, arranged
   in increasing order, with associated normalised eigenvectors $\{{\bm u}_i\}_{i \in [N]}$. Then, for
    fixed $\kappa > 0$, every  $l \in [k]$ and for every bulk  index $i \in [\kappa N, (1-\kappa)N]$  it holds that 
	\begin{equation} \label{eq:mainthm}
	\sqrt{\frac{\beta N }{2} \, \frac{\Vert \bm{p}\Vert^2}{p_l^2 \, \big(\Vert \bm{p}\Vert^2 - p_l^2\big)  }}  \; \left[\langle {\bm u}_i, \, p_l W_l {\bm u}_i\rangle-\frac{p_l^2 }{\Vert \bm{p}\Vert^2} \lambda_i\right]\Longrightarrow \mathcal{N}(0,1)
	\end{equation}
in the sense of moments,\footnote{Given a sequence of $N$-dependent random variables, we say that $X_N$ converges to $X_\infty$
in the sense of moments if for any $k\in \N$ it holds $\E |X_N|^k=\E|X_\infty|^k+\mathcal{O}(N^{-c(k)})$, for some small possibly $k$--dependent constant $c(k)>0$.\label{moments}} where $\mathcal{N}(0,1)$ denotes a real standard Gaussian. 
\end{theorem}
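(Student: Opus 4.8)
The plan is to reduce the equipartition CLT to the fluctuation result for quadratic forms of bulk eigenvectors of a single deformed Wigner matrix, which the excerpt promises to establish. Concretely, fix $l\in[k]$ and condition on the collection $\{W_m : m\neq l\}$. On the probability space of $W_l$ alone, write $H = p_l W_l + D$ with the \emph{deterministic} (conditionally) deformation $D := \sum_{m\neq l} p_m W_m$. Rescaling, $\widetilde W := W_l$ is a standard Wigner matrix and $H = p_l \widetilde W + D$, so dividing by $p_l$ we are in the setting $H/p_l = \widetilde W + D/p_l$ of a Wigner matrix with bounded deterministic deformation (bounded w.v.h.p.\ since $\|D\|\lesssim 1$ by the standard norm bound on Wigner matrices). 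The quantity of interest is $\langle {\bm u}_i, p_l W_l {\bm u}_i\rangle = \langle {\bm u}_i, (H-D){\bm u}_i\rangle = \lambda_i - \langle {\bm u}_i, D{\bm u}_i\rangle$, so the problem becomes a CLT for $\langle {\bm u}_i, D {\bm u}_i\rangle$, i.e.\ a quadratic form of a bulk eigenvector of a deformed Wigner matrix against the deterministic matrix $A = D$ — exactly the object \eqref{ethuj} and its Gaussian fluctuation apply to.

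Next I would identify the deterministic leading term. By \eqref{ethuj}, conditionally on $\{W_m\}_{m\neq l}$,
\begin{equation*}
\langle {\bm u}_i, D {\bm u}_i\rangle \approx \frac{\langle \Im M_D(\lambda_i) D\rangle}{\langle \Im M_D(\lambda_i)\rangle},
\end{equation*}
where $M_D$ solves the MDE \eqref{mde} with deformation $D$ (at the rescaled energy, then rescaled back). The point is that $D = \sum_{m\neq l}p_m W_m$ is itself small in the relevant sense: $\langle W_m\rangle \prec N^{-1/2}$ and, more importantly, the \emph{semicircular-type} self-consistent analysis shows $M_D$ is, to leading order, the scalar semicircular $m_{\mathrm{sc}}$ times the identity with corrections controlled by powers of $\|D\|/\sqrt{N}$-type quantities — effectively one expands $M_D$ around $M_0 = m_{\mathrm{sc}}\cdot I$. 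Carrying this expansion to the needed order should produce
\begin{equation*}
\frac{\langle \Im M_D(\lambda_i) D\rangle}{\langle \Im M_D(\lambda_i)\rangle} \approx \langle D\rangle + (\text{correction}) \approx \frac{\Vert\bm p\Vert^2 - p_l^2}{\Vert\bm p\Vert^2}\,\lambda_i + \text{lower order},
\end{equation*}
matching $\lambda_i - \tfrac{p_l^2}{\Vert\bm p\Vert^2}\lambda_i$; this is the conditional expectation, and one must show the residual from using \eqref{ethuj} (size $\prec N^{-1/2}$) does not contribute at the CLT scale, which requires quantitative control (an $N^{-1/2-c}$ error), presumably available from the optimal multi-resolvent local laws the paper proves.

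Then I would combine the two sources of randomness. Write
\begin{equation*}
\sqrt{N}\Big[\langle {\bm u}_i, p_l W_l {\bm u}_i\rangle - \tfrac{p_l^2}{\Vert\bm p\Vert^2}\lambda_i\Big] = \underbrace{\sqrt{N}\Big[\langle {\bm u}_i, D{\bm u}_i\rangle - \tfrac{\langle\Im M_D(\lambda_i)D\rangle}{\langle\Im M_D(\lambda_i)\rangle}\Big]}_{=:X_1} + \underbrace{\sqrt{N}\Big[\tfrac{\Vert\bm p\Vert^2 - p_l^2}{\Vert\bm p\Vert^2}\lambda_i - (\tfrac{\Vert\bm p\Vert^2-p_l^2}{\Vert\bm p\Vert^2})\lambda_i^{(0)} + \dots\Big]}_{=:X_2},
\end{equation*}
roughly speaking, where $X_1$ is, conditionally on $\{W_m\}_{m\neq l}$, asymptotically Gaussian by the eigenvector-overlap CLT for deformed Wigner matrices with some conditional variance $v_1$ (computed from $M_D$, hence asymptotically a constant independent of the $W_m$'s by the expansion above), and $X_2$ collects the fluctuation of the $D$-dependent deterministic term — which is a smooth (essentially linear in leading order) functional of $\{W_m\}_{m\neq l}$ and hence asymptotically Gaussian with variance $v_2$ by either a direct cumulant/CLT computation or by iterating the single-matrix analysis. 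Since $X_1$ is (asymptotically) conditionally Gaussian with deterministic variance and $X_2$ is measurable with respect to the conditioning, the pair is jointly Gaussian and the total variance is $v_1 + v_2$; I would then verify by an explicit (but routine) variance computation — tracking the $\beta$-dependence exactly as in \cite{normalfluc, A2} and using $E|\chi_{\mathrm{od}}|^2 = 1$, $E\chi_{\mathrm{od}}^2 = 0$ — that $v_1 + v_2 = \tfrac{2}{\beta}\,\tfrac{p_l^2(\Vert\bm p\Vert^2 - p_l^2)}{\Vert\bm p\Vert^2}$, giving \eqref{eq:mainthm}. Convergence in the sense of moments follows since all inputs (local laws, DBM estimates) deliver moment bounds, so one propagates moment convergence rather than weak convergence throughout.

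The main obstacle I anticipate is twofold and both parts concern \emph{precision} rather than soft structure. First, one needs the deformed-Wigner ETH \eqref{ethuj} and its fluctuation with an error strictly smaller than the CLT scale $N^{-1/2}$ — i.e.\ genuinely $N^{-1/2-c}$ — uniformly for the random deformation $D$ ranging over the (w.v.h.p.) bounded set it occupies; controlling the dependence of all constants on $\|D\|$ and on the local behavior of $M_D$ near the bulk energy $\lambda_i$ is delicate, and this is presumably why the paper invests in master inequalities for multi-resolvent local laws with the energy-dependent regularization $\mathring A$. Second, the variance bookkeeping: the term $X_2$ couples the randomness of the $k-1$ matrices $W_m$ through the nonlinear functional $\langle\Im M_D D\rangle/\langle\Im M_D\rangle$, and one must expand $M_D$ to high enough order that \emph{all} $O(N^{-1/2})$ contributions are captured — missing a second-order term in this expansion would corrupt $v_2$. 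Establishing the exact cancellations that make $v_1+v_2$ collapse to the clean formula above, with the correct $2/\beta$ prefactor, is where the real computational work lies; the conditioning trick and the joint-Gaussianity argument are, by contrast, structurally straightforward once the two CLTs and their error rates are in hand.
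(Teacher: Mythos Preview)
Your high-level strategy --- condition on the ``other'' Wigner matrices, treat their sum $D$ as a deterministic deformation, and apply the deformed-Wigner ETH/fluctuation result to the observable $A=D$ --- is exactly the paper's approach (after the trivial reduction to $k=2$ via $\sum_{m\neq l} p_m W_m \stackrel{d}{=} (\|\bm p\|^2-p_l^2)^{1/2}\widetilde W$). The paper also uses $\langle \bm u_i, p_lW_l\bm u_i\rangle = \lambda_i - \langle \bm u_i, D\bm u_i\rangle$ implicitly, so your reduction to a quadratic form with deterministic (conditionally) matrix is the right move.

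However, there is a genuine misconception in your analysis step that would derail the computation. You write that ``$M_D$ is, to leading order, the scalar semicircular $m_{\rm sc}$ times the identity'' and that $\langle \Im M_D\, D\rangle/\langle \Im M_D\rangle \approx \langle D\rangle + (\text{correction})$. This is wrong: $D$ is a \emph{macroscopic} deformation with $\|D\|\sim 2(\|\bm p\|^2-p_l^2)^{1/2}$, not a small perturbation, so $M_D$ is genuinely non-scalar and $\langle \Im M_D\,D\rangle/\langle \Im M_D\rangle$ is an order-one quantity while $\langle D\rangle\prec N^{-1/2}$. What you call the ``correction'' is in fact the entire main term. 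The correct route, which the paper carries out in Section~3, is to solve the MDE explicitly: writing $D=p_2W_2$ (in the $k=2$ reduction), the MDE gives $M_D(z) = p_2^{-1}(W_2-w)^{-1}$ for an effective spectral parameter $w=w(z)$ with $\Im w\gtrsim 1$ in the bulk, and then one applies the \emph{single-resolvent and two-resolvent local laws for the Wigner matrix $W_2$} at this well-separated $w$. This yields, after a short semicircle computation, that the conditional centering equals $\tfrac{\|\bm p\|^2-p_l^2}{\|\bm p\|^2}\gamma_i + O_\prec(N^{-1})$ and the conditional variance equals $\tfrac{2}{\beta}\tfrac{p_l^2(\|\bm p\|^2-p_l^2)}{\|\bm p\|^2} + O_\prec(N^{-\epsilon})$, both \emph{deterministic} to the required accuracy.

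This also means your decomposition into $X_1+X_2$ is unnecessary: the $D$-dependent centering fluctuates only at order $N^{-1}$, so $X_2 = \sqrt N \cdot O_\prec(N^{-1}) \to 0$, and the entire limiting variance is $v_1$ with $v_2=0$. You correctly anticipate that the conditional variance becomes asymptotically constant, but you should push that observation one step further: once \emph{both} the conditional mean and variance are (asymptotically) non-random, the conditional CLT immediately upgrades to an unconditional one with no second layer of fluctuations to track. The ``variance bookkeeping'' you flag as the main obstacle thus reduces to a direct evaluation of $\mathrm{Var}_{\gamma_i}(D)$ from \eqref{eq:VarAgammai} using the resolvent-of-$W_2$ representation of $M_D$, not an expansion around $m_{\rm sc}\cdot I$.
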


By polarisation we will also  obtain the following:
\begin{corollary} \label{cor:main}
Under the assumptions from Theorem \ref{thm:main}, the random vector $\bm{X} = (X_1, ... , X_k) \in \R^k$ with
\begin{equation}\label{Xl}
X_l:= \sqrt{\frac{\beta N}{2}} \; 
\left[\langle {\bm u}_i, \, p_l W_l {\bm u}_i\rangle-\frac{p_l^2 }{\Vert \bm{p}\Vert^2} \lambda_i\right]\,, \quad l \in [k]\,,
\end{equation}
is approximately (in the sense of moments) jointly Gaussian with covariance structure
\begin{equation*}
\mathrm{Cov}(X_l, X_m) = \frac{p_l^2 \big(\delta_{l,m} \Vert \bm{p}\Vert^2 - p_m^2\big)}{\Vert \bm{p}\Vert^2}\,. 
\end{equation*}
\end{corollary}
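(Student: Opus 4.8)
The plan is to reduce the joint Gaussianity of the vector $\bm X = (X_1,\dots,X_k)$ to the one-dimensional statement of Theorem \ref{thm:main} applied to suitable linear combinations. Since convergence "in the sense of moments'' is used throughout, by the standard Cramér--Wold device (in its moment formulation) it suffices to show that for every fixed $\bm t = (t_1,\dots,t_k) \in \R^k$ the scalar quantity $\sum_{l=1}^k t_l X_l$ converges to a centred Gaussian whose variance matches $\bm t^\top \Sigma \bm t$, where $\Sigma_{l,m} = p_l^2\big(\delta_{l,m}\Vert \bm p\Vert^2 - p_m^2\big)/\Vert \bm p\Vert^2$; joint moment convergence then follows from the one-dimensional moment convergence for all such linear combinations, using that the moments of a Gaussian vector are determined by those of its one-dimensional projections. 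Writing out the combination, $\sum_l t_l X_l = \sqrt{\beta N/2}\,\big[\langle \bm u_i, (\sum_l t_l p_l W_l)\bm u_i\rangle - \big(\sum_l t_l p_l^2/\Vert\bm p\Vert^2\big)\lambda_i\big]$, so the object of interest is a single quadratic form $\langle \bm u_i, B \bm u_i\rangle$ with $B := \sum_l t_l p_l W_l$, together with a correction proportional to $\lambda_i = \langle \bm u_i, H\bm u_i\rangle$.

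The key algebraic step is a \emph{polarisation/reparametrisation} trick: the single Wigner matrix $W_l$ in Theorem \ref{thm:main} can be replaced by any fixed linear combination by regrouping the constituents. Concretely, fix $l$ and $m$ with $l\neq m$; to extract the covariance $\mathrm{Cov}(X_l, X_m)$ we consider the two new Wigner matrices $W_l \pm W_m$ (suitably normalised) together with the remaining $W_j$, $j \notin \{l,m\}$, as a new family in the same symmetry class, and apply the already-proven CLT to the $W_l+W_m$-constituent of this regrouped Hamiltonian (note $H$ itself is unchanged as a matrix, only the decomposition into constituents changes). The variance formula \eqref{eq:mainthm} for this regrouped problem, combined with the one for the original decomposition and the trivial identity $\sum_l \langle \bm u_i, p_l W_l \bm u_i\rangle = \lambda_i$ (which forces $\sum_l X_l$ to be deterministic, hence the row-sums of $\Sigma$ vanish: $\sum_m \Sigma_{l,m} = p_l^2(\Vert\bm p\Vert^2 - \Vert\bm p\Vert^2)/\Vert\bm p\Vert^2 = 0$), pins down each pairwise covariance by the usual polarisation identity $2\,\mathrm{Cov}(X_l,X_m) = \Var(X_l + X_m) - \Var(X_l) - \Var(X_m)$. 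One checks directly that the variance of $\sqrt{\beta N/2}\,[\langle \bm u_i, p_l W_l \bm u_i\rangle - p_l^2\lambda_i/\Vert\bm p\Vert^2]$ read off from Theorem \ref{thm:main} is exactly $p_l^2(\Vert\bm p\Vert^2 - p_l^2)/\Vert\bm p\Vert^2 = \Sigma_{l,l}$, and that feeding $W_l + W_m$ into the same formula yields $\Var(X_l + X_m) = (p_l^2+p_m^2)(\Vert\bm p\Vert^2 - p_l^2 - p_m^2)/\Vert\bm p\Vert^2$; subtracting gives $\mathrm{Cov}(X_l,X_m) = -p_l^2 p_m^2/\Vert\bm p\Vert^2$, as claimed.

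I expect the main (minor) obstacle to be bookkeeping rather than anything deep: one must be careful that the regrouped family $(W_l+W_m, \dots)$ still consists of \emph{independent} Wigner matrices in the \emph{same} symmetry class with the correct normalisation of entries — $W_l+W_m$ has off-diagonal variance $2/N$, so it equals $\sqrt 2\, \widetilde W$ for a standard Wigner matrix $\widetilde W$, and the weight vector must be rescaled accordingly so that the new Hamiltonian still equals $H$; the bulk-index hypothesis and the energy-independence of the limiting variance make this rescaling harmless. The only genuine input beyond Theorem \ref{thm:main} is that the CLT holds \emph{jointly} for the regrouped and original decompositions along the \emph{same} eigenvector $\bm u_i$ — but since each such statement is really a statement about the single pair $(H, \bm u_i)$ and the specific deterministic-up-to-regrouping observable $B$, this is automatic: we never need a genuinely multivariate CLT, only the one-dimensional Theorem \ref{thm:main} applied to the one-parameter family of observables $\sum_l t_l p_l W_l$, whose limiting variance as a function of $\bm t$ is the quadratic form $\bm t^\top \Sigma \bm t$. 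Matching this quadratic form against the claimed $\Sigma$ for all $\bm t$ completes the proof.
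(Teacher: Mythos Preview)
There is a genuine gap in the Cramér--Wold step. To conclude joint Gaussianity you need $\sum_l t_lX_l$ to be asymptotically Gaussian for \emph{every} $\bm t\in\R^k$, and you propose to deduce this from Theorem~\ref{thm:main} by regrouping the constituents of $H$ so that $V_1\propto\sum_l t_lp_lW_l$ is one of them. But Theorem~\ref{thm:main} requires \emph{independent} Wigner constituents, and for non-Gaussian entry distributions two linear combinations $\sum_l\alpha_lW_l$ and $\sum_l\beta_lW_l$ of independent Wigner matrices are independent only if $\alpha_l\beta_l=0$ for all $l$ (Darmois--Skitovich). Thus a valid regrouping with $V_1$ as one constituent and the remainder of $H$ independent of it exists only when $t_l$ is constant on its support; in particular your ``$W_l\pm W_m$'' decomposition already fails independence, though your formula for $\Var(X_l+X_m)$ happens to be correct via the legitimate regrouping that merges $p_lW_l+p_mW_m$ into a single constituent while keeping the remaining $W_j$ untouched. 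In the end, Theorem~\ref{thm:main} alone only delivers Gaussianity of the subset sums $\sum_{l\in S}X_l$, which for $k\ge3$ does not imply joint Gaussianity.

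The paper avoids this obstruction by not reducing to Theorem~\ref{thm:main} but going back to Theorem~\ref{theo:newCLT}, the CLT for a deformed Wigner matrix tested against a \emph{deterministic} observable: conditioning on $W_2,\dots,W_k$ turns each $p_lW_l$ ($l\ge2$) into a deterministic matrix in the $W_1$-probability space, so $A(\bm t)=\sum_{l\ge2}t_l\,p_lW_l$ is a legitimate deterministic observable for every $\bm t$ and the polarisation argument goes through unobstructed. This yields joint Gaussianity of $(X_2,\dots,X_k)$, and the deterministic identity $\sum_lX_l=0$ (which you correctly note) recovers $X_1$.
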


\begin{remark}
\label{rem:off}

We stated Theorem~\ref{thm:main} only for diagonal overlaps for simplicity. However, one can see that following the proof in \cite[Section 3]{CipolloniBenigni2022} it is possible to obtain an analogous Central Limit Theorem (CLT) for off--diagonal overlaps as well:
\begin{equation} \label{eq:mainthm}
	\sqrt{\frac{\Vert \bm{p}\Vert^2\beta N}{p_l^2 \, \big(\Vert \bm{p}\Vert^2 - p_l^2\big)  }}  \; \big|\langle {\bm u}_i, \, p_l W_l {\bm u}_j\rangle \big|\Longrightarrow \big|\mathcal{N}(0,1)\big|.
	\end{equation}
This also gives an analogous version of \eqref{cor:main} for off-diagonal overlaps. Furthermore, again following \cite[Theorem 2.2]{CipolloniBenigni2022}, it is also possible to derive a multivariate CLT jointly for diagonal and off--diagonal overlaps. See also Remark~\ref{rem:off2} below for further explanation.
\end{remark}

Theorem \ref{thm:main} 
and Corollary~\ref{cor:main} will follow as a corollary to the Eigenstate Thermalisation Hypothesis (ETH) and its Gaussian fluctuations for \emph{deformed} Wigner matrices, which we present as Theorem \ref{theo:ETH} and Theorem \ref{theo:newCLT} in the following subsection. 
\begin{remark}
	By a quick inspection of our proof of Theorem \ref{thm:main}, given in Section \ref{sec:proofmain}, it is possible to 
	generalise the Equipartition principle \eqref{eq:equipartitionbound} as well as its Gaussian fluctuations \eqref{eq:mainthm} to 
	linear combinations of 
	\emph{deformed Wigner matrices}, i.e. each~$W_l$ in \eqref{eq:defH} being replaced by 
	$W_l + D_l$, where $D_l = D_l^*$ is an 
	essentially   arbitrary bounded deterministic matrix
	(see Assumption \ref{ass:Mbdd} later). However, for brevity of the current paper, 
	we refrain from presenting this extension  explicitly. 
\end{remark}

\subsection{ETH and its fluctuations for deformed Wigner matrices} In this section, we consider 
\emph{deformed Wigner matrices}, $H = W + D$, with increasingly ordered eigenvalues $\lambda_1\le\lambda_2\le \dots\le \lambda_N$ 
and corresponding orthonormal eigenvectors ${\bm u}_1,\dots, {\bm u}_N$. Here, $D =D^* \in \C^{N \times N}$ 
is a self-adjoint matrix with uniformly bounded norm, i.e.~$\lVert D\rVert\le C_D$ for some $N$-independent
 constant $C_D>0$. While the  \emph{Eigenstate Thermalisation Hypothesis (ETH)} will be shown to hold for 
 general deformations $D$, we shall require slightly stronger assumptions for proving the Gaussian
  fluctuations (see Assumption \ref{ass:Mbdd} below).

In order to state our results on the ETH and its fluctuations (Theorems \ref{theo:ETH} and \ref{theo:newCLT}, respectively),
 we need to introduce the concept of \emph{regular observables}, first in 
a simple form in Definition~\ref{eq:defreg1} (later  along the proofs we will need a more general version in Definition~\ref{def:regobs}).
 For this purpose we introduce $M(z)$ being the unique solution of the Matrix Dyson Equation (MDE):\footnote{The MDE for very general mean field random matrices has been introduced in~\cite{firstcorr} and further analysed in~\cite{1804.07752}.
The properties we use here have been summarised in Appendix B of the \href{https://arxiv.org/abs/2301.03549}{arXiv: 2301.03549} 
version of~\cite{iid}.}
\begin{equation}
\label{eq:MDE}
-\frac{1}{M(z)}=z-D+\langle M(z)\rangle, \qquad\quad \Im M(z)\Im z>0.
\end{equation}
The \emph{self consistent density of states (scDos)} is then defined as
\begin{equation}
\label{eq:scdos}
\rho(e):=\frac{1}{\pi}\lim_{\eta\downarrow 0}\langle \Im M(e+\ii\eta)\rangle\,. 
\end{equation}
We point out that not only $\langle \Im M(e+\ii\eta)\rangle$ has an extension to the real axis, but the whole matrix $M(e) := \lim_{\eta\downarrow 0} M(e + \ii \eta)$ is well defined (see Lemma B.1~(b) of the \href{https://arxiv.org/abs/2301.03549}{arXiv: 2301.03549} version of~\cite{iid}).
The scDos $\rho$ is a compactly supported Hölder-$1/3$ continuous function on $\R$ which is real-analytic on the 
set $\{ \rho > 0\}$\footnote{The scDos has been thoroughly analysed in increasing generality in~\cite{1506.05095,firstcorr,1804.07752}.
It is supported on finitely many finite intervals.
Roughly speaking there are  three regimes: the \emph{bulk}, where $\rho$ is well separated away from 0, the
\emph{edge} where $\rho$ vanishes as a square root at the edges of each supporting interval that are well separated, and the \emph{cusp} where two supporting intervals (almost) meet
and $\rho$ behaves (almost) as a cubic root. Correspondingly, $\rho$ is locally real analytic, H\"older-$1/2$, or H\"older-$1/3$ continuous,
respectively. Near the singularities, it has  an approximately universal shape. No other singularity type can occur
and for typical deformation $D$ there is no cusp regime.\label{cusp}}. 
Moreover, for any small $\kappa > 0$ (independent of $N$) we define the \emph{$\kappa$-bulk} of the scDos as 
\begin{equation} \label{eq:bulk}
\mathbf{B}_\kappa = \left\{ x \in \R \; : \; \rho(x) \ge \kappa^{1/3} \right\}\,,
\end{equation}
which is a finite union of disjoint compact intervals, see Lemma B.2 in
the \href{https://arxiv.org/abs/2301.03549}{arXiv: 2301.03549} version of~\cite{iid}. 
For $\Re z \in \mathbf{B}_\kappa$ it holds that $\Vert M(z) \Vert \lesssim 1$, as easily follows by taking the imaginary part of \eqref{eq:MDE}.

\begin{definition}[Regular observables -- One-point regularisation] \label{def:regobs1}
Fix $\kappa > 0$ and an energy $e\in \mathbf{B}_\kappa$ in the bulk. Given a 
 matrix $A\in \C^{N\times N}$, we define its \emph{one-point regularisation w.r.t.~the energy $e$}, denoted by $\mathring{A}^e$, as
\begin{equation}
\label{eq:defreg1}
\mathring{A}=\mathring{A}^e:=A-\frac{\langle \Im M(e) A\rangle}{\langle \Im M(e)\rangle}\,. 
\end{equation}
Moreover, we call $A$ \emph{regular w.r.t.~the energy $e$}, if and only if $A = \mathring{A}^e$. 
\end{definition}
Notice that in the analysis of Wigner matrices  without deformation, $D=0$, in~\cite{ETHpaper, thermalization, multiG, A2},
 $M$ was a constant matrix
and the regular observables were simply given by \emph{traceless} matrices, i.e. $\mathring{A}= A-\langle A\rangle$.
For deformed Wigner matrices  the concept of regular observables depends
 on the  energy. 

Next, we define the \emph{quantiles} $\gamma_i$ of the density $\rho$ implicitly by
\begin{equation}
\label{eq:quant}
\int_{-\infty}^{\gamma_i} \rho(x)\,\dif x=\frac{i}{N}, \qquad i\in[N].
\end{equation}
We can now formulate the ETH in the bulk for deformed Wigner matrices
which generalises the same result for Wigner matrices, $D=0$, from~\cite{ETHpaper}.

\begin{theorem}[Eigenstate Thermalisation Hypothesis] \label{theo:ETH} 
Let $\kappa > 0$ be an $N$-independent constant and fix a bounded deterministic $D =D^* \in \C^{N \times N}$. 
Let $H = W + D$ be a deformed Wigner matrix, where $W$ satisfies Assumption~\ref{ass:momass}, and denote the orthonormal eigenvectors of $H$ by $\{ \bm{u}_i\}_{i \in [N]}$. Then, for any deterministic $A\in \C^{N\times N}$ with $\Vert A \Vert \lesssim 1$, it holds that 
\begin{equation}
\label{eq:ETH}
\max_{i, j}\left|\langle \bm{u}_i,\mathring{A}^{\gamma_i}\bm{u}_j\rangle \right| = \max_{i, j}\left|\langle \bm{u}_i,A\bm{u}_j\rangle-\delta_{ij}\frac{\langle A\Im M(\gamma_i)\rangle}{\langle \Im M(\gamma_i)\rangle} \right| \prec \frac{1}{\sqrt{N}}\,,
\end{equation}
where the maximum is taken over all $i,j \in [N]$ such that the quantiles $\gamma_i, \gamma_j \in \mathbf{B}_\kappa$ defined in \eqref{eq:quant} are in the $\kappa$-bulk of the scDos $\rho$.
\end{theorem}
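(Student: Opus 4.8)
The plan is to deduce Theorem~\ref{theo:ETH} from a \emph{multi-resolvent local law} for $H=W+D$ with regular observables, combined with the spectral theorem and eigenvalue rigidity in the bulk, following the scheme developed for Wigner matrices in~\cite{ETHpaper, multiG, A2} and for i.i.d.\ matrices in~\cite{iid}. Write $G(z)=(H-z)^{-1}$. The key estimate to establish is that, for $e\in\mathbf{B}_\kappa$, for $N^{-1}\ll\eta\lesssim 1$, and for any deterministic $A$ with $\|A\|\lesssim 1$ (which, splitting into Hermitian and anti-Hermitian parts, we may take self-adjoint),
\[
\big\langle\,\Im G(e+\ii\eta)\,\mathring{A}^{e}\,\Im G(e+\ii\eta)\,\mathring{A}^{e}\,\big\rangle\prec 1\,,
\]
together with, more generally, a two-resolvent law of the form $\langle G(z_1)\,\mathring{A}_1\,G(z_2)\,\mathring{A}_2\rangle=\langle M(z_1)\,\mathring{A}_1\,M(z_2)\,\mathring{A}_2\rangle+\mathcal{O}_\prec\big((N\eta)^{-1}\eta^{-1/2}\big)$, where the $\mathring{A}_j$ are regularised with respect to \emph{both} spectral parameters $z_1,z_2$ (the appropriate two-point generalisation of Definition~\ref{def:regobs1}). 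I would prove these by a bootstrap in $\eta$: starting from the single-resolvent averaged and isotropic local laws (cf.~\eqref{locl}), a cumulant expansion of the chains $\langle G\mathring{A}G\mathring{A}\rangle$ and $\langle G\mathring{A}G\mathring{A}G\rangle$ produces a closed system of \emph{master inequalities} for the normalised error terms. The structural fact that makes the system close with the claimed $\eta$-powers is that the linear stability operator $\mathcal{B}=I-M(e)\langle\cdot\rangle M(e)^{*}$ of~\eqref{eq:MDE} has an inverse that is bounded \emph{uniformly in $\eta$} on the subspace orthogonal to $\Im M(e)$ — which is precisely the subspace onto which \eqref{eq:defreg1} projects, since $\Im M(e)$ is the near-critical eigendirection of $\mathcal{B}$. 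Hence each regular observable in a resolvent chain gains a $\sqrt{\eta}$ factor over the naive size, so that $\langle\Im G\,\mathring{A}\,\Im G\,\mathring{A}\rangle$ is bounded (rather than of order $1/\eta$) down to the optimal scale $\eta\gg N^{-1}$.

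With this in hand, fix $i$ with $\gamma_i\in\mathbf{B}_\kappa$, set $e=\gamma_i$ and $z=e+\ii\eta$ with a small fixed $\eta=N^{-1+\xi}$. By the spectral theorem,
\[
\big\langle\,\Im G(z)\,\mathring{A}^{e}\,\Im G(z)\,\mathring{A}^{e}\,\big\rangle=\frac{1}{N}\sum_{k,l=1}^{N}\frac{\eta}{(\lambda_k-e)^2+\eta^2}\,\frac{\eta}{(\lambda_l-e)^2+\eta^2}\,\big|\langle\bm{u}_k,\mathring{A}^{e}\bm{u}_l\rangle\big|^2\prec 1\,.
\]
All summands are non-negative, so keeping only $k=l=i$ and using bulk rigidity $|\lambda_i-\gamma_i|\prec N^{-1}\ll\eta$ (whence $\eta/((\lambda_i-e)^2+\eta^2)\gtrsim\eta^{-1}$) yields $|\langle\bm{u}_i,\mathring{A}^{\gamma_i}\bm{u}_i\rangle|^2\prec N\eta^2=N^{-1+2\xi}$; since $\xi>0$ is arbitrary this is the diagonal case of \eqref{eq:ETH}. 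For the off-diagonal case note that for $k\ne l$ the scalar correction in \eqref{eq:defreg1} drops out against $\langle\bm{u}_k,\bm{u}_l\rangle=0$, so $\langle\bm{u}_k,\mathring{A}^{\gamma_k}\bm{u}_l\rangle=\langle\bm{u}_k,A\bm{u}_l\rangle=\langle\bm{u}_k,\mathring{A}^{e}\bm{u}_l\rangle$ for \emph{any} $e$; running the same argument with $e$ over an $\mathcal{O}(N^{1-\xi})$-point net of $\mathbf{B}_\kappa$ and taking a union bound controls $\max_{k\ne l}|\langle\bm{u}_k,A\bm{u}_l\rangle|$ over all bulk pairs. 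The two displayed forms in \eqref{eq:ETH} coincide by the definition \eqref{eq:defreg1}.

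The substance of the argument is the multi-resolvent local law. The delicate points are: (i) choosing the correct \emph{energy-dependent} regularisation for two- and three-resolvent chains and checking that, when a long chain is split during the cumulant expansion and its observables are re-regularised at nearby energies, the mismatch is negligible — this relies on the Hölder/Lipschitz regularity of $e\mapsto M(e)$ in the bulk; and (ii) the stability analysis showing that $\mathcal{B}^{-1}$ acts boundedly on regular observables uniformly as $\eta\downarrow 0$, with $\Im M(e)$ being exactly the removed direction. Once these are settled, closing the master inequalities is a lengthy but by now routine iteration, and since we only need non-optimal error exponents we can keep this part short.
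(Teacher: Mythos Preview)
Your strategy is exactly the paper's: prove a two-resolvent local law with regular observables via cumulant expansion and master inequalities (the paper's Proposition~\ref{pro:mresllaw}), then extract eigenvector overlaps by spectral decomposition and rigidity. Your identification of the stability operator and of $\Im M(e)$ as its critical direction is correct, and your list of delicate points~(i)--(ii) matches what the paper actually needs.

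There is, however, a real gap in your off-diagonal extraction. Your displayed spectral decomposition uses $\Im G(e+\ii\eta)$ \emph{twice with the same $e$}, so the $(i,j)$ term carries the weight
\[
\frac{\eta}{(\lambda_i-e)^2+\eta^2}\cdot\frac{\eta}{(\lambda_j-e)^2+\eta^2}\,,
\]
which is of order $\eta^{-2}$ only when \emph{both} $\lambda_i$ and $\lambda_j$ are within $\eta$ of $e$. Sweeping $e$ over a net therefore controls only pairs $(i,j)$ with $|\gamma_i-\gamma_j|\lesssim\eta=N^{-1+\xi}$. For bulk pairs at macroscopic distance $d:=|\gamma_i-\gamma_j|\sim 1$ your inequality degenerates to $|\langle\bm{u}_i,A\bm{u}_j\rangle|^2\prec N d^2$, which says nothing. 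The fix is precisely the two-parameter law you already announce: use
\[
N\big|\langle\bm{u}_i,\mathring{A}^{\gamma_i}\bm{u}_j\rangle\big|^2\prec (N\eta)^2\,\big\langle\,\Im G(\gamma_i+\ii\eta)\,\mathring{A}^{\gamma_i}\,\Im G(\gamma_j+2\ii\eta)\,(\mathring{A}^*)^{\gamma_i}\,\big\rangle\prec (N\eta)^2
\]
as in~\eqref{eq:overlap2G}. To invoke the local law here you must pass from the one-point $\mathring{A}^{\gamma_i}$ to the two-point regularisation $\mathring{A}^{z_1,z_2}$; this is handled by the Lipschitz property~\eqref{eq:lipprop} when $|\gamma_i-\gamma_j|\le\delta$, and by the cutoff $\mathbf{1}_\delta$ in Definition~\ref{def:regobs} (which makes \emph{every} bounded matrix regular) when $|\gamma_i-\gamma_j|>\delta$. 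Your observation that the scalar correction drops out for $i\neq j$ is correct at the level of the overlap, but it does not by itself license applying the local law to $\mathring{A}^{\gamma_i}$ at the pair $(z_1,z_2)$.
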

This "Law of Large Numbers"-type result \eqref{eq:ETH} is complemented by the corresponding Central Limit Theorem \eqref{eq:standgauss}, which requires slightly strengthened assumptions on the deformation $D$. 
\begin{assumption} \label{ass:Mbdd}
We assume that $D  \in \C^{N \times N}$ is a bounded self-adjoint deterministic matrix such that 
\begin{itemize}
	\item[(i)] the unique solution $M(z)$ to \eqref{eq:MDE} is uniformly bounded in norm, i.e.~$\sup_{z \in \C} \Vert M(z) \Vert \le C_M$ for some $N$-independent constant $C_M > 0$;
	\item[(ii)] the scDos $\rho$ is Hölder-$1/2$ regular, i.e.~it does not have any cusps (see Footnote~\ref{cusp}).
\end{itemize}
\end{assumption}
The requirements on $D$ in Assumption \ref{ass:Mbdd} are natural and they hold for typical applications,
see Remark~\ref{rmk:Mbdd} later for more details.
We can now formulate our result on the Gaussian fluctuations in the ETH
which generalises the analogous result for Wigner matrices, $D=0$ from~\cite{normalfluc}.
\begin{theorem}[Fluctuations in ETH]
\label{theo:newCLT}
Fix $\kappa,\sigma > 0$ $N$-independent constants  and let $H = W + D$ be a deformed Wigner matrix, where $W$ satisfies Assumption~\ref{ass:momass} and $D$ satisfies Assumption \ref{ass:Mbdd}. Denote the orthonormal eigenvectors of $H$ by $\{ \bm{u}_i\}_{i \in [N]}$ and fix an index $i \in [N]$, such that the quantile $\gamma_i \in \mathbf{B}_\kappa$ defined in \eqref{eq:quant} is in the bulk. 
Then, for any deterministic
Hermitian matrix $A\in\C^{N\times N}$ with $\Vert A \Vert \lesssim 1$
 (which we assume to be real in the case of a real Wigner matrix) satisfying $\langle (A-\langle{A}\rangle)^2\rangle \ge \sigma$,  it holds that
\begin{equation}
\label{eq:standgauss}
\sqrt{\frac{\beta N }{2 \,  \Var_{\gamma_i}(A)}}\left[\langle \bm{u}_i,A\bm{u}_i\rangle-\frac{\langle A\Im M(\gamma_i)\rangle}{\langle \Im M(\gamma_i)\rangle} \right]\Longrightarrow \mathcal{N}(0,1)
\end{equation}
in the sense of moments (see Footnote~\ref{moments}), where\footnote{See the first paragraph of Section~\ref{sec:proofETHfluct} for an explanation of why the variance takes this specific form.}
\begin{equation} \label{eq:VarAgammai}
	\Var_{\gamma_i}(A) := \frac{1}{\langle \Im M (\gamma_i)\rangle^2} \left( \big\langle \big(\mathring{A}^{\gamma_i} \Im M(\gamma_i)\big)^2\big\rangle - \frac{1}{2} \Re\left[ \frac{\big\langle \big(M(\gamma_i)\big)^2 \mathring{A}^{\gamma_i} \big\rangle^2}{1 - \big\langle \big(M(\gamma_i)\big)^2 \big\rangle} \right]  \right) \,. 
\end{equation}
This variance  is strictly positive with an effective lower bound
\begin{equation} \label{eq:Varlowerbound}
		\Var_{\gamma_i}(A)\ge c \, \langle (A-\langle{A}\rangle)^2\rangle
	\end{equation}
	for some constant $c = c(\kappa, \Vert D \Vert) > 0.$ \nc
\end{theorem}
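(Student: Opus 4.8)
\emph{Overview and Step 1 (local laws for regular observables).} I would deduce the central limit theorem \eqref{eq:standgauss} from a Dyson Brownian motion (DBM) analysis of eigenvector overlaps, in the spirit of \cite{normalfluc, A2} for the undeformed case, fed by multi-resolvent local laws for $H=W+D$ controlling \emph{regular} observables; throughout, $D$ --- hence $M$, the quantiles $\gamma_i$, and the notion of regularity --- is kept fixed and only the Wigner component is dynamical. The first task is to prove averaged and isotropic local laws for resolvent chains $\langle G(z_1)B_1\cdots G(z_k)B_k\rangle$ and $\langle\bm{x},G(z_1)B_1\cdots G(z_k)\bm{y}\rangle$ with regularised insertions $B_j$. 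The new feature relative to $D=0$ is that the correct regularisation of $B_j$ between $G(z_j)$ and $G(z_{j+1})$ is \emph{two-point}: a bounded linear map depending on the pair $(z_j,z_{j+1})$ (this is Definition~\ref{def:regobs} in the full proof) that reduces to $\mathring{B}_j^{\,e}$ of Definition~\ref{def:regobs1} when $z_j$ and $z_{j+1}$ collapse to a common real bulk energy $e$, and which is again the spectral projection off the near-critical eigendirection of the linear stability operator $\mathcal{B}=I-M\langle\cdot\rangle M^*$. With this identification I would set up a hierarchy of master inequalities for the normalised errors and close it by bootstrap, exactly as in \cite{multiG, A2, iid}; only the two-resolvent bounds $\langle\Im G(z)\mathring{A}\Im G(z)\mathring{A}\rangle\prec1$ and $\langle G(z)\mathring{A}G(z)^*\mathring{A}\rangle\prec1$ for $N^{-1}\ll\eta=|\Im z|\lesssim1$, together with their refinements at the fluctuation scale $\eta\sim N^{-1+\epsilon}$, are actually needed below, so I would not chase optimal error exponents. (As a ``law of large numbers'' corollary this step also gives Theorem~\ref{theo:ETH}; its conclusion $|\langle\bm{u}_i,\mathring{A}^{\gamma_i}\bm{u}_j\rangle|\prec N^{-1/2}$ is the a priori input for the DBM.)

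\emph{Step 2 (DBM for the overlaps).} Running the Ornstein--Uhlenbeck flow on the Wigner component keeps the variance profile, hence $M,\rho,\gamma_i$, fixed; set $H_t=W_t+D$ with $H_0=H$ and track the Bourgade--Yau perfect-matching observable built from $\langle\bm{u}_i(t),\mathring{A}^{\gamma_i}\bm{u}_j(t)\rangle$, the regularisation energy $\gamma_i$ being essentially frozen along the short flow by eigenvalue rigidity. Differentiating yields the Stochastic Eigenstate Equation, a short-range parabolic equation whose coefficients and error terms are controlled by Step~1 and Theorem~\ref{theo:ETH}; following \cite{normalfluc, A2} essentially verbatim (the deformation being harmless once these inputs are available), for $t=N^{-1+\epsilon}$ all relevant moments relax to those of a Gaussian, the limiting variance being the stationary value of the flow. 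That value equals, up to the symmetry-class factor $2/\beta$, the two-resolvent quantity $\langle\Im M(\gamma_i)\rangle^{-2}\langle(\mathring{A}^{\gamma_i}\Im M(\gamma_i))^2\rangle$ corrected by the contribution of the longitudinal (near-critical) direction of $\mathcal{B}$, which is exactly the $\tfrac12\Re\langle M(\gamma_i)^2\mathring{A}^{\gamma_i}\rangle^2/(1-\langle M(\gamma_i)^2\rangle)$ term in \eqref{eq:VarAgammai}; Assumption~\ref{ass:Mbdd}, through $\Vert M\Vert\lesssim1$ and the cusp-free hypothesis, keeps $\mathcal{B}$ uniformly invertible on the complementary subspace in the $\kappa$-bulk, so this computation is stable. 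Finally a Green-function comparison / moment-matching argument removes the Gaussian component inserted by the flow and yields \eqref{eq:standgauss} for the original $H$ in the sense of moments. Corollary~\ref{cor:main} and the off-diagonal version in Remark~\ref{rem:off} then follow by polarisation and by running the same scheme with $i\ne j$, as in \cite{CipolloniBenigni2022}.

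\emph{Step 3 (positivity of the variance).} For \eqref{eq:Varlowerbound}, observe that $\mathring{A}^{\gamma_i}$ and $A-\langle A\rangle$ differ by a multiple of the identity, which is Hilbert--Schmidt orthogonal to every traceless matrix, so $\langle(\mathring{A}^{\gamma_i})^2\rangle\ge\langle(A-\langle A\rangle)^2\rangle$; together with the bulk lower bound $\Im M(\gamma_i)\ge c(\kappa)I$ --- itself a consequence of the MDE identity $\Im M(e)=\pi\rho(e)\,M(e)M(e)^*$ and $\Vert M(e)^{-1}\Vert\lesssim1$ --- this gives $\langle\Im M(\gamma_i)\rangle^{-2}\langle(\mathring{A}^{\gamma_i}\Im M(\gamma_i))^2\rangle\gtrsim\langle(A-\langle A\rangle)^2\rangle$. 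It then suffices to show the subtracted term is at most $(1-\delta(\kappa))$ times this main term. For that I would use the further MDE identity $M'(e)=M(e)^2/(1-\langle M(e)^2\rangle)$ (from differentiating \eqref{eq:MDE}) to rewrite the subtracted term as $\tfrac12\Re\,\langle M'(\gamma_i)\mathring{A}^{\gamma_i}\rangle\,\langle M(\gamma_i)^2\mathring{A}^{\gamma_i}\rangle$, the bulk bounds $|1-\langle M(\gamma_i)^2\rangle|\ge c(\kappa)$ and $\Vert M(\gamma_i)\Vert\le C_M$ guaranteed by Assumption~\ref{ass:Mbdd}, and a Cauchy--Schwarz estimate in the $\Im M(\gamma_i)$-weighted Hilbert--Schmidt inner product, exploiting the cancellation in the real part; alternatively, one may read off strict positivity from the DBM representation of the variance as the time integral of a non-negative quantity whose initial portion is already $\gtrsim\langle(A-\langle A\rangle)^2\rangle$.

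\emph{Main obstacle.} The genuinely new and hardest part is Step~1 --- identifying the two-point regularisation so that every regular insertion still buys the full $\sqrt{\eta}$-improvement even when consecutive spectral parameters sit near different bulk energies, and propagating it through the master-inequality bootstrap. The strict inequality underlying Step~3 (quantifying, uniformly in the $\kappa$-bulk, that the deformation cannot make the correction saturate the main term) is a smaller but non-trivial point; the DBM part of Step~2 is a routine adaptation of \cite{normalfluc, A2}.
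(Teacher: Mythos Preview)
Your Steps~1 and~2 track the paper's proof closely: the multi-resolvent local laws with two-point regularisation (Section~\ref{sec:llaw} and Proposition~\ref{pro:mresllaw}) feed the DBM analysis of the perfect-matching observable (Section~\ref{sec:proofETHfluct}, Proposition~\ref{pro:flucque}), with a GFT at the end to remove the Gaussian component. The paper uses the additive Brownian flow $\dif W_t = \dif\widetilde{B}_t/\sqrt{N}$ rather than Ornstein--Uhlenbeck, so $M$, $\rho$, $\gamma_i$ drift slightly in $t$; this costs some extra bookkeeping (the paper tracks $\gamma_i(t)$ and approximates back to $\gamma_{i_0}$ at the cost of the $K/\sqrt{N}$ error in~\eqref{eq:basimpr}), but is inessential.

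Your Step~3, however, has a gap. Bounding the subtracted term in~\eqref{eq:VarAgammai} by $(1-\delta)$ times the main term via a Cauchy--Schwarz estimate ``exploiting the cancellation in the real part'' is not made precise, and it is not clear this works without further structure; the subtracted term need not even be non-negative, and when it is, a naive Cauchy--Schwarz does not obviously yield a strict gap uniformly in $A$. The paper instead uses a structural fact you do not invoke: from the MDE~\eqref{eq:MDE}, $M(\gamma_i)$ is \emph{normal}, i.e.\ $[\Re M(\gamma_i),\Im M(\gamma_i)]=0$. This commutation allows one to write $\Var_{\gamma_i}(A)$ directly as a \emph{sum of squares}, from which one then extracts the particular square
\[
\Var_{\gamma_i}(A)\ \ge\ \frac{1}{\langle(\Im M)^2\rangle}\,\Big\langle\Big(\sqrt{\Im M}\,\Big[A-\tfrac{\langle A(\Im M)^2\rangle}{\langle(\Im M)^2\rangle}\Big]\,\sqrt{\Im M}\Big)^2\Big\rangle,
\]
and the lower bound~\eqref{eq:Varlowerbound} follows from $\Im M\ge g>0$ in the bulk together with the variational principle $\langle(A-\langle A\rangle)^2\rangle=\inf_{t\in\R}\langle(A-t)^2\rangle$. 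Your alternative ``DBM representation of the variance as a time integral'' also does not directly give the effective constant $c(\kappa,\|D\|)$.
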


\begin{remark}
\label{rem:off2}


We stated Theorem~\ref{theo:newCLT} only for diagonal overlaps to keep the statement simple, but a corresponding CLT for off--diagonal overlaps as well as a multivariate CLT for any finite family of diagonal and off--diagonal overlaps can also be proven.

We decided not to give a detailed proof of these facts in the current paper in order to present the main new ideas in the analysis of deformed Wigner matrices in the simplest possible setting consisting of only diagonal overlaps. But we remark that following an analysis similar to \cite[Section 3]{CipolloniBenigni2022}, combined with the details presented in Section~\ref{sec:proofETHfluct}, would give an analogous result to \cite[Theorem 2.2]{CipolloniBenigni2022} also in the deformed Wigner matrices setup. However, this would require introducing several new notations that would obfuscate the main novelties in the analysis of deformed Wigner matrices compared to the Wigner case, which instead are clearer in the simpler setup of Section~\ref{sec:proofETHfluct}.

\end{remark}

In the following two remarks we comment on the
 condition $\langle (A-\langle{A}\rangle)^2\rangle \ge \sigma$
 and on Assumption \ref{ass:Mbdd}.
 
\begin{remark}\label{rmk:Variance}  
The restriction to matrices satisfying $\langle (A-\langle{A}\rangle)^2\rangle \ge \sigma$, 
i.e.~$A - \langle A \rangle$ being of  high rank, is technical. It is
 due to the fact that our multi-resolvent local laws
 for resolvent chains $\langle G(z_1) A_1 G(z_2) A_2 \ldots \rangle$
  in Proposition~\ref{pro:mresllaw} are non-optimal in terms of the norm 
 for the matrices $A_i$; they involve the Euclidean norm $\Vert A_i\Vert$ and 
 not the smaller Hilbert-Schmidt norm
 $\sqrt{\langle |A_i|^2\rangle}$ which would be optimal. For the Wigner ensemble, this subtlety is the main difference between 
the main result in \cite{normalfluc} for high rank observable matrices $A$ and its extension to any low rank  $A$ in \cite{A2}.
Following the technique in \cite{A2}  it would be possible to achieve the estimate with the 
 optimal norm of $A$ also for deformed Wigner matrices.
However, we refrain from doing so, since in our main application, Theorem \ref{thm:main}, 
$A$ itself will be a Wigner matrix which has high rank.
\end{remark}

\begin{remark}\label{rmk:Mbdd} We have several comments on Assumption \ref{ass:Mbdd}. 
	\begin{itemize}
		\item[(i)] 
		The boundedness of $\Vert M (z)\Vert$ is automatically fulfilled in the bulk $\mathbf{B}_\kappa$
		(see remark below~\eqref{eq:bulk})
		or when  $\Re z$ away from the support of the scDos $\rho$
		(see \cite[Proposition~3.5]{1804.07752})
		without any further condition.
		 However, the uniform (in $z$) estimate formulated in Assumption \ref{ass:Mbdd} does not hold for arbitrary $D$.
		  A sufficient condition for the boundedness of $\Vert M\Vert $ in terms of the spectrum of $D$ is given 
		  in \cite[Lemma 9.1 (i)]{1804.07752}. 
		  This especially applies if the eigenvalues $\{d_i\}_{i \in [N]}$ of $D$ (in increasing order) form a 
		  piecewise Hölder-$1/2$ regular sequence,\footnote{In this context, Hölder-$1/2$ regularity 
		  means that $|d_i -d_j|\le C_0 (|i-j|/N)^{1/2}$ for some universal constant $C_0 > 0$.} see~\cite[Lemma 9.3]{1804.07752}.
		   In particular, by eigenvalue rigidity \cite{firstcorr, slowcorr}, it is easy to see that any ``Wigner-like" 
		   matrix $D$ has Hölder-$1/2$ regular sequence of eigenvalues with very high probability.
		    This is important for the applicability of 
		    Theorem~\ref{theo:newCLT} below in the proof of our main result, Theorem \ref{thm:main}, 
		    given in Section \ref{sec:proofmain}. 
		\item[(ii)] The assumption that $\rho$ does not have any cusps is a typical condition and of  technical nature 
		(needed in the local law~\eqref{eq:singlegllaw} and in Lemma \ref{lem:2Gllfaraway}). 
		In case that the sequence of matrices $D=D_N$ has a 
		limiting density of states with single interval support, then also $\rho$, the scDos of $W+D$, 
		 has single interval support \cite{biane}, 
		in particular,  $\rho$  has no cusps \cite{1804.07752}. 
		Again, this is important for the applicability of Theorem \ref{theo:newCLT} in the proof of our main result, in which case $D$ is a Wigner matrix with a semicircle as the limiting density of states.
	\end{itemize}
\end{remark}

In the following Section \ref{sec:proofmain}, we will prove our main result, Theorem \ref{thm:main}, 
assuming Theorems~\ref{theo:ETH} and \ref{theo:newCLT} on deformed Wigner matrices
 as inputs. These will be proven in Sections \ref{sec:proofETH} and \ref{sec:proofETHfluct}, respectively. 
 Both proofs crucially rely on an averaged local law for two resolvents and two \emph{regular} observables, 
 Proposition~\ref{pro:mresllaw}, which we prove in Section~\ref{sec:llaw}. Several additional technical and auxiliary results are deferred to the Appendix.

\section{Fluctuations in Equipartition: Proof of Theorem \ref{thm:main}} \label{sec:proofmain}
It is sufficient to prove Theorem \ref{thm:main} only for $k=2$ with $p_1, p_2 \neq 0$, since
we can view the sum~\eqref{eq:defH}  as the sum of $p_1W_1$ and 
\begin{equation*}
\sum\limits_{l = 2}^k p_lW_l\stackrel{\mathrm{d}}{=}\left(\sum\limits_{ l=2}^kp_l^2\right)^{1/2}\widetilde{W}\,,
\end{equation*}
where $\widetilde{W}$ is a Wigner matrix independent of $W_1$
 and the equality is understood in distribution. 

As a main step, we shall prove the following lemma, where we condition on $W_2$.
\begin{lemma} \label{lem:main}Under the assumptions of Theorem \ref{thm:main} with $k=2$ it holds that
\begin{align}
    \E_{W_1} \langle \bw_i,p_2W_2\bw_i\rangle &= \frac{p_2^2}{\|\bm{p}\|^2}\gamma_i+\mathcal{O}_\prec\left(N^{-1/2-\epsilon}\right)\,, \label{eq:expectation}\\[1mm]
        \frac{\beta N}{2} \Var_{W_1}[\langle \bw_i,p_2W_2\bw_i\rangle ]  &= \frac{ p_1^2p_2^2}{\Vert \bm{p}\Vert^2}+\mathcal{O}_\prec\left(N^{-\epsilon}\right)\,, \label{eq:variance}
\end{align}
for any $\epsilon> 0$, where $\gamma_i$ is the $i^{\rm th}$ quantile of the semicircular density with 
radius $2 \Vert \bm{p}\Vert$, i.e.
$$
   \frac{1}{2\pi  \Vert \bm{p}\Vert^2} \int_{-\infty}^{\gamma_i} \sqrt{ [4  \Vert \bm{p}\Vert^2 -x]_+}{\rm d} x = \frac{i}{N}\,. 
$$
 Expectation and variance are taken in the probability space of $W_1$,  conditioned on $W_2$ being in an event
 of very high probability, 
while the stochastic domination in the error terms are understood in the probability space of $W_2$. 
\end{lemma}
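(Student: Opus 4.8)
\emph{Setup.} Since $W_l\stackrel{\mathrm{d}}{=}-W_l$ we may assume $p_1,p_2>0$. The plan is to condition on $W_2$ and apply Theorems~\ref{theo:ETH} and~\ref{theo:newCLT} to the deformed Wigner matrix $\widehat H:=H/p_1=W_1+\widetilde D$ with $\widetilde D:=(p_2/p_1)W_2$; then $W_1$ is a unit-strength Wigner matrix, $\widehat H$ has eigenvectors $\{{\bm u}_i\}$ and eigenvalues $\lambda_i/p_1$, and $\langle{\bm u}_i,p_2W_2{\bm u}_i\rangle=p_1\langle{\bm u}_i,\widetilde D{\bm u}_i\rangle$. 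First I would restrict $W_2$ to a very-high-probability event on which its eigenvalues are rigid and form a Hölder-$1/2$ regular sequence; by Remark~\ref{rmk:Mbdd} this forces $\widetilde D$ to satisfy Assumption~\ref{ass:Mbdd} with all constants uniform on the event (boundedness of $M^{\widetilde D}$ from the regularity of $\mathrm{spec}(\widetilde D)$; no cusps, since the free convolution of two single-interval measures has single-interval support), while $\lVert\widetilde D\rVert\lesssim1$ and $\langle(\widetilde D-\langle\widetilde D\rangle)^2\rangle=(p_2/p_1)^2+\mathcal{O}_\prec(N^{-1})\gtrsim1$, so both theorems apply conditionally on $W_2$ with $A=\widetilde D$; moreover, for $i$ a bulk index of $H$ the quantile $\widehat\gamma_i$ of the scDos of $\widehat H$ lies in the bulk. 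A soft but recurring point is that every conditionally deterministic object built from $\widetilde D$ below -- $\widehat\gamma_i$, and normalised traces of products of $M^{\widetilde D}(\widehat\gamma_i)$ and its adjoint -- is, via the subordination representation $M^{\widetilde D}(z)=(\widetilde D-\omega(z))^{-1}$, a linear statistic of $\mathrm{spec}(\widetilde D)$ at a spectral parameter lying at distance $\sim1$ from $\mathrm{spec}(W_2)$, so the local law for $W_2$ at macroscopic scale concentrates it to an explicit semicircular quantity with rate $\mathcal{O}_\prec(N^{-1})$.

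\emph{The mean.} By Theorem~\ref{theo:ETH}, $\langle{\bm u}_i,\widetilde D{\bm u}_i\rangle=\Theta_i+\mathcal{O}_\prec(N^{-1/2})$ with $\Theta_i:=\langle\widetilde D\,\Im M^{\widetilde D}(\widehat\gamma_i)\rangle/\langle\Im M^{\widetilde D}(\widehat\gamma_i)\rangle$. The algebraic core is to simplify $\Theta_i$ via the MDE~\eqref{eq:MDE}: at a real bulk energy $e$, from $\Im M=-M(\Im M^{-1})M^*$ and $\Im M^{-1}(e)=-\Im\langle M(e)\rangle=-\pi\rho(e)$ one gets $\Im M^{\widetilde D}(e)=\pi\rho(e)M^{\widetilde D}(e)(M^{\widetilde D}(e))^*$, hence $\langle M^{\widetilde D}(e)(M^{\widetilde D}(e))^*\rangle=1$, and $(M^{\widetilde D}(e))^{-1}=\widetilde D-e-\langle M^{\widetilde D}(e)\rangle$; feeding these into $\Theta_i$ collapses it to $\Theta_i=\widehat\gamma_i+2\,\Re\langle M^{\widetilde D}(\widehat\gamma_i)\rangle$. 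Since $\langle M^{\widetilde D}(\cdot)\rangle$ is the Stieltjes transform of the scDos of $\widehat H$ -- the free convolution of the semicircle of radius $2$ with the spectral measure of $\widetilde D$ (a near-semicircle of radius $2p_2/p_1$) -- the subordination structure and the local law for $W_2$ give $\langle M^{\widetilde D}(e)\rangle=m^\sharp(e)+\mathcal{O}_\prec(N^{-1})$ and $\widehat\gamma_i=\gamma_i/p_1+\mathcal{O}_\prec(N^{-1})$, where $m^\sharp$ is the Stieltjes transform of the semicircle of radius $2\lVert{\bm p}\rVert/p_1$, with $\Re m^\sharp(e)=-e\,p_1^2/(2\lVert{\bm p}\rVert^2)$; altogether $p_1\Theta_i=(p_2^2/\lVert{\bm p}\rVert^2)\gamma_i+\mathcal{O}_\prec(N^{-1})$. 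To pass from the pointwise $\mathcal{O}_\prec(N^{-1/2})$ ETH bound to the $\mathcal{O}_\prec(N^{-1/2-\epsilon})$ error at the level of $\E_{W_1}$, I would use the vanishing-first-moment estimate $\E_{W_1}\langle{\bm u}_i,\widetilde D{\bm u}_i\rangle=\Theta_i+\mathcal{O}(N^{-1/2-c})$ contained in the moment analysis behind Theorem~\ref{theo:newCLT} (the fluctuation about $\Theta_i$ being asymptotically centred Gaussian with a polynomial rate), or alternatively derive it directly from the averaged local law for $\langle(G^{\widehat H}-M^{\widetilde D})\widetilde D\rangle$ together with a contour representation of the overlap. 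This yields~\eqref{eq:expectation}.

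\emph{The variance.} Here $\Var_{W_1}[\langle{\bm u}_i,p_2W_2{\bm u}_i\rangle]=p_1^2\Var_{W_1}[\langle{\bm u}_i,\widetilde D{\bm u}_i\rangle]$, and the second-moment content of Theorem~\ref{theo:newCLT}, combined with the first-moment estimate above, gives $\Var_{W_1}[\langle{\bm u}_i,\widetilde D{\bm u}_i\rangle]=\tfrac{2}{\beta N}\Var_{\widehat\gamma_i}(\widetilde D)+\mathcal{O}(N^{-1-c})$. It remains to compute the conditionally deterministic quantity $\Var_{\widehat\gamma_i}(\widetilde D)$ of~\eqref{eq:VarAgammai}. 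The identity $\Theta=e+2\Re\langle M\rangle$ with the MDE gives that the regularisation $\mathring{A}^e$ of $A=\widetilde D$ equals $(M^{\widetilde D}(e))^{-1}-\overline{\langle M^{\widetilde D}(e)\rangle}$; using this together with $\Im M^{\widetilde D}=\pi\rho\,M^{\widetilde D}(M^{\widetilde D})^*$ and $\langle M^{\widetilde D}(M^{\widetilde D})^*\rangle=1$, both terms of~\eqref{eq:VarAgammai} become polynomials in the normalised traces of products of $M^{\widetilde D}(\widehat\gamma_i)$ and its adjoint. Passing these traces to their semicircular limits as in the previous step and simplifying, one finds that all the energy-dependence cancels and $\Var_{\widehat\gamma_i}(\widetilde D)=p_2^2/\lVert{\bm p}\rVert^2+\mathcal{O}_\prec(N^{-\epsilon})$; hence $\tfrac{\beta N}{2}\Var_{W_1}[\langle{\bm u}_i,p_2W_2{\bm u}_i\rangle]=p_1^2p_2^2/\lVert{\bm p}\rVert^2+\mathcal{O}_\prec(N^{-\epsilon})$, proving~\eqref{eq:variance}.

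\emph{Main difficulty.} The two genuinely delicate points are: (i) extracting $o(N^{-1/2})$ precision in the expectation, which cannot come from ETH alone and requires the refined first-moment information underlying Theorem~\ref{theo:newCLT} (or a separate local-law computation); and (ii) the explicit evaluation of $\Var_{\widehat\gamma_i}(\widetilde D)$ -- one must verify that the various traces of powers of $M^{\widetilde D}$ and its adjoint, after the semicircular limit, recombine into the strikingly simple energy-independent constant $p_2^2/\lVert{\bm p}\rVert^2$, in particular that the contribution of the second, stability-operator term of~\eqref{eq:VarAgammai} is precisely what is needed. Controlling the $W_2$-randomness of $\widehat\gamma_i$ and of the traces of $M^{\widetilde D}$ is routine, since the relevant spectral parameter sits at distance $\sim1$ from $\mathrm{spec}(W_2)$.
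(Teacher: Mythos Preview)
Your proposal is correct and follows essentially the same strategy as the paper: condition on $W_2$, apply Theorem~\ref{theo:newCLT} to the deformed Wigner matrix (you rescale by $p_1$, the paper keeps the $p_1^2$ in the MDE, a purely cosmetic difference), and then evaluate the resulting conditionally deterministic quantities via the MDE and the local law for $W_2$. Your algebraic route is slightly slicker in places---the identity $\Theta_i=\widehat\gamma_i+2\Re\langle M^{\widetilde D}(\widehat\gamma_i)\rangle$ and the formula $\mathring{A}^e=(M^{\widetilde D}(e))^{-1}-\overline{\langle M^{\widetilde D}(e)\rangle}$ bypass some of the explicit resolvent manipulations in \eqref{eq:exp easyman}--\eqref{eq:exp msc} and \eqref{eq:variance term1}--\eqref{eq:varident}---but the inputs, the logical structure, and in particular the use of moment convergence in Theorem~\ref{theo:newCLT} to get the extra $N^{-\epsilon}$ in \eqref{eq:expectation} are the same.
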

\begin{proof}[Proof of Theorem \ref{thm:main}] First, we note that all requirements for applying Theorem \ref{theo:newCLT} to $H = p_1 W_1 + D$, with $D = p_2 W_2$ for some fixed realisation of $W_2$ in a very high probability event, are satisfied. This follows from Remark \ref{rmk:Mbdd} and $\langle (W_2 - \langle W_2 \rangle)^2\rangle \gtrsim 1$ with very high probability. Next, observe that replacing $\gamma_i$ in \eqref{eq:expectation} by the eigenvalue $\lambda_i$ appearing in \eqref{eq:mainthm} is trivial
 by the usual eigenvalue rigidity $|\gamma_i - \lambda_i| \prec 1/N$ for Wigner matrices in the bulk  \cite{EYbook}. 
 Thus, 
  Theorem \ref{theo:newCLT}  shows that, conditioned on a fixed realisation of $W_2$,
	\begin{equation} \label{eq:Gaussian2cond}
\sqrt{ \frac{\beta N}{2} }\left[\langle \bw_i,p_2W_2\bw_i\rangle - \frac{p_2^2}{\|\bm{p}\|^2}\lambda_i\right]
	\end{equation}
is approximately Gaussian  with an approximately constant  variance (independent of $W_2$) 
given in \eqref{eq:variance}.
Since this holds with very high probability w.r.t. $W_2$, this proves~\eqref{eq:mainthm} for $l=2$; the proof
for $l=1$ is the same.
\end{proof}

\begin{proof}[Proof of Corollary~\ref{cor:main}] We formulated Theorem~\ref{theo:newCLT} as
a CLT for overlaps $\langle \bm{u}_i,A\bm{u}_i\rangle$ for a single deterministic matrix $A$, but
by standard polarisation it also shows the joint approximate Gaussianity
of any $p$-vector 
\begin{equation}\label{vector}
 \big(  \langle \bm{u}_i,A_1\bm{u}_i\rangle, \langle \bm{u}_i,A_2\bm{u}_i\rangle, \ldots, \langle \bm{u}_i,A_p\bm{u}_i\rangle\big)
 \end{equation}
 for any fixed $k$ and deterministic observables $A_1, A_2, \ldots  A_p$  satisfying $\langle (A_j- \langle A_j\rangle)^2\rangle\ge c$,
 $j\in [p]$. Namely, using Theorem~\ref{theo:newCLT} to compute the moments of
  $\langle \bm{u}_i,A(\bm{t}) \bm{u}_i\rangle$ for the linear combination
 $A(\bm{t})= \sum_j t_j A_j$
with  any 
 real vector $\bm{t} = (t_1, t_2,\ldots, t_p)$, we can identify any joint moments of the coordinates of the vector in~\eqref{vector}
 and we find that they satisfy the (approximate) Wick theorem. 
 
 Now we can follow the above proof of Theorem \ref{thm:main}, but without the simplification $k=2$. 
 Conditioning on $W_2,\ldots, W_k$ and working in the probability space of $W_1$, by the polarisation
 argument above we find that
  not only each $X_l$   from~\eqref{Xl} is  asymptotically Gaussian with a variance 
  independent of $W_2, \ldots, W_k$, but they are jointly Gaussian
  for $l=2,3,\ldots, k$. This is sufficient for the joint Gaussianity of the entire vector $\bm{X}$ since $\sum_l X_l=0$.
This completes the proof of Corollary~\ref{cor:main}.
\end{proof}

The proof of Lemma~\ref{lem:main} is divided into the computation of the
 expectation \eqref{eq:expectation} and the variance \eqref{eq:variance}.

 \subsection{Computation of the expectation \eqref{eq:expectation}}\label{subsec:exp} 
 As in the proof of Theorem \ref{thm:main} above, we condition on $W_2$ and 
work in the probability space of $W_1$ 
i.e. we consider $p_2 W_2$ as a deterministic deformation of $p_1W_1$.
This allows us to use Theorem~\ref{theo:newCLT}
as\footnote{Note that Theorem \ref{theo:ETH} alone would prove \eqref{eq: expectation in equipart 1} only with $\epsilon=0$, but the convergence in the sense of 
moments from Theorem \ref{theo:newCLT} gains  a factor $N^{-\epsilon}$ with a positive $\epsilon$.} 
\begin{equation}
	\E_{W_1}\langle \bw_i,p_2W_2 \bw_i\rangle = \frac{\langle p_2W_2\Im M_2(\gamma_{i,2}))\rangle}{\langle \Im M_2(\gamma_{i,2})\rangle}+\mathcal{O}_\prec\left(N^{-1/2-\epsilon}\right)
	\label{eq: expectation in equipart 1}
\end{equation}
for some constant $\epsilon> 0$. Here $M_2(z)$, depending on $W_2$, is the unique solution of the MDE
\begin{equation}
	-\frac{1}{M_2(z)}=z-p_2W_2+p_1^2\langle M_2(z)\rangle\,,
	\label{eq:MDE equipart}
\end{equation}
corresponding to the matrix $p_1W_1+p_2W_2$, where $p_2W_2$ is considered a deformation, 
and $\gamma_{i,2}$ is the $i^{\rm th}$ quantile of the scDos $\rho_2$ corresponding to \eqref{eq:MDE equipart}. The 
subscript `$2$' for $M_2$, $\rho_2$ and $\gamma_{i,2}$ in \eqref{eq: expectation in equipart 1} and \eqref{eq:MDE equipart} 
indicates that these objects are dependent on $W_2$ and hence random.

The Stieltjes transform $m_2(z)$ of $\rho_2$ is given by the implicit equation
$$
  m_2(z):=\langle M_2(z)\rangle = \frac{1}{p_2}\cdot  \Big\langle
  \frac{1}{W_2-\frac{1}{p_2}(z+p_1^2m_2(z))} \Big\rangle
 $$
 with the usual side condition $\Im z\cdot \Im m_2(z)>0$.
 Applying  the standard local law for the resolvent of $W_2$  on the right hand side shows that
 \begin{equation}\label{loclaw}
 \Big| m_2(z) - \frac{1}{p_2} m_{\rm sc}(w_2)\Big|\prec \frac{1}{N |\Im w_2|}, \qquad w_2:=\frac{1}{p_2}(z+p_1^2m_2(z)).
 \end{equation}
 where $m_{\rm sc}$ is the Stieltjes transform of the standard semicircle law, i.e. it satisfies the
 quadratic   equation
\begin{equation} \label{eq:mdesc}
m_{\rm sc}(w)^2+w m_{\rm sc}(w)+1=0\,
\end{equation}
with the side condition $\Im w\cdot \Im m_{\rm sc}(w)>0$. 
 Note that in~\eqref{loclaw}  $w_2$ is random, it depends on $W_2$, but the local law for $\langle (W_2-w)^{-1}\rangle$
 holds uniformly  in the spectral parameter $|\Im w|\ge N^{-1}$, hence a standard grid argument and the Lipschitz continuity 
 of the resolvents shows that it holds for any (random) $w$ with $|\Im w|\ge N^{-1+\xi}$ with any fixed $\xi>0$. 
 
 Applying~\eqref{eq:mdesc}  at $w=w_2$ together with~\eqref{loclaw} implies that
\begin{equation}\label{apprloclaw}
 - \frac{1}{\Vert \bm{p} \Vert \, m_2(z)} =\frac{z}{\Vert \bm{p}\Vert} + \Vert \bm{p}\Vert\,  m_2(z) + \mathcal{O}_\prec \Big( \frac{1}{N |\Im w_2|}\Big).
\end{equation}
 We  view this relation as a small additive perturbation of the exact equation 
 \begin{equation}\label{msc}
 - \frac{1}{m_{\rm sc}\big(\frac{z}{\Vert \bm{p}\Vert}\big)} =\frac{z}{\Vert \bm{p}\Vert} + m_{\rm sc}\big(\tfrac{z}{\Vert \bm{p}\Vert}\big)
\end{equation}
to conclude
\begin{equation}\label{loc1}
\left| \Vert \bm{p} \Vert \, m_2(z) - m_{\rm sc}\big(\tfrac{z}{\Vert \bm{p}\Vert}\big)\right|\prec \frac{1}{N|\Im z|}\,, \qquad |\Im z|\ge N^{-1+\xi}\,,
\end{equation}
using that $|\Im w_2|\gtrsim |\Im z|$ from the definition of $w_2$ in~\eqref{loclaw} and that $\Im z\cdot \Im m_2(z)>0$.
The conclusion~\eqref{loc1} requires a standard continuity argument, starting from a $z$ with a large imaginary part  and continuously
reducing the imaginary part by keeping the real part fixed (the same argument is routinely used in the proof of the local law for Wigner
matrices, see, e.g., \cite{EYbook}).  

The estimate~\eqref{loc1} implies that the quantiles of $\rho_2$ satisfy the usual rigidity estimate, i.e. 
\begin{equation}\label{rig}
  |\gamma_{2,i} -\gamma_i|\prec \frac{1}{N}
\end{equation}
for bulk indices $i\in [\kappa N, (1-\kappa)N]$ with any $N$-independent $\kappa>0$.
Moreover, \eqref{loc1}  also  implies that for any $z$ in the bulk of the semicircle, i.e. $|\Im m_{sc}(z)|\ge c>0$
for some $c>0$, independent of $N$, we have $|\Im m_2(z)|\ge c/2$ as long as $|\Im z|\ge N^{-1+\xi}$.
Using the definition of $w_2$ in~\eqref{loclaw} again, this shows $|\Im w_2|\sim |\Im w|$ 
for the deterministic $w:=\frac{1}{p_2}(z+p_1^2 m_{\rm sc}(z))$ for any $z$ with $|\Im z|\ge N^{-1+\xi}$.
Feeding this information into~\eqref{apprloclaw} and viewing it again as a perturbation of~\eqref{msc} but with the 
improved deterministic bound $\mathcal{O}_\prec(1/(N|\Im w|))$, we obtain
\begin{equation}\label{loc2}
\left| \Vert \bm{p} \Vert \,  m_2(z) - m_{\rm sc}\big(\tfrac{z}{\Vert \bm{p}\Vert}\big)\right|\prec \frac{1}{N|\Im w|}\,, \qquad \mbox{with}\quad
 w=\frac{1}{p_2}(z+p_1^2 m_{\rm sc}(z))\,,
\end{equation}
uniformly in $|\Im z|\ge N^{-1+\xi}$. In particular, when $z$ is in the bulk of the semicircle, then we have that
$$\left| \Vert \bm{p} \Vert \, m_2(z) - m_{\rm sc}\big(\tfrac{z}{\Vert \bm{p}\Vert}\big)\right|\prec \frac{1}{N}$$
 and this relation holds even down to the real axis by
the Lipschitz continuity (in fact, real analyticity) of the Stieltjes transform $m_2(z)$ in the bulk.

In the following, we will use the shorthand notation $A\approx B$ for two (families of) random
variables $A$ and $B$ if and only if $\vert A - B \vert \prec N^{-1}$.
Evaluating~\eqref{eq:MDE equipart} at $z=\gamma_{i,2}$, we have
\begin{equation}
	M_2(\gamma_{i,2})=\frac{1}{p_2}\cdot \frac{1}{W_2-w_{i,2}}\,,
	\qquad w_{i,2}:=\frac{1}{p_2}(\gamma_{i,2}+p_1^2\,m_2(\gamma_{i,2}))\,,
	\label{eq: expectation in equipart 2}
\end{equation}
and note that $w_{i,2} \approx w_i:=\frac{1}{p_2}\left(\gamma_{i}+\frac{p_1^2}{\lVert p\rVert}\,m_{\rm sc}\left(\frac{\gamma_{i}}{\lVert p\rVert}\right)\right)$ by~\eqref{loc2} and since $\gamma_{i,2}\approx\gamma_i$  in the bulk by rigidity~\eqref{rig}.

Now we are ready to evaluate the rhs. of \eqref{eq: expectation in equipart 1}.
By elementary manipulations using \eqref{eq: expectation in equipart 2}, 
we can now write the rhs.~of \eqref{eq: expectation in equipart 1} as
 \begin{equation} \label{eq:exp easyman}
\frac{\langle p_2W_2\Im M_2(\gamma_{i,2})\rangle}{\langle \Im M_2(\gamma_{i,2})\rangle} = 
\gamma_{i,2} + \frac{p_1^2}{p_2} \frac{\Im  \big[ \langle (W_2 - w_{i,2})^{-1} \rangle^2\big] }{\Im \, \langle (W_2 - w_{i,2})^{-1} \rangle}\,. 
 \end{equation}
 
 Using~\eqref{loc2}, we obtain
\begin{equation} 	\label{eq: expectation in equipart 4}
	\langle  (W_2-w_{i,2})^{-1}\rangle \approx \langle (W_2-w_i)^{-1}\rangle \approx m_{\rm sc}(w_i) 
\end{equation} 
with very high $W_2$-probability.
Continuing with \eqref{eq:exp easyman} and using $\gamma_{i,2} \approx \gamma_i$, we thus find 
\begin{equation}  \label{eq:exp msc}
\frac{\langle p_2W_2\Im M_2(\gamma_{i,2}))\rangle}{\langle \Im M_2(\gamma_{i,2})\rangle} \approx \gamma_{i} + \frac{p_1^2}{p_2}
\frac{\Im \big[ m_{\rm sc}(w_i)^2\big]}{\Im m_{\rm sc}(w_i)}\,. 
\end{equation}

Next, we combine  \eqref{eq:mdesc} with $p_2 m_2(\gamma_{i}) \approx m_{\rm sc}(w_i)$ from 
\eqref{loclaw}, \eqref{eq: expectation in equipart 2}
and \eqref{eq: expectation in equipart 4} and find that 
\begin{equation} 	\label{eq: expectation in equipart 8}
	m_{\rm sc}(w_i)^2 \approx -\frac{p_2^2}{p_1^2+p_2^2}\left(1+\frac{1}{p_2}\gamma_{i} \, m_{\rm sc}(w_i)\right)\,.
\end{equation}
Hence, plugging \eqref{eq: expectation in equipart 8} into \eqref{eq:exp msc} we deduce
\begin{equation*}
\frac{\langle p_2W_2\Im M_2(\gamma_{i,2}))\rangle}{\langle \Im M_2(\gamma_{i,2})\rangle} \approx  \left(1 - \frac{p_1^2}{p_1^2+p_2^2}\right)\gamma_{i} = \frac{p_2^2}{\|\boldsymbol{p}\|^2}\gamma_i\,.
\end{equation*}
This completes the proof of \eqref{eq:expectation}.

\subsection{Computation of the variance \eqref{eq:variance}} 
As in the calculation of the expectation in Section~\ref{subsec:exp}, we first condition on $W_2$ and work in the probability space of $W_1$.
 So, we apply Theorem \ref{theo:newCLT} to the matrix $p_1W_1+p_2W_2$, where the second
  term is considered a fixed deterministic deformation. 
  Indeed, using the same notations as in Section~\ref{subsec:exp}, this gives that the lhs.~of \eqref{eq:variance} equals 
\begin{equation}
    p_2^2\, \Var_{\gamma_{i,2}}\left(W_2\right)= p_2^2\frac{1}{\langle \Im M_2 (\gamma_{i,2})\rangle^2} \left( \big\langle \big(\mathring{W}_2^{\gamma_{i,2}} \Im M_2(\gamma_{i,2})\big)^2\big\rangle - \frac{p_1^2}{2} \Re\left[ \frac{\big\langle \big(M_2(\gamma_{i,2})\big)^2 \mathring{W}_2^{\gamma_{i,2}} \big\rangle^2}{1 - p_1^2\big\langle \big(M_2(\gamma_{i,2})\big)^2 \big\rangle} \right]  \right)
    \label{eq:variace equipart explicit}
\end{equation}
up to an additive error of order $\mathcal{O}_\prec \big(N^{-\epsilon}\big)$, which will appear on the rhs.~of \eqref{eq:variance}. The factor $p_1^2$ in the second term of \eqref{eq:variace equipart explicit} is a natural rescaling caused by applying 
Theorem~\ref{theo:newCLT} to a deformation of $p_1W_1$ instead of 
a Wigner matrix $W_1$.  Further we express $M_2$ in terms of a Wigner resolvent $G:=(W_2-w)^{-1}$ and use 
 \emph{local laws} not only for a single resolvent  $\langle G\rangle$
but also their multi-resolvent versions for  $\langle G^2\rangle$ and $\langle GG^*\rangle$ (see \cite{multiG}).  
With a slight abuse of notation we shall henceforth drop the subscript `$2$' in $\gamma_{i,2}$ and $w_{i,2}$ and replace them 
by their deterministic values $\gamma_i$ and $w_i$, respectively,  at a negligible error of order $N^{-1}$
 exactly as in Section~\ref{subsec:exp}. Note that $\Im w_i\gtrsim 1$
for bulk indices $i$, 
so all resolvents  below are stable and all denominators are well separated away from zero; this is needed to justify 
the $\approx$ relations below.

The first term in \eqref{eq:variace equipart explicit} can be rewritten as   (here $G=G(w_i)$ and $m_{\rm sc}:=m_{\rm sc}(w_i)$ for brevity)
\begin{align}
    \left\langle\big(\mathring{W}_2^{\gamma_i} \Im M_2(\gamma_i)\big)^2\right\rangle &\approx \frac{1}{2} \Re\left(\left\vert \frac{w_i}{p_2}-\frac{\gamma_i}{p_1^2+p_2^2}\right\vert^2\langle GG^*\rangle-\left(\frac{w_i}{p_2}-\frac{\gamma_i}{p_1^2+p_2^2}\right)^2\langle G^2\rangle\right)\nonumber\\
    &\approx \frac{1}{2}\Re\left(\left\vert \frac{w_i}{p_2}-\frac{\gamma_i}{p_1^2+p_2^2}\right\vert^2
    \frac{\vert m_{\rm sc}\vert^2}{1-\vert m_{\rm sc}\vert^2}-\left(\frac{w_i}{p_2}
    -\frac{\gamma_i}{p_1^2+p_2^2}\right)^2\frac{m_{\rm sc}^2}{1-m_{sc}^2}\right) \nonumber\\
    &\approx\frac{p_1^4}{2p_2^2(p_1^2+p_2^2)^2}\Re\left(\frac{1}{1-\vert m_{\rm sc}\vert^2} - \frac{1}{1-m_{\rm sc}^2} \right),
    \label{eq:variance term1}
\end{align}
where in the last step we used \eqref{eq: expectation in equipart 8}. Similarly for the second term in \eqref{eq:variace equipart explicit},
we have
\begin{align}
     \frac{\left\langle\big(M_2(\gamma_i)\big)^2 \mathring{W}_2^{\gamma_i}\right\rangle^2}{1 - p_1^2\left\langle \big(M(\gamma_i)\big)^2 \right\rangle} &\approx \frac{ \Big[ \frac{1}{p_2}\left\langle \frac{1}{p_2}G
     +\left(\frac{w_i}{p_2}-\frac{\gamma_i}{p_1^2+p_2^2}\right)G^2\right\rangle\Big]^2}{1-\frac{p_1^2}{p_2^2}\langle G^2\rangle}   \approx \frac{m_{\rm sc}^2(p_2^2-(p_1^2+p_2^2)m_{\rm sc}^2)}{p_2^2(p_1^2+p_2^2)^2(1-m_{\rm sc}^2)}\,.    \label{eq:variance term2}
\end{align}
Plugging \eqref{eq:variance term1} and \eqref{eq:variance term2} into \eqref{eq:variace equipart explicit} we obtain
\begin{equation} \label{eq:varwithoutident}
    p_2^2 \, \Var_{\gamma_i}(W_2)
    \approx \frac{2p_2^2+p_2\gamma_i\Re m_{\rm sc}}{\left( \Im m_{\rm sc}\right)^2}\cdot \frac{p_1^2p_2^2}{2(p_1^2+p_2^2)^2}\,.
\end{equation}
Taking the imaginary part of \eqref{eq: expectation in equipart 8}, we find that $|m_{\rm sc}|^2 \approx \frac{p_2^2}{p_1^2 + p_2^2}$ and hence, using \eqref{eq:mdesc} again, we infer
\begin{equation} \label{eq:varident}
\frac{1}{p_1^2 + p_2^2} \frac{2p_2^2+p_2\gamma_i\Re m_{sc}}{\left( \Im m_{\rm sc}\right)^2} \approx \frac{\Re [|m_{\rm sc}|^2 - m_{\rm sc}^2]}{(\Im m_{\rm sc})^2} = 2\,. 
\end{equation}
Combining \eqref{eq:varwithoutident} and \eqref{eq:varident} with \eqref{eq:variace equipart explicit}, this completes the proof of \eqref{eq:variance}. This proves Lemma~\ref{lem:main}.

\section{Multi--resolvent local laws: Proof of Theorem~\ref{theo:ETH}} \label{sec:proofETH}

To study the eigenvectors of $H$ we analyse its resolvent $G(z):=(H-z)^{-1}$, with $z\in \C\setminus\R$. It is well known \cite{slowcorr, edgelocallaw} that $G(z)$ becomes approximately deterministic in the large $N$ limit.
 Its deterministic approximation (as a matrix) is given by $M(z)$, the unique solution of \eqref{eq:MDE},
 in the following \emph{averaged} and \emph{isotropic} sense:
\begin{equation}
\label{eq:singlegllaw}
|\langle (G(z)-M(z))B\rangle|\prec \frac{1}{N|\Im z|}, \qquad |\langle{\bm x}\,, (G(z)-M(z)) {\bm y}\rangle|\prec \frac{1}{\sqrt{N|\Im z|}} \,,
\end{equation}
uniformly in deterministic vectors $\lVert {\bm x}\rVert+\lVert{\bm y}\rVert\lesssim 1$ and deterministic matrices 
$\lVert B\rVert\lesssim 1$. To be precise, while the local laws \eqref{eq:singlegllaw} hold for 
$\Re z \in \mathbf{B}_\kappa$ and $\mathrm{dist}(\Re z, \mathrm{supp}(\rho)) \gtrsim 1$ for \emph{arbitrary} 
bounded self-adjoint deformations $D =D^*$ (see \cite[Theorem~2.2]{slowcorr}), the complementary regime 
requires the strengthened Assumption \ref{ass:Mbdd} on $D$ (see \cite{edgelocallaw}). Note that cusps for $\rho$ 
have been excluded in Assumption \ref{ass:Mbdd}, hence the complementary regime only consists of edges, 
which are covered in \cite[Theorem~2.6]{edgelocallaw}, under the requirement that $\Vert M \Vert$ is bounded 
-- which was also supposed in Assumption \ref{ass:Mbdd}. 

The isotropic bound $\langle{\bm x}, \Im G(z) {\bm x}\rangle\prec 1$ from \eqref{eq:singlegllaw} immediately
 gives an (almost) optimal bound on the delocalisation of eigenvectors: $|\langle {\bm u}_i,{\bm x}\rangle|\prec N^{-1/2}$ 
\cite{LeeSchnelli, LandonYau, slowcorr, edgelocallaw, BenigniLopatto2103.12013}. However, these estimates are not precise
 enough to conclude optimal bounds for eigenvector overlaps and generic matrices $A$ as in Theorem~\ref{theo:ETH}; 
 in fact by \eqref{eq:singlegllaw} we can only obtain the trivial bound $|\langle {\bm u}_i,A{\bm u}_j\rangle|\prec 1$.
Instead of the single resolvent local law \eqref{eq:singlegllaw}, we rely on the fact that (see \eqref{eq:overlap2G} below)
\begin{equation}\label{uAu}
N \big|\langle {\bm u}_i,A{\bm u}_j\rangle \big|^2\lesssim \langle \Im G(\gamma_i+\ii\eta)A \Im G(\gamma_j+\ii\eta)A^*\rangle\,,
\end{equation}
for $\eta\sim N^{-1+\epsilon}$, where $\epsilon>0$ is small but fixed, and $\gamma_i, \gamma_j \in \mathbf{B}_\kappa$ are in the bulk
and we estimate the rhs. of~\eqref{uAu}.  In particular, to prove Theorem~\ref{theo:ETH} we will use the \emph{multi-resolvent local laws} from Proposition~\ref{pro:mresllaw} below. 

Multi-resolvent local laws are natural generalisations of~\eqref{eq:singlegllaw} and they
 assert that longer products 
 \begin{equation}\label{GBG}
	G_1 B_1 G_2 \,  \cdots \, G_{k-1} B_{k-1} G_{k}
\end{equation}
 of resolvents  $G_i := G(z_i)$ and deterministic matrices\footnote{We will use the the notational convention, that the letter $B$
  denotes arbitrary (generic) matrices, while $A$ is reserved for \emph{regular} matrices, in the sense of Definition \ref{def:regobs} below. }
   $B_1, ... , B_{k-1}$
  also become approximately deterministic
both in average and isotropic sense in the large $N$ limit as long as $N|\Im z_i|\gg 1$.
The deterministic approximation to the  chain~\eqref{GBG}
 is denoted by
\begin{equation} \label{eq:Mdef}
	M(z_1, B_1, z_2, ... , z_{k-1}, B_{k-1}, z_{k}).
\end{equation}
It is not simply $M(z_1) B_1 M(z_2) B_2\ldots$, i.e. it cannot be obtained by 
mechanically replacing each $G$ with $M$ as~\eqref{eq:singlegllaw} might incorrectly suggest.
 Instead, \nc
it is defined recursively in the length $k$ of the chain as follows (see~\cite[Definition 4.1]{iid}): 
\begin{definition} \label{def:Mdef}
	Fix $k \in \N$ and let $z_1, ... , z_k \in \C \setminus \R$ be spectral parameters. As usual, the corresponding solutions to \eqref{eq:MDE} are denoted by $M(z_j)$, $j \in [k]$. Then, for deterministic matrices $B_1, ... , B_{k-1}$ we recursively define 
	\begin{align}
		M(z_1,B_1, ... B_{k-1} , z_{k}) = \big(\mathcal{B}_{1k}\big)^{-1}&\bigg[M(z_1) B_1 M(z_{2}, ...  , z_{k}) \label{eq:M_definitionapp}\\
		& +  \sum_{l = 2}^{k-1}  M(z_1) \langle M(z_1,  ... , z_l)  \rangle M(z_l, ... , z_{k}) \bigg]\,, \nonumber
	\end{align}
	where we introduced the shorthand notation
	\begin{equation} \label{eq:stabop}
		\mathcal{B}_{mn} \equiv \mathcal{B}(z_m,z_n)= 1 - M(z_m) \langle \cdot \rangle M(z_n)
	\end{equation}
	for the stability operator acting on the space of $N\times N$ matrices.
\end{definition}

It turns out that the size of $M(z_1, B_1, z_2,\ldots, z_k)$ in the relevant regime of small $\eta:=\min_j |\Im z_j|$
is roughly $\eta^{-k+1}$ in the worst case, with a matching error term in the corresponding local  law.
This blow-up in the small $\eta$ regime comes recursively  from the large norm of the inverse of the 
stability operator $\mathcal{B}_{1k}$ in~\eqref{eq:M_definitionapp}. However, for a special subspace
of observable matrices $B_i$, called \emph{regular} matrices, the size of $M(z_1, B_1, z_2,\ldots, z_k)$ is much smaller.
 For \emph{Wigner matrices}, i.e.~for $D=0$, the {regular observables} are simply 
 the \emph{traceless} matrices, i.e. observables $B$ such that $\langle B\rangle=0$. 
 In \cite{ETHpaper, thermalization, multiG, A2} it was shown that when the 
matrices $B_i$ are all traceless, then $M(z_1, B_1, z_2,\ldots, z_k)$ hence~\eqref{GBG} are
 smaller by  an $\eta^{k/2}$-factor
 than for general $B_i$;'s.

The situation for deformed Wigner matrices is more complicated, since the concept of \emph{regular observables} will be dependent on the precise location in the spectrum of $H$, i.e.~dependent on the energy. More precisely, we will require that the trace of $A$ tested against a deterministic energy dependent matrix has to vanish; this reflects the inhomogeneity introduced by $D$. Analogously to the Wigner case, 
in Proposition~\ref{pro:mresllaw} below we will show that  resolvent chains~\eqref{GBG} 
are much smaller when the deterministic matrices $B_i$ are regular.

\nc

Next, we give the definition of regular matrices in the chain~\eqref{GBG}. Using the notation $A$ for regular matrices,
we will consider chains of resolvents and deterministic matrices of the form
\begin{equation} \label{eq:av}
	\langle G_1 A_1 \, \cdots \, G_k A_k \rangle 
\end{equation}
in the averaged case, or
\begin{equation} \label{eq:iso}
	\big(G_1 A_1 \, \cdots \, A_k G_{k+1} \big)_{\boldsymbol{x}\boldsymbol{y}}
\end{equation}
in the isotropic case, with $G_i:=G(z_i)$ and $A_i$ being \emph{regular} matrices according to the following 
Definition~\ref{def:regobs} (cf. \cite[Definition 4.2]{iid}), which generalises the earlier Definition \ref{def:regobs1}.

\begin{definition}[Regular observables -- Two-point regularisation in chains]
\label{def:regobs}
Fix a parameter $\kappa > 0$ and let $\delta = \delta(\kappa, \Vert D\Vert) > 0$ be small enough (see the discussion below). Consider one of the two expressions \eqref{eq:av} or \eqref{eq:iso} for some fixed length $k \in \N$ and bounded matrices $\Vert A_i \Vert \lesssim 1$ and let  $z_1, ... , z_{k+1} \in \C \setminus \R$ be spectral parameters with $\Re z_j \in \mathbf{B}_\kappa$.
For any $j \in [k]$, we denote
\begin{equation} \label{eq:case regulation2}
	\mathbf{1}_\delta(z_j, z_{j+1}) := \phi_\delta(\Re z_j - \Re z_{j+1} ) \ \phi_\delta(\Im z_j) \ \phi_\delta(\Im z_{j+1})
\end{equation}
where $0 \le \phi_\delta \le 1$ is a smooth symmetric bump function on $\R$ satisfying $\phi_\delta(x) = 1$ for $|x|\le \delta/2$ and $\phi_\delta(x) = 0$ for $|x| \ge \delta$. Here and in the following, in case of \eqref{eq:av}, the indices in \eqref{eq:case regulation2} are understood cyclically modulo $k$. 
\begin{itemize}
	\item[(a)] For $j \in [k]$, denoting $\mathfrak{s}_j := - \mathrm{sgn}(\Im z_j \Im z_{j+1})$, we define the \emph{(two-point) regularisation} of $A_j$ from \eqref{eq:av} or \eqref{eq:iso} \emph{w.r.t.~the spectral parameters $(z_j, z_{j+1})$} as
	\begin{equation} \label{eq:circ def}
\mathring{A}_j^{{z_j,z_{j+1}}} := A_j - 	\mathbf{1}_\delta(z_j, z_{j+1})\frac{\langle M(\Re z_j+ \ii \Im z_j) A_j M(\Re z_{j+1} + \mathfrak{s}_j \ii \Im z_{j+1}) \rangle}{\langle M(\Re z_j + \ii \Im z_j) M(\Re z_{j+1} + \mathfrak{s}_j \ii \Im z_{j+1}) \rangle}  \,. 
	\end{equation}
	\item[(b)] Moreover, we call $A_j$ \emph{regular w.r.t.~$(z_j, z_{j+1})$} if and only if $\mathring{A}^{z_j,z_{j+1}}_j = A_j$. 
\end{itemize}
\end{definition}
As already indicated above, the two-point regularisation generalises Definition \ref{def:regobs1} in the sense that
\begin{equation} \label{eq:1ptreg2ptreg}
\mathring{A}^{e \pm \ii \eta, e \pm \ii \eta} \longrightarrow \mathring{A}^e\,, \quad \text{and} \quad \mathring{A}^{e \pm \ii \eta, e \mp \ii \eta} \longrightarrow \mathring{A}^e \,, \quad \text{as} \quad \eta \downarrow 0\,,
\end{equation}
with a linear speed of convergence,
for $e \in \mathbf{B}_\kappa$ and any bounded deterministic $A \in \C^{N \times N}$, where we used that,  by taking the imaginary part of \eqref{eq:MDE}, $M(z)M(z)^* = \Im M(z)/(\langle \Im M(z)\rangle + \Im z)$. 

Moreover, we point out, that the above Definition \ref{def:regobs} of the regularisation is identical to 
\cite[Defs.~3.1 and 4.2]{iid} when dropping the summand with $\mathfrak{s} \tau = -1$ in Equation~(3.7) of \cite{iid}. 
In particular, for spectral parameters $z_j, z_{j+1}$ satisfying $\mathbf{1}_\delta(z_j,z_{j+1}) > 0$ (for some $\delta >0 $ small enough), it holds that the denominator in \eqref{eq:circ def} is bounded away from zero, which shows that the linear map $A \mapsto \mathring{A}$ is bounded. Additionally, we have the following Lipschitz property (see \cite[Lemma~3.3]{iid}):
\begin{equation}
\label{eq:lipprop}
\mathring{A}^{z_1,z_2}=\mathring{A}^{w_1,w_2}+\mathcal{O}\big(|z_1-w_1|+|z_2-w_2|\big) I,
\end{equation}
for any $z_1,z_2,w_1,w_2\in \C\setminus\R$ such that $\Im z_i\Im w_i>0$. 
It is important that the error in~\eqref{eq:lipprop} is a constant times the identity matrix, 
indicated by $\mathcal{O}(\cdot) I$.

Next, we give bounds on the size of $M(z_1, A_1, ... A_{k-1}, z_{k})$, the deterministic approximation to
the chain  $G_1 A_1 \,  \cdots \, A_{k-1}G_{k}$ introduced in Definition~\ref{def:Mdef}; the proof of this lemma is presented in Appendix~\ref{app:techlem}.

 \begin{lemma}
 \label{lem:Mbound}
 	Fix $\kappa > 0$. Let $k \in [4]$ and $z_1, ... , z_{k+1} \in \C \setminus \R$ be spectral parameters with $\Re z_j \in \mathbf{B}_\kappa$.
	Set $\eta:=\min_j |\Im z_j|$. 
 	Then, for bounded \emph{regular} deterministic matrices $A_1, ... , A_{k}$  (according to Definition~\ref{def:regobs}), we have the bounds
	\begin{align}
		\Vert M(z_1, A_1, ... , A_{k}, z_{k+1}) \Vert &\lesssim\begin{cases}
			\frac{1}{\eta^{\lfloor k/2 \rfloor}} \hspace{11.5mm} &\text{if} \ \eta \le 1 \\
			\frac{1}{\eta^{k+1}} \qquad &\text{if} \ \eta > 1
		\end{cases} \,, \label{eq:Mboundnorm}\\[2mm]
		\vert \langle  M(z_1, A_1, ... , A_{k-1}, z_{k})A_k  \rangle \vert & \lesssim \begin{cases}
			\frac{1}{\eta^{\lfloor k/2\rfloor-1 }}\vee 1 \quad &\text{if} \ \eta \le 1 \\
			\frac{1}{\eta^{k}} \quad &\text{if} \ \eta > 1
		\end{cases}\,. \label{eq:Mboundtrace} 
	\end{align}
\end{lemma}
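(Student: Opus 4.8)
The plan is to prove Lemma~\ref{lem:Mbound} by induction on the length $k$ of the chain, using the recursive definition~\eqref{eq:M_definitionapp} and carefully tracking how the stability operator $\mathcal{B}_{1,k+1}$ acts. The base case $k=1$ is essentially the statement that $\mathring{A}_1^{z_1,z_2}$ is regular, which by definition means $M(z_1,A_1,z_2) = (\mathcal{B}_{1,2})^{-1}[M(z_1) A_1 M(z_2)]$ lies in the subspace where $\mathcal{B}_{1,2}^{-1}$ is bounded (for $\eta \le 1$), while for $\eta > 1$ the trivial bound $\|\mathcal{B}_{1,2}^{-1}\| \lesssim 1$ and $\|M(z_j)\| \lesssim \eta^{-1}$ gives $\|M(z_1,A_1,z_2)\| \lesssim \eta^{-2}$. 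The key structural fact I would isolate first is a two-parameter \emph{saturation/regularity decomposition} of the stability operator: for spectral parameters in $\mathbf{B}_\kappa$ with small $\eta$, the operator $\mathcal{B}_{mn}$ has a single small eigenvalue of size $\sim \eta$ with spectral projection in the direction of (a suitable version of) $\Im M$, and is bounded invertible on the complementary ``regular'' subspace; this is exactly the content behind the $\eta^{-1}$ gain/loss dichotomy, and it should be quoted from or adapted to the references~\cite{iid,1804.07752}.

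Next I would run the induction step. Given the chain $M(z_1,A_1,\dots,A_k,z_{k+1})$, write it via~\eqref{eq:M_definitionapp} as $\mathcal{B}_{1,k+1}^{-1}$ applied to $M(z_1) A_1 M(z_2,\dots,z_{k+1})$ plus the sum of splitting terms $M(z_1)\langle M(z_1,\dots,z_l)\rangle M(z_l,\dots,z_{k+1})$. Each $M(z_1,\dots,z_l)$ and $M(z_l,\dots,z_{k+1})$ is a shorter chain with regular observables, so the induction hypothesis applies; the averaged quantity $\langle M(z_1,\dots,z_l)\rangle$ is controlled by~\eqref{eq:Mboundtrace} for the shorter chain (noting that $\langle M(z_1,\dots,z_l)\rangle = \langle M(z_1,\dots,z_{l-1},z_l) \cdot I\rangle$, and here one must check that testing against $I$ rather than a regular matrix is harmless — this costs at most one factor of $\eta^{-1}$ and is where the $\lfloor k/2\rfloor$ versus $\lceil \cdot \rceil$ bookkeeping is delicate). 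The crucial point in each term is whether the argument of $\mathcal{B}_{1,k+1}^{-1}$ is regular (in the two-point sense w.r.t.\ $(z_1,z_{k+1})$) or not: if it is, we gain, if not, we lose a factor $\eta^{-1}$. For~\eqref{eq:Mboundnorm} one shows the worst case is an alternating pattern producing exactly $\lfloor k/2\rfloor$ net factors of $\eta^{-1}$; for~\eqref{eq:Mboundtrace}, the extra averaging against the regular matrix $A_k$ provides one more $\sqrt{\eta}$-type improvement per the mechanism that $\langle (\text{non-regular direction}) A_k\rangle$ is itself small because $A_k$ is regular — effectively saving a full power of $\eta$ compared to~\eqref{eq:Mboundnorm}, which accounts for the $\lfloor k/2\rfloor - 1$ exponent.

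For the large-$\eta$ regime ($\eta > 1$) the analysis is purely a crude norm count: $\|\mathcal{B}_{mn}^{-1}\| \lesssim 1$, $\|M(z_j)\| \le 1/\eta$, and the recursion~\eqref{eq:M_definitionapp} immediately yields $\|M(z_1,\dots,z_{k+1})\| \lesssim \eta^{-(k+1)}$ by induction (each factor $M$ contributes $\eta^{-1}$ and the chain of length $k+1$ has $k+1$ resolvent slots), with the trace bound~\eqref{eq:Mboundtrace} following by an extra $\|A_k\| \lesssim 1$ and $\langle \cdot \rangle$ costing nothing. The main obstacle I anticipate is the precise propagation of the \emph{regularity} of the argument of $\mathcal{B}_{1,k+1}^{-1}$ through the recursion — i.e.\ verifying that, after peeling off $M(z_1) A_1 (\cdot)$ and the splitting sum, the resulting matrix decomposes into a regular part (on which $\mathcal{B}_{1,k+1}^{-1}$ is bounded) plus a small non-regular remainder, keeping careful track of which two spectral parameters define ``regular'' at each stage and handling the indicator functions $\mathbf{1}_\delta$ in~\eqref{eq:circ def} when some $z_j, z_{j+1}$ are close and others are not. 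This is a bookkeeping-heavy argument where the combinatorics of the $\lfloor k/2\rfloor$ exponent must be matched against all $O(k)$ terms in the recursion; since $k \le 4$, one can if necessary simply enumerate the finitely many cases $k=1,2,3,4$ explicitly rather than proving a clean general statement, which is likely what the proof in Appendix~\ref{app:techlem} does.
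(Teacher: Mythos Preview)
Your overall strategy---induction on $k$, the spectral dichotomy of $\mathcal{B}_{mn}$ into a single small eigenvalue $\beta_{mn}\sim\eta$ in the direction of $M_mM_n$ versus a bounded complement, and case-by-case verification for $k\le 4$---matches the paper's approach. However, your plan to run the induction solely through the defining recursion~\eqref{eq:M_definitionapp} has a genuine gap at $k=3$: a direct norm count on the bracket in~\eqref{eq:M_definitionapp} gives $\|R\|\lesssim\eta^{-1}$ and $|\langle R\rangle|\lesssim\eta^{-1}$, so $\|\mathcal{B}_{14}^{-1}[R]\|\lesssim\eta^{-2}$ rather than the claimed $\eta^{-1}=\eta^{-\lfloor 3/2\rfloor}$. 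You correctly flag this as the main obstacle, but your proposed fix (checking whether the bracket is regular w.r.t.\ $(z_1,z_{k+1})$) is not quite how the paper closes it.

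The paper instead derives \emph{alternative} recursions via a meta argument (those of the form in Eq.~(D.12) of~\cite{iid}), for instance
\[
M(z_1,A_1,z_2,A_2,z_3)=M(z_1,\mathcal{X}_{12}[A_1]M_2A_2,z_3)+M(z_1,\mathcal{X}_{12}[A_1]M_2,z_3)\,\langle M(z_2,A_2,z_3)\rangle,
\]
which reduce a longer chain to \emph{shorter chains with possibly non-regular observables} rather than to one application of $\mathcal{B}_{1,k+1}^{-1}$. This forces an intermediate step you did not anticipate: one must first prove the $k=2$ norm bound in the \emph{mixed} case where exactly one of the two observables is regular. The second key ingredient, which you also do not mention, is a \emph{balancing cancellation}: the matrix $\mathcal{X}_{12}[A_1]M_2$ is not regular w.r.t.\ $(z_1,z_3)$, but its non-regular component is of size $\mathcal{O}(|z_2-z_3|+\eta)$ by the Lipschitz property~\eqref{eq:lipprop}, and this smallness exactly compensates the $\eta^{-1}$ blow-up of the accompanying stability factor. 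So the mechanism is not ``the bracket happens to be regular'' but rather ``the irregular part is small by continuity, and the smallness matches the singularity.''
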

For the presentation of Proposition \ref{pro:mresllaw}, the main technical result underlying the proof of Theorem~\ref{theo:ETH}, we would only need \eqref{eq:Mboundnorm} and \eqref{eq:Mboundtrace} for $k \in [2]$ from the previous lemma. However, the remaining bounds covered by Lemma \ref{lem:Mbound} will be instrumental in several parts of our proofs (see Section \ref{sec:llaw} and Appendix \ref{app:techlem}).

\begin{proposition}
\label{pro:mresllaw}
Fix $\epsilon>0$, $\kappa > 0$, $k\in [2]$, and consider $z_1,\dots,z_{k+1} \in \C \setminus \R$ with $\Re z_j \in \mathbf{B}_\kappa$. Consider regular matrices $A_1,\dots,A_k$ with $\lVert A_i\rVert\le 1$, deterministic vectors ${\bm x}, {\bm y}$ with $\lVert {\bm x}\rVert +\lVert {\bm y} \rVert\lesssim 1$, and set $G_i:=G(z_i)$. Then, uniformly in 
$\eta:=\min_j |\Im z_j|\ge N^{-1+\epsilon}$, we have the \emph{averaged local law}
\begin{subequations}
	\begin{equation}
\label{eq:avellaw}
\big|\langle \big(G_1A_1\dots G_k-M(z_1, A_1, ... ,  z_{k})\big)A_k \rangle\big|\prec \begin{cases}
	\frac{N^{k/2-1}}{\sqrt{N \eta}} \quad &\text{if} \ \eta \le 1 \\
	\frac{1}{N \eta^{k+1}} \quad &\text{if} \ \eta > 1
\end{cases}
	\end{equation}
and the \emph{isotropic local law}
\begin{equation}
\label{eq:isollaw}
\big|\langle {\bm x},\big(G_1A_1\dots G_{k+1}-M(z_1, A_1, ... , z_{k+1})\big) {\bm y} \rangle\big|\prec
 \begin{cases}
	\frac{N^{(k-1)/2}}{\sqrt{N \eta^2}} \quad &\text{if} \ \eta \le 1 \\
	\frac{1}{\sqrt{N} \eta^{k+2}} \quad &\text{if} \ \eta > 1
\end{cases}\,.
\end{equation}
\end{subequations} 
\end{proposition}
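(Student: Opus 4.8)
The plan is to prove Proposition~\ref{pro:mresllaw} by the now-standard strategy of deriving a closed system of \emph{master inequalities} for the (normalized) error quantities in the multi-resolvent local laws, and then closing the system by a bootstrap/iteration in $\eta$ from large $\eta$ down to $\eta \ge N^{-1+\epsilon}$. Concretely, for each $k\in[2]$ introduce the random control quantities
\[
\Psi^{\mathrm{av}}_k := \max_{A_i\ \mathrm{reg}}\ \frac{N\eta}{N^{k/2}}\big|\langle (G_1A_1\cdots G_k - M(z_1,A_1,\ldots,z_k))A_k\rangle\big|\,, \qquad \Psi^{\mathrm{iso}}_k := \max\ \frac{\sqrt{N\eta^2}}{N^{(k-1)/2}}\big|(\ldots)_{\bm{x}\bm{y}}\big|\,,
\]
(with the maxima over bounded regular $A_i$ and unit vectors, and the $\eta\le 1$ normalization; the $\eta>1$ regime is easier and handled separately via the single-resolvent law~\eqref{eq:singlegllaw} and the bounds of Lemma~\ref{lem:Mbound}). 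The core of the argument is a cumulant expansion of $\langle WG_1A_1\cdots\rangle$-type expressions (equivalently, the self-consistent equation obtained from $G = M - M(W-\langle M\rangle (\ldots))G$ applied to the chain), which expresses $\Psi^{\mathrm{av}}_k,\Psi^{\mathrm{iso}}_k$ in terms of lower-order $\Psi$'s, the a priori single-$G$ bound, and the \emph{crucial structural input}: when the $A_i$ are regular with respect to the appropriate pairs of spectral parameters, the leading "dangerous" terms — those that would otherwise carry a full power of $\|\mathcal{B}_{1k}^{-1}\|\sim \eta^{-1}$ — are projected away, because regularity means precisely orthogonality to the unstable direction $\Im M$ of the stability operator (this is the content encoded in Definition~\ref{def:regobs} and the fact that the denominators in~\eqref{eq:circ def} are bounded below). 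This is what produces the gain of $\sqrt\eta$ per resolvent, i.e. the $N^{k/2}$ rather than $\eta^{-(k+1)}$ scaling.

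The key steps, in order: (1) Establish the local laws and $M$-bounds at $\eta\sim 1$ (or $\eta>1$) as the base case — here everything follows from~\eqref{eq:singlegllaw}, Lemma~\ref{lem:Mbound}, and resolvent identities $G(z_1)G(z_2) = (G(z_1)-G(z_2))/(z_1-z_2)$, with the regularity playing no essential role yet. (2) Perform the cumulant expansion for $k=1$ and $k=2$: write $G_1 = M_1 - M_1(W - \langle M_1\rangle)G_1$ wait — more precisely use $\underline{WG} := WG + \langle M G\rangle M$ and expand $\E[\underline{WG_1A_1\cdots G_kA_k}]$ in cumulants of $W$, which generates (i) lower-order chains with extra traces $\langle G\cdots\rangle$, (ii) chains with one more resolvent (feeding into $\Psi^{\mathrm{iso}}$), and (iii) higher-cumulant terms that are subleading in $N$. (3) Extract the self-consistent (linearized) equation: the "ladder" terms reproduce $M(z_1)\langle \cdot\rangle M(z_k)$ acting on the error, i.e. $\mathcal{B}_{1k}$ applied to the error equals lower-order stuff plus fluctuation; invert $\mathcal{B}_{1k}$, using that on the \emph{regular} subspace its inverse is $O(1)$ (not $O(\eta^{-1})$), which is where Definition~\ref{def:regobs} and the Lipschitz property~\eqref{eq:lipprop} are used to re-regularize the shorter chains that appear after contractions. (4) Assemble the master inequalities of the schematic form $\Psi^{\mathrm{av}}_k \prec 1 + (\text{lower order}) + N^{-c}(\Psi^{\mathrm{av}}_k)^{?} + \ldots$, and close them by a Gronwall-type/iteration argument in $\eta$: start from $\eta\sim 1$ where $\Psi\prec 1$, and propagate down to $\eta\ge N^{-1+\epsilon}$ in steps where $\eta$ decreases by a small factor, using continuity of $G$ in $z$ and a grid/union bound. (5) Finally, derive the isotropic law~\eqref{eq:isollaw} either in parallel (its own master inequality, fed by the averaged bounds) or via the standard trick of bounding $|(G_1A_1\cdots G_{k+1})_{\bm{xy}}|^2$ by an averaged chain $\langle G_1A_1\cdots G_{k+1}\bm{y}\bm{y}^* G_{k+1}^*\cdots A_1^* G_1^*\bm{x}\bm{x}^*\rangle$ of roughly double the length (hence needing the $k\le 4$ bounds of Lemma~\ref{lem:Mbound}), then applying the averaged local law.

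I expect the main obstacle to be step (3) — correctly tracking the \emph{energy-dependent} regularization through the cumulant expansion. Unlike the pure Wigner case, where "regular $=$ traceless" is a fixed codimension-one condition, here each contraction in the expansion produces a shorter resolvent chain whose matrices must be re-regularized with respect to a \emph{different}, shifted pair of spectral parameters (the contraction identifies two resolvent slots, effectively changing which $(z_j,z_{j+1})$ the surviving observable "sees"). One must show that the discrepancy between $\mathring{A}^{z_j,z_{j+1}}$ and the regularization demanded by the new chain is a bounded multiple of the identity (by~\eqref{eq:lipprop}), that the identity-part contributions are either lower order or are precisely the ladder terms reconstructing $\mathcal{B}_{1k}$, and that no uncontrolled $\eta^{-1}$ slips through. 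Keeping the bookkeeping of which $\mathbf{1}_\delta(z_j,z_{j+1})$ cutoffs are active — and handling the "mismatched" regime where the $A_i$ happen not to be regular for the parameters that arise (so the correction term is genuinely present but, by Lemma~\ref{lem:Mbound}, still of acceptable size) — is the delicate part. Since we only need $k\in[2]$ with non-optimal norm dependence, we can afford to be somewhat lossy, which should keep the combinatorics manageable relative to the fully optimal treatments in~\cite{multiG, A2, iid}.
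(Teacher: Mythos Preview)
Your proposal is correct and follows essentially the same route as the paper: normalized control quantities $\Psi_k^{\rm av/iso}$, master inequalities derived via cumulant expansion of an underlined expression $\langle\underline{WG}\mathring{A}'\rangle$, re-regularization of observables when spectral parameters shift (using~\eqref{eq:lipprop}), and an iterative bootstrap in $\eta$ down from $\eta\sim 1$.

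One ingredient your plan underemphasises is the \emph{reduction inequalities}: the master inequalities for $\Psi_1^{\rm av/iso},\Psi_2^{\rm av/iso}$ unavoidably feed in $\psi_3^{\rm iso},\psi_4^{\rm av/iso}$ (chains of length up to $2k$ arise from the Gaussian contractions), so you need a separate estimate bounding $\Psi_3^{\rm iso},\Psi_4^{\rm av/iso}$ back in terms of $\psi_2^{\rm av/iso}$ to close the system. Your step~(5) gestures at this via doubling, but note it is needed for the averaged hierarchy as well, not only to pass from averaged to isotropic. The paper also uses a contour-integral representation for $G(z)^2$ (to handle terms like $\langle GGA'GA\rangle$ arising after a contraction hits a resolvent), which lets one reduce products $GG$ at coinciding spectral parameters to single resolvents at nearby parameters before re-regularizing; this is the concrete mechanism behind your step~(3) bookkeeping, and is worth anticipating.
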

In Section \ref{sec:llaw}, we will carry out the proof of Proposition \ref{pro:mresllaw} in the much more involved $\eta \le 1$ regime. For $\eta > 1$, the bound simply follows by induction on the number of resolvents in a chain by invoking the trivial estimate $\Vert M(z) \Vert \lesssim 1/|\Im z|$. The detailed argument has been carried out in \cite[Appendix~B]{multiG} for the case of Wigner matrices. 
Having Proposition \ref{pro:mresllaw} at hand, we can now prove Theorem \ref{theo:ETH}.

\begin{proof}[Proof of Theorem~\ref{theo:ETH}]
By \eqref{eq:avellaw} and \eqref{eq:Mboundtrace} for $k=2$ it follows that
\begin{equation} \label{eq:2Gbddbulk}
|\langle G_1A_1G_2A_2 \rangle|\prec 1\,,
\end{equation}
for arbitrary regular matrices $A_1 = \mathring{A}_1^{z_1, z_2}$ and $A_2 = \mathring{A}_2^{z_2, z_1}$. Now, using that (see \cite[Lemma~3.6]{iid} for an analogous statement; see also \eqref{eq:1ptreg2ptreg} and \eqref{eq:lipprop})
\begin{equation*} 
		\mathring{A}^{\gamma_i} = \mathring{A}^{{\gamma_i \pm \ii \eta, \gamma_j \pm 2\ii \eta}} 
		+ \mathcal{O}\big(|\gamma_i - \gamma_j|+ \eta\big) I
		= \mathring{A}^{{\gamma_i \pm \ii \eta, \gamma_j \mp2\ii \eta}} + \mathcal{O}\big(|\gamma_i - \gamma_j|+ \eta\big)I\,,
\end{equation*}
and analogously for $(\mathring{A}^*)^{\gamma_i}$, we obtain (cf.~\cite[Sec. 3.3.1]{iid})
\begin{equation*}
 \langle \Im G(\gamma_i + \ii \eta) \mathring{A}^{\gamma_i} \Im G(\gamma_j + 2\ii \eta)  (\mathring{A}^*)^{\gamma_i} \rangle \prec 1\,. 
\end{equation*}
Moreover, by spectral decomposition, together with the rigidity of eigenvalues (see e.g. \cite{firstcorr, slowcorr}) it follows that (cf.~\cite[Lemma~3.5]{iid})
\begin{equation} \label{eq:overlap2G}
N | \langle  {\bm u}_i,  \mathring{A}^{\gamma_i}  {\bm u}_j \rangle|^2 \prec (N \eta)^2\langle \Im G(\gamma_i + \ii \eta) \mathring{A}^{\gamma_i} \Im G(\gamma_j + 2\ii \eta)  (\mathring{A}^*)^{\gamma_i} \rangle \prec (N \eta)^2\,. 
\end{equation}
Choosing $\eta=N^{-1+\xi/2}$ for some arbitrary small $\xi>0$, we conclude the desired. 
\end{proof}

\section{Dyson Brownian motion: Proof of Theorem~\ref{theo:newCLT}}
\label{sec:proofETHfluct}

The main observation we used to prove Theorem~\ref{theo:ETH} in Section~\ref{sec:proofETH} is the relation \eqref{uAu}, i.e. we related the eigenvector overlaps with a trace of the product of two resolvents and two deterministic matrices. For Theorem~\ref{theo:ETH} we only needed an upper bound on the size of the eigenvector overlaps, however to prove Theorem~\ref{theo:newCLT} we need to identify their size. For this purpose the main input is the relation
\begin{equation}
\label{eq:relres}
\frac{1}{N^{2\epsilon}} \sum_{|i-i_0|\le N^\epsilon \atop |j-j_0|\le N^\epsilon}N| \langle  {\bm u}_i,  \mathring{A}^{\gamma_i}  {\bm u}_j \rangle|^2 \sim \langle \Im G(\gamma_{i_0} + \ii \eta) \mathring{A}^{\gamma_{i_0}} \Im G(\gamma_{j_0} + 2\ii \eta)  (\mathring{A}^*)^{\gamma_{i_0}}\rangle,
\end{equation}
with $\eta=N^{-1+\epsilon}$, for some small fixed $\epsilon>0$, and $i_0,j_0$ being some fixed bulk indices. The relation \eqref{eq:relres} is clearly not enough to identify the fluctuations of the individual eigenvector overlaps,  but it gives a hint on the expression of the variance of these overlaps. More precisely, to identify the fluctuations of $N | \langle  {\bm u}_i,  \mathring{A}^{\gamma_i}  {\bm u}_i\rangle|^2$ we will rely on a Dyson Brownian motion analysis which will reveal that
\begin{equation}
\label{eq:reltores}
N \E [| \langle  {\bm u}_i,  \mathring{A}^{\gamma_i}  {\bm u}_i \rangle|^2] \approx \frac{1}{\langle \Im M(\gamma_i) \rangle^2} \E\langle \Im G(\gamma_i + \ii \eta) \mathring{A}^{\gamma_i} \Im G(\gamma_i + 2\ii \eta)  (\mathring{A}^*)^{\gamma_i}\rangle,
\end{equation}
and a similar relation holds for higher moments as well. Finally, the rhs. of \eqref{eq:reltores} is computed using a multi--resolvent local law (see e.g. \eqref{eq:avellaw} for $k=2$), and after some algebraic manipulation (see \eqref{eq:longcomp}--\eqref{eq:explain} below) this results in $\mathrm{Var}_{\gamma_i}(A)$ as defined in \eqref{eq:VarAgammai}.

Given the optimal a priori bound \eqref{eq:ETH}, the proof of Theorem~\ref{theo:newCLT} is very similar to the analysis of the \emph{Stochastic Eigenstate Equation (SEE)} in \cite[Sections 3-4]{normalfluc} and \cite[Section 4]{A2}. Even if very similar to those papers, to make the presentation clearer, here we write out the main steps of the proof and explain the differences, but we do not write the details; we defer the interested reader to \cite{normalfluc}. We also remark that the proof in \cite{normalfluc, A2} heavily relies on the analysis of SEE developed in \cite{BourgadeYau1312.1301} and extend in \cite{BourgadeYauYin1807.01559, MarcinekYau2005.08425}.

Similarly to \cite{normalfluc, A2} we only consider the real case, the complex case is completely analogous and so omitted. We prove Theorem~\ref{theo:newCLT} dynamically, i.e. we consider the flow
\begin{equation}
\label{eq:matDBM}
\dif W_t=\frac{\dif \widetilde{B}_t}{\sqrt{N}}\,, \qquad W_0=W\,,
\end{equation}
with $\widetilde{B}_t$ a real symmetric matrix valued Brownian motion (see e.g. \cite[Definition 2.1]{BourgadeYau1312.1301}). Note that $W_t$ has a Gaussian component of size $\sqrt{t}$, i.e.
\[
W_t\stackrel{\dif}{=}W_0+\sqrt{t}U\,,
\]
with $U$ being a GOE matrix independent of $W_0$. Denoting by $\lambda_i(t)$ the eigenvalues of $W_t$ (labeled in increasing order) and by ${\bm u}_i(t)$ the corresponding orthonormal eigenvectors, we will prove Theorem~\ref{theo:newCLT} for the eigenvectors ${\bm u}_i(T)$, with $T=N^{-1+\omega}$, for some small fixed $\omega>0$. Since $T$ is very small, the Gaussian component added in the flow \eqref{eq:matDBM} can easily be removed by a standard Green function comparison (GFT) argument as in \cite[Appendix B]{A2}.

By \cite{BourgadeYau1312.1301}, it is known that the eigenvalues $\lambda_i(t)$ and the eigenvectors ${\bm u}_i(t)$ are the unique strong solution of the following system of stochastic differential equations (SDEs):
\begin{align}
    \label{eq:evaluflow}
    \dif \lambda_i(t)&=\frac{\dif B_{ii}(t)}{\sqrt{N}}+\frac{1}{N}\sum_{j\ne i} \frac{1}{\lambda_i(t)-\lambda_j(t)} \dif t \\\label{eq:evectorflow}
    \dif {\bm u}_i(t)&=\frac{1}{\sqrt{N}}\sum_{j\ne i} \frac{\dif B_{ij}(t)}{\lambda_i(t)-\lambda_j(t)}{\bm u}_j(t)-\frac{1}{2N}\sum_{j\ne i} \frac{{\bm u}_i}{(\lambda_i(t)-\lambda_j(t))^2}\dif t\,, 
\end{align}
where the matrix $B(t)=(B_{ij}(t))_{i,j=1}^N$ is a standard real symmetric Brownian motion (see e.g. \cite[Definition 2.1]{BourgadeYau1312.1301}).

Even if in Theorem~\ref{theo:newCLT} we want to prove a CLT only for diagonal overlaps $\langle {\bm u}_i, A{\bm u}_i\rangle$, by \eqref{eq:evectorflow}, it follows that there is no closed equation for such quantities. For this reason, following \cite[Section 2.3]{BourgadeYauYin1807.01559}, we study the evolution of the \emph{perfect matching observable} (see \eqref{eq:deff} below) along the flow \eqref{eq:evectorflow}.

\subsection{Perfect matching observable and proof of Theorem \ref{theo:newCLT}}

We introduce the notation
\begin{equation}
\label{eq:defpij}
p_{ij}=p_{ij}(t)=\langle {\bm u}_i,A {\bm u}_j\rangle-\delta_{ij}C_0\,, 
\end{equation}
with $A$ being a fixed real symmetric deterministic matrix $A$ and $C_0$ being a fixed constant independent of $i$. Note that compared to \cite{normalfluc, A2} in \eqref{eq:defpij} we define the diagonal $p_{ii}$ without subtracting their expectation (see \eqref{eq:ETH} above), but rather a generic constant $C_0$ which we will choose later (see \eqref{eq:choicec0c1} below). The reason behind this choice is that in the current setting, unlike in  the Wigner case \cite{normalfluc, A2}, the expectation of $p_{ii}$ is now $i$--dependent, hence the flow \eqref{eq:1dequa} below would not be satisfied if we had defined \eqref{eq:deff} with the centred $p_{ii}$'s.

To study moments of the $p_{ij}$'s we use the \emph{particle representation} introduced in  \cite{BourgadeYau1312.1301} and further developed in \cite{BourgadeYauYin1807.01559, MarcinekYau2005.08425}.. A particle configuration, corresponding to a certain monomials of $p_{ij}$'s, can be encoded by a function $\bm{\eta}:[N]\to\N_0$. The image $\eta_j=\bm{\eta}(j)$ denotes the number of particle at the site $j$, and $\sum_j\eta_j=n$ denotes the total number of particles. Additionally, given a particle configuration $\bm{\eta}$, by $\bm{\eta}^{ij}$, with $i\ne j$, we denote a new particle configuration in which a particle at the site $i$ moved to a new site $j$, if there is no particle in $i$ then $\bm{\eta}^{ij}={\bm \eta}$. We denote the set of such configuration by $\Omega^n$.

Fix a configuration $\bm{\eta}$, then we define the \emph{perfect matching observable} (see \cite[Section 2.3]{BourgadeYauYin1807.01559}):
\begin{equation}
    \label{eq:deff}
    f_{{\bm \lambda},t,C_0,C_1}({\bm \eta}):= \frac{N^{n/2}}{ [2 C_1]^{n/2}} \frac{1}{(n-1)!!}\frac{1}{\mathcal{M}({\bm \eta}) }\E\left[\sum_{G\in\mathcal{G}_{\bm \eta}} P(G)\Bigg| 
    {\bm \lambda}\right] \,, \quad \mathcal{M}({\bm \eta}):=\prod_{i=1}^N (2\eta_i-1)!!\,,
\end{equation}
with $n$ being the total number of particles in the configuration \({\bm \eta}\).  The sum in \eqref{eq:deff} is taken over $\mathcal{G}_{\bm \eta}$, which denotes the set of perfect matchings on the complete graph with vertex set
\[
\mathcal{V}_{\bm \eta}:=\{(i,a): 1\le i\le n, 1\le a\le 2\eta_i\}\,. 
\]
We also introduced the short--hand notation
\begin{equation}
    \label{eq:pg}
    P(G):=\prod_{e\in\mathcal{E}(G)}p(e), \qquad p(e):=p_{i_1i_2}\,,
\end{equation}
where  $e=\{(i_1,a_1),(i_2,a_2)\}\in \mathcal{V}_{\bm \eta}^2$, and $\mathcal{E}(G)$ denotes the edges of $G$. Note that in \eqref{eq:deff} we took the conditional  expectation with respect to the entire trajectories of the eigenvalues, \({\bm \lambda} =\{\bm \lambda(t)\}_{t\in [0,T]}\) for some  fixed  $0<T\ll 1$. We also remark that the definition \eqref{eq:deff} differs slightly from \cite[Eq. (3.9)]{normalfluc} and \cite[Eq. (4.6)]{A2}, since we now do not normalise by $\langle (A-\langle A\rangle)^2\rangle$ but using a different constant $C_1$ which we will choose later in the proof (see \eqref{eq:choicec0c1} below); this is a consequence of the fact that the diagonal overlaps $p_{ii}$ are not correctly centred and normalised. Note that we did not incorporate the factor $2$ in \eqref{eq:deff} into the constant $C_1$, since $C_1$ will be chosen has a normalisation constant to compensate the size of the matrix $A$, whilst the factor $2$ represents the fact that diagonal overlaps, after the proper centering and normalisation depending on $A$ and $i$, would be centred Gaussian random variable of variance two. Furthermore, we consider eigenvalues paths $\{\bm \lambda(t)\}_{t\in [0,T]}$ which lie in the event 
\begin{equation}
\label{def:Omega}
    \widetilde{\Omega}=\widetilde{\Omega}_\xi,:= \Big\{ \sup_{0\le t \le T} \max_{i\in [N]} \eta_{\mathrm{f}}(\gamma_i(t))^{-1}| \lambda_i(t)-\gamma_i(t)| \le N^\xi\Big\}
\end{equation}
for any $\xi>0$, where $\eta_{\mathrm{f}}(\gamma_i(t))$ is the local fluctuation scale defined as in \cite[Definition 2.4]{cuspI}. In most instances we will use this rigidity estimate in the bulk regime when $\eta_{\mathrm{f}}(\gamma_i(t))\sim N^{-1}$; at the edges $\eta_{\mathrm{f}}(\gamma_i(t))\sim N^{-2/3}$. We recall that here $\gamma_i(t)$ denote the quantiles of $\rho_t$ defined as in \eqref{eq:quant}. The fact that the event $ \widetilde{\Omega}$ holds with very high probability follows by \cite[Corollary 2.9]{edgelocallaw}.

By~\cite[Theorem 2.6]{BourgadeYauYin1807.01559} it follows that $f_{{\bm \lambda},t}$ is a solution of the parabolic discrete partial differential equation (PDE):
\begin{align}
\label{eq:1dequa}
    \partial_t f_{{\bm \lambda},t}&=\mathcal{B}(t)f_{{\bm \lambda},t}\,, \\\label{eq:1dkernel}
    \mathcal{B}(t)f_{{\bm \lambda},t}&=\sum_{i\ne j} c_{ij}(t) 2\eta_i(1+2\eta_j)\big(f_{{\bm \lambda},t}({\bm \eta}^{ij})-f_{{\bm \lambda},t}({\bm \eta})\big)\,. 
\end{align}
where 
\begin{equation}\label{eq:defc}
    c_{ij}(t):= \frac{1}{N(\lambda_i(t) -  \lambda_j(t))^2}\,. 
\end{equation}
In the remainder of this section we may often omit $\bm{\lambda}$ from the notation since the paths of the eigenvalues are fixed within this proof.

The main result of this section is the following Proposition \ref{pro:flucque}, which will readily prove Theorem~\ref{theo:newCLT}. For this purpose we define a version of $f_t(\bm{\eta})$ with centred and rescaled $p_{ii}$:
\begin{equation}
\label{eq:defq}
q_{{\bm \lambda},t}({\bm \eta}):= \left(\prod_{i=1}^N\frac{1}{ \mathrm{Var}_{\gamma_i}(A)^{\eta_i/2}}\right) \frac{N^{n/2}}{2^{n/2} (n-1)!!}\frac{1}{\mathcal{M}({\bm \eta}) }\E\left[\sum_{G\in\mathcal{G}_{\bm \eta}} Q(G)\Bigg| 
    {\bm \lambda}\right]
\end{equation}
with $\mathring{A}^{\gamma_i}$ denoting the regular component of $A$ defined as in \eqref{eq:defreg1}:
\[
 \mathring{A}^{\gamma_i}:=A-\frac{\langle A\Im M(\gamma_i)\rangle }{\langle \Im M(\gamma_i)\rangle},
 \]
and
\begin{equation}
\label{eq:qg}
    Q(G):=\prod_{e\in\mathcal{E}(G)}q(e), \qquad q(e):=\langle {\bm u}_{i_1},\mathring{A}^{\gamma_{i_1}} {\bm u}_{i_2}\rangle.
\end{equation}
Note that the definition in \eqref{eq:qg} is not asymmetric for $i_1\ne i_2$, since in this case $\langle {\bm u}_{i_1},\mathring{A}^{\gamma_{i_1}} {\bm u}_{i_2}\rangle=\langle {\bm u}_{i_1},\mathring{A}^{\gamma_{i_2}} {\bm u}_{i_2}\rangle$. 

We now comment on the main difference between $q_t$ and $f_t$ from \eqref{eq:defq} and \eqref{eq:deff}, respectively. First of all we notice the $q(e)$'s in \eqref{eq:qg} are slightly different compared with the $p(e)$'s from \eqref{eq:defpij}. In particular, we choose the $q(e)$'s in such a way that the diagonal overlaps have very small expectation (i.e. much smaller than their fluctuations size). The price to pay for this choice is that the centering is $i$--dependent, hence $q_t$ is not a solution of an equation of the form \eqref{eq:1dequa}--\eqref{eq:1dkernel}. We also remark that later within the proof, $C_0$ from \eqref{eq:defpij} will be chosen as 
\[
C_0 =  \frac{\langle A\Im M(\gamma_{i_0})\rangle}{\langle \Im M(\gamma_{i_0})\rangle}
\]
 for some fixed $i_0$ such that $\gamma_{i_0}\in\mathbf{B}_\kappa$ is in the bulk (recall \eqref{eq:bulk}). The idea behind this choice is that the analysis of the flow \eqref{eq:1dequa}--\eqref{eq:1dkernel} will be completely local, we can thus fix a base point $i_0$ and ensure that the corresponding  overlap is exactly centred, then the nearby overlaps for indices $|i-i_0|\le K$, for some $N$--dependent $K>0$, will not be exactly centred, but their expectation will be very small compared to the size of their fluctuations:
\[
\frac{\langle A\Im M(\gamma_{i_0})\rangle}{\langle \Im M(\gamma_{i_0})\rangle}-\frac{\langle A\Im M(\gamma_i)\rangle}{\langle \Im M(\gamma_i)\rangle}=\mathcal{O}\left(\frac{K}{N}\right)\,. 
\]
A consequence of this choice is also that the normalisation for $q_t$ and $f_t$ is different: for $q_t$ we chose a normalisation that is $i$'s dependent, whilst for $f_t$ the normalisation $C_1$ is $i$--independent and later, consistently with the choice of $C_0$, it will be chosen as
\[
C_1=\mathrm{Var}_{\gamma_{i_0}}(A)^{n/2},
\]
which is exactly the normalisation that makes $f_t(\bm{\eta})=1$ when $\bm{\eta}$ is such that $\eta_{i_0}=n$ and zero otherwise.

\begin{proposition}\label{pro:flucque}
    For any $n\in\mathbf{N}$ there exists $c(n)>0$ such that for any $\epsilon>0$, and for any $T\ge N^{-1+\epsilon}$ it holds
    \begin{equation}\label{eq:mainbthissec}
        \sup_{{\bm \eta}}\left|q_T({\bm\eta})- \bm1(n\,\, \mathrm{even})\right|\lesssim N^{-c(n)}\,,
    \end{equation}
    with very high probability.  The supremum is taken over configurations ${\bm \eta}$ supported on bulk indices and the implicit constant in~\eqref{eq:mainbthissec} depends on $n$ and $\epsilon$.
\end{proposition}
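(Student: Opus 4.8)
We prove Proposition~\ref{pro:flucque} dynamically, essentially following~\cite[Sections~3--4]{normalfluc} and~\cite[Section~4]{A2} (which in turn rest on the Stochastic Eigenstate Equation analysis of~\cite{BourgadeYau1312.1301, BourgadeYauYin1807.01559, MarcinekYau2005.08425}); we only indicate the modifications forced by the \emph{energy-dependence} of the regularisation $\mathring{A}^{\gamma_i}$, and work in the real case only. As explained above, by the Green function comparison argument of~\cite[Appendix~B]{A2} it is enough to establish~\eqref{eq:mainbthissec} for the eigenvectors ${\bm u}_i(T)$ of $W_T+D$ at time $T=N^{-1+\omega}$ with $\omega>0$ small; the added Gaussian component is then removed at negligible cost since $T$ is tiny. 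We work throughout on the very high probability event $\widetilde{\Omega}$ of~\eqref{def:Omega}, on which eigenvalue rigidity holds, and we use as a black box that, since $W_t+D$ is a deformed Wigner matrix for every $t\in[0,T]$, Theorem~\ref{theo:ETH} supplies the \emph{a priori} bound $|\langle {\bm u}_i(t),\mathring{A}^{\gamma_i(t)}{\bm u}_j(t)\rangle|\prec N^{-1/2}$ uniformly in $t$; hence $f_{{\bm\lambda},t}({\bm\eta})$ and $q_{{\bm\lambda},t}({\bm\eta})$ are bounded (up to $N^\xi$) along the whole flow.

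The heart of the proof is the relaxation estimate for the parabolic flow $\partial_t f_{{\bm\lambda},t}=\mathcal{B}(t)f_{{\bm\lambda},t}$ of~\eqref{eq:1dequa}--\eqref{eq:1dkernel}. Since $f$ is defined with the $i$-independent centering $C_0$ and normalisation $C_1$, it is an \emph{exact} solution of this equation, so the standard machinery applies verbatim: \emph{(i)} finite speed of propagation, which shows that for ${\bm\eta}$ supported near a fixed bulk base index $i_0$ the value $f_T({\bm\eta})$ depends, up to $N^{-D}$, only on the initial values of $f$ on configurations supported within a window of $N^{\omega'}$ consecutive indices around $i_0$ (with $\omega<\omega'<1/2$); \emph{(ii)} comparison of $\mathcal{B}(t)$ with a reference generator having a regular kernel, the difference being controlled via the bounds on $c_{ij}(t)$ from rigidity; \emph{(iii)} the maximum principle together with the $L^2\to L^\infty$ ultracontractivity of the reference flow, which yields that $f_T$ is \emph{flat} on such mesoscopic windows, i.e.~$|f_T({\bm\eta})-f_T({\bm\eta}')|\le N^{-c(n)}$ whenever ${\bm\eta},{\bm\eta}'$ are supported in the window. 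The only changes relative to~\cite{normalfluc, A2} are cosmetic: the semicircular quantiles are replaced by the quantiles $\gamma_i$ of the scDos $\rho$, and rigidity is taken from~\cite[Corollary~2.9]{edgelocallaw}.

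It remains to pin down the constant to which $f_T$, and then $q_T$, relaxes. For $n=2$ one averages $f_T$ over the configurations with one particle at $i$ and one at $j$, for $i,j$ in the window; this is a constant times $N\,\E[p_{ii}p_{jj}+2p_{ij}^2\mid{\bm\lambda}]$. The diagonal term is lower order (the $p_{ii}$ have mean $\mathcal{O}(N^{\omega'-1})$ and size $\prec N^{-1/2}$, so the squared window average of the $p_{ii}$ is negligible compared to $N^{-1}$), while the window average of $\sum_{i,j}N|\langle{\bm u}_i,\mathring{A}^{\gamma_i}{\bm u}_j\rangle|^2$ is, by~\eqref{eq:relres}--\eqref{eq:reltores}, comparable to $\langle\Im G(\gamma_{i_0}+\ii\eta)\mathring{A}^{\gamma_{i_0}}\Im G(\gamma_{i_0}+2\ii\eta)(\mathring{A}^*)^{\gamma_{i_0}}\rangle$ with $\eta=N^{-1+\epsilon}$, which the averaged two-resolvent local law~\eqref{eq:avellaw} evaluates; the algebra in~\eqref{eq:longcomp}--\eqref{eq:explain} identifies it with $\langle\Im M(\gamma_{i_0})\rangle^2\,\mathrm{Var}_{\gamma_{i_0}}(A)$, so that with the chosen normalisation the window average equals $1$. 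Combining with flatness gives $f_T({\bm\eta})=\bm1(n\,\,\mathrm{even})+\mathcal{O}(N^{-c(n)})$ for $n=2$; for higher $n$ the perfect-matching structure of~\eqref{eq:deff} feeds this pair correlation through the flow, reproducing the Gaussian (Wick) moments, i.e.~$(n-1)!!$ for even $n$ and $0$ for odd $n$ (the latter by the sign symmetry ${\bm u}_k\mapsto\pm{\bm u}_k$), which after the $1/(n-1)!!$ factor is $\bm1(n\,\,\mathrm{even})$. Finally, $q_T({\bm\eta})$ is recovered from $f_T({\bm\eta})$ by restoring the $i$-dependent centering and normalisation on the window: since $\langle A\Im M(\gamma_i)\rangle/\langle\Im M(\gamma_i)\rangle-C_0=\mathcal{O}(|\gamma_i-\gamma_{i_0}|)=\mathcal{O}(N^{\omega'-1})$ and $\mathrm{Var}_{\gamma_i}(A)-\mathrm{Var}_{\gamma_{i_0}}(A)=\mathcal{O}(N^{\omega'-1})$ for indices in the window (here the Lipschitz property~\eqref{eq:lipprop} of the regularisation enters), replacing each $p(e)$ by $q(e)$ changes $f_T({\bm\eta})$ by $\mathcal{O}(N^{-1/2+\omega'})=\mathcal{O}(N^{-c(n)})$. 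Since $i_0$ was an arbitrary bulk index and all estimates are uniform, this proves~\eqref{eq:mainbthissec}.

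The main obstacle is precisely this last point: unlike the Wigner case of~\cite{normalfluc, A2}, where the centred observable $A-\langle A\rangle$ is the same for every eigenvector, here the coefficient $\mathring{A}^{\gamma_i}$ attached to the $i$-th particle varies with $i$, so $q_t$ is genuinely \emph{not} a solution of~\eqref{eq:1dequa}. The remedy --- running the flow with the $i$-independent proxy $f_t$ and transferring back to $q_T$ only at the final time on the finite-speed-of-propagation window, where~\eqref{eq:lipprop} makes the discrepancy $o(1)$ --- is what makes the argument go through, and checking that the window is short enough (i.e.~$\omega'<1/2$, so that the $\mathcal{O}(N^{\omega'-1})$ centring errors beat the typical overlap size $N^{-1/2}$) is the one genuinely new point beyond~\cite{normalfluc}.
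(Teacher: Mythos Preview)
Your proposal is correct and follows essentially the same route as the paper: finite speed of propagation, an $L^2$--estimate on a short--range approximation of the flow, and an $L^2\to L^\infty$ ultracontractivity step, with the energy--dependent regularisation handled by localising to a mesoscopic window around a base index $i_0$ and using the Lipschitz property~\eqref{eq:lipprop}.

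The only organisational difference worth flagging is where the conversion between the $i$--independent observable $f_t$ (or $g_t$) and the $i$--dependent $q_t$ (or $r_t$) takes place. You run the exact flow for $f_t$ and swap to $q_T$ only at the final time on the short window. The paper instead defines the short--range flow $h_t$ in~\eqref{g-1} with initial data built from $r_0$, and performs the swap $r_0\leftrightarrow g_0$ (with the same $\mathcal{O}(K/\sqrt{N})$ error, cf.~\eqref{eq:rgclose}) \emph{inside} the proof of the $L^2$--bound (Proposition~\ref{prop:mainimprov}); the multi--resolvent local law is then invoked there, via the averaging operator $\mathcal{A}(t)$ of~\eqref{eq:defAgen} and the Dirichlet--form comparison of Lemma~\ref{lem:replacement}, to identify the leading term directly within the energy inequality~\eqref{eq:finb}, rather than by a separate ``pin down the constant'' step after flatness. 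Both orderings are equivalent; the paper's packaging has the advantage that the identification of the limit and the smallness of the remainder come out of one Gronwall argument, while your version makes the logical structure (flatness, then identification, then swap) more transparent.
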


\begin{proof}[Proof of Theorem~\ref{theo:newCLT}]

Fix $n\in\N$, an index $i$ such that $\gamma_i \in \mathbf{B}_\kappa$ is in the bulk, and choose a configuration $\bm{\eta}$ such that $\eta_i=n$ and $\eta_j=0$ for any $j\ne i$. Then by Proposition~\ref{pro:flucque}, we conclude that
\[
\E\left[\sqrt{\frac{N}{2\mathrm{Var}_{\gamma_i}(A)}}\langle {\bm u}_i(T),\mathring{A}^{\gamma_i}{\bm u_i}(T)\rangle\right]^n=\bm1(n\,\, \mathrm{even})(n-1)!!+\mathcal{O}\left( N^{-c(n)}\right),
\]
with $T=N^{-1+\epsilon}$, for some very small fixed $\epsilon>0$, and $c(n)>0$. Here $\mathring{A}^{\gamma_i}$ is defined in \eqref{eq:defreg1} and $\mathrm{Var}_{\gamma_i}(A)$ is defined in \eqref{eq:VarAgammai}. Then, by a standard GFT argument (see e.g. \cite[Appendix B]{A2}), we se that
\[
\E\left[\sqrt{\frac{N}{2\mathrm{Var}_{\gamma_i}(A)}}\langle {\bm u}_i(T),\mathring{A}^{\gamma_i}{\bm u_i}(T)\rangle\right]^n=\E\left[\sqrt{\frac{N}{2\mathrm{Var}_{\gamma_i}(A)}}\langle {\bm u}_i(0),\mathring{A}^{\gamma_i}{\bm u_i}(0)\rangle\right]^n+\mathcal{O}\left( N^{-c(n)}\right).
\]
This shows that the Gaussian component added by the dynamics \eqref{eq:matDBM} can be removed at the price of a negligible error implying \eqref{eq:standgauss}.

The lower bound on the variance~\eqref{eq:Varlowerbound} is an explicit calculation relying on the 
the definition of $M$ from~\eqref{eq:MDE}. In particular, 
 we use that
 \begin{itemize}
 \item[(i)] $A$ and hence $\mathring{A}^{\gamma_i}$ are self-adjoint; 
 \item[(ii)] $\Im M(\gamma_i) \ge g$ for some 
 $g = g(\kappa, \Vert D \Vert) > 0$ since we are in the bulk;
\item[(iii)]   $\langle \mathring{A}^{\gamma_i} \Im M(\gamma_i)\rangle = 0$ by definition of the
regularisation;
\item[(iv)] $[\Re M(\gamma_i), \Im M(\gamma_i)] = 0$ from~\eqref{eq:MDE}.
\end{itemize}
 Then, after writing $\Var_{\gamma_i}(A)$ as a sum of squares and abbreviating $\Im M = \Im M(\gamma_i)$, we find
	\begin{equation*} 
\Var_{\gamma_i}(A) \ge \frac{\big\langle \big(\sqrt{\Im M} \big[ A  - \frac{\langle A (\Im M)^2\rangle}{ \langle (\Im M)^2\rangle}\big]\sqrt{\Im M} \big)^2 \big\rangle}{\langle (\Im M)^2\rangle} \ge g^2 \frac{\big\langle  \big[ A  - \frac{\langle A (\Im M)^2\rangle}{ \langle (\Im M)^2\rangle}\big]^2 \big\rangle}{\langle (\Im M)^2\rangle} \ge \frac{g^2}{\langle (\Im M)^2\rangle}  \langle (A - \langle A \rangle)^2 \rangle\,,
	\end{equation*}
	where in the last step we used the trivial variational principle $\langle (A - \langle A \rangle)^2 \rangle = \inf_{t \in \R} \langle ( A - t )^2 \rangle$. 
This completes the proof of Theorem~\ref{theo:newCLT}.
\end{proof}

\subsection{DBM analysis}

Similarly to  \cite[Section 4.1]{normalfluc} and \cite[Section 4.2]{A2} we introduce an equivalent particle representation to encode moments of the $p_{ij}$'s. In particular, here, and previously in \cite{normalfluc, A2}, we relied on the particle representation \eqref{eq:deflambda}--\eqref{xeta} below since our arguments heavily builds on \cite{MarcinekYau2005.08425}, which use this latter representation.

Consider a particle configuration ${\bm \eta}\in\Omega^n$, for some fixed $n\in\N$, i.e. ${\bm \eta}$ is such that $\sum_j\eta_j=n$. We now define the new configuration space
\begin{equation}
\label{eq:deflambda}
    \Lambda^n:= \{ {\bm x}\in [N]^{2n} \, : \,\mbox{\(n_i({\bm x})\) is even for every \(i\in [N]\)} \big\},
\end{equation}
where
\begin{equation}
    n_i({\bm x}):=|\{a\in [2n]:x_a=i\}|
\end{equation}
for all $i\in \N$. 

By the correspondence 
\begin{equation}\label{xeta}
    {\bm \eta} \leftrightarrow {\bm x}\qquad \eta_i=\frac{n_i( {\bm x})}{2}\,. 
\end{equation}
it is easy to see that these two representations are basically equivalent. The only difference is that \({\bm x}\) uniquely determines \({\bm \eta}\), but \({\bm \eta}\) determines only the coordinates of \({\bm x}\) as a multi-set and not its ordering.

From now on, given a function $f$ defined on $\Omega^n$, we will always consider functions $g$ on $\Lambda^n\subset[N]^{2n}$ defined by
\[
f({\bm \eta})= f(\phi({\bm x}))= g({\bm x}),
\]
with \(\phi\colon\Lambda^n\to \Omega^n\), \(\phi({\bm x})={\bm \eta}\)  being the projection from the \({\bm x}\)-configuration space to the \({\bm \eta}\)-configuration space using~\eqref{xeta}. We thus defined the observable
\begin{equation}\label{eq:defg}
    g_t({\bm x})=g_{{\bm \lambda},t}({\bm x}):= f_{{\bm \lambda},t}( \phi({\bm x}))\,,
\end{equation}
with \( f_{{\bm \lambda},t}\) from~\eqref{eq:deff}. Note that $g_t({\bm x})$ is equivariant under permutation of the 
arguments, i.e.~it depends on \({\bm x}\) only as a multi--set. Similarly we define
\begin{equation}
\label{eq:defr}
    r_t({\bm x})=r_{{\bm \lambda},t}({\bm x}):= q_{{\bm \lambda},t}( \phi({\bm x}))\,. 
\end{equation}
We remark that $g_t$ and $r_t$ are the counterpart of $f_t$ and $q_t$, respectively, in the $\bm{x}$-configuration space.

We can thus now write the flow~\eqref{eq:1dequa}--\eqref{eq:1dkernel} in the $\bm{x}$--configuration space:
\begin{align}\label{eq:g1deq}
    \partial_t g_t({\bm x})&=\mathcal{L}(t)g_t({\bm x}) \\\label{eq:g1dker}
    \mathcal{L}(t):=\sum_{j\ne i}\mathcal{L}_{ij}(t), \quad \mathcal{L}_{ij}(t)g({\bm x}):&= c_{ij}(t) \frac{n_j({\bm x})+1}{n_i({\bm x})-1}\sum_{a\ne b\in[2 n]}\big(g({\bm x}_{ab}^{ij})-g({\bm x})\big),
\end{align}
where
\begin{equation}
    \label{eq:jumpop}
    {\bm x}_{ab}^{ij}:={\bm x}+\delta_{x_a i}\delta_{x_b i} (j-i) ({\bm e}_a+{\bm e}_b),
\end{equation} 
with ${\bm e}_a\in \R^{2n}$ denoting the standard unit vector, i.e.~${\bm e}_a(b)=\delta_{ab}$. We remark that this flow is  map  on functions  defined on \(\Lambda^n\subset [N]^{2n}\) which preserves equivariance.

For the following analysis it is convenient to define the scalar product and the natural measure on $\Lambda^n$:
\begin{equation}\label{eq:scalpro}
    \langle f, g\rangle_{\Lambda^n}=\langle f, g\rangle _{\Lambda^n, \pi}:=\sum_{{\bm x}\in \Lambda^n}\pi({\bm x})\bar f({\bm x})g({\bm x}),
    \qquad \pi({\bm x}):=\prod_{i=1}^N ((n_i({\bm x})-1)!!)^2,
\end{equation}
as well as the norm on $L^p(\Lambda^n)$:
\begin{equation}
    \lVert f\rVert_p=\lVert f\rVert_{L^p(\Lambda^n,\pi)}:=\left(\sum_{{\bm x}\in \Lambda^n}\pi({\bm x})|f({\bm x})|^p\right)^{1/p}.
\end{equation}

The operator $\mathcal{L}=\mathcal{L}(t)$  is symmetric with respect to the measure $\pi$
and it is a negative in $L^2(\Lambda^n)$, with associated Dirichlet form (see~\cite[Appendix A.2]{MarcinekThesis}):
\[
D(g)=\langle g, (-\mathcal{L}) g\rangle_{\Lambda^n} = \frac{1}{2}  \sum_{{\bm x}\in \Lambda^n}\pi({\bm x})
\sum_{i\ne j} c_{ij}(t) \frac{n_j({\bm x})+1}{n_i({\bm x})-1}
\sum_{a\ne b\in[2 n]}\big|g({\bm x}_{ab}^{ij})-g({\bm x})\big|^2.
\]
Finally, by $\mathcal{U}(s,t)$ we denote the semigroup associated to $\mathcal{L}$, i.e.\ for any $0\le s\le t$ it holds
\begin{equation}
\label{eq:defsemigroup}
\partial_t\mathcal{U}(s,t)=\mathcal{L}(t)\mathcal{U}(s,t), \quad \mathcal{U}(s,s)=I.
\end{equation}

\subsection{Short range approximation}

As a consequence of the singularity of the coefficients $c_{ij}(t)$ in \eqref{eq:g1dker}, the main contribution to the flow \eqref{eq:g1deq} comes from nearby eigenvalues, hence its analysis will be completely local. For this purpose we define the sets
\begin{equation}
\label{eq:defintJ}
    \mathcal{J}=\mathcal{J}_\kappa:=\{ i\in [N]:\, \gamma_i(0)\in \mathbf{B}_{\kappa}\}, 
\end{equation}
which correspond to indices with quantiles $\gamma_i(0)$ (recall \eqref{eq:quant}) in the bulk. 

Fix a point ${\bm y}\in \mathcal{J}^{2n}$, and an $N$-dependent parameter $K$ such that $1\ll K\ll \sqrt{N}$. We remark that ${\bm y}\in \mathcal{J}^{2n}$ will be fixed for the rest of the analysis. Next, we define the  \emph{averaging operator} as a simple multiplication operator by a ``smooth'' cut-off function:
\begin{equation}
    \Av(K,{\bm y})h({\bm x}):=\Av({\bm x};K,{\bm y})h({\bm x}), \qquad \Av({\bm x}; K, {\bm y}):=\frac{1}{K}\sum_{j=K}^{2K-1} \bm1(\lVert{\bm x}-{\bm y}\rVert_1<j),
\end{equation}
with \(\lVert{\bm x}-{\bm y}\rVert_1:=\sum_{a=1}^{2n} |x_a-y_a|\). For notational simplicity we may often omit $K,\bm{y}$ from the notation since they are fixed throughout the proof:
\begin{equation}
\label{eq:xnotation}
\Av(\bm{x})=\Av({\bm x};K,{\bm y})h({\bm x}), \qquad \Av h(\bm{x})=\Av((\bm{x}))h((\bm{x})).
\end{equation}
Additionally, fix an  integer \(\ell\) with \(1\ll\ell\ll K\),
and define the short range coefficients
\begin{equation}\label{eq:ccutoff}
    c_{ij}^{\mathcal{S}}(t):=\begin{cases}
        c_{ij}(t) &\mathrm{if}\,\, i,j\in \mathcal{J} \,\, \mathrm{and}\,\, |i-j|\le \ell \\
        0 & \mathrm{otherwise},
    \end{cases}
\end{equation}
where \(c_{ij}(t)\) is defined in~\eqref{eq:defc}. The parameter $\ell$ is the length of the short range interaction.

We now define a short--range approximation of $r_t$, with $r_t$ defined in \eqref{eq:defr}. Note that in the definition of the short--range flow \eqref{g-1} below there is a slight notational difference compared to \cite[Section 4.2]{normalfluc} and \cite[Section 4.2.1]{A2}: we now choose an initial condition $h_0$ which depends on $r_0$ rather than $g_0$. This minor difference is caused by the fact that in \cite{normalfluc, A2} the observable $g_t$ was already centred and rescaled, while in the current case the centred and rescaled version of $g_t$ is given by $r_t$, hence the definition in \eqref{g-1} is still conceptually the same as the one in \cite{normalfluc, A2}  (see also the paragraph above \eqref{eq:fundrelicon} for a more detailed explanation). We point out that we make this choice to ensure that the infinite norm of the short range approximation is always bounded by $N^\xi$ (see below \eqref{g-2}). The short range approximation $h_t=h_t({\bm x})$  is defined as the unique solution of the parabolic equation
\begin{equation}
\label{g-1}
    \begin{split}
        \partial_t h_t({\bm x}; \ell, K,{\bm y})&=\mathcal{S}(t) h_t({\bm x}; \ell, K,{\bm y})\\
        h_0({\bm x};\ell, K,{\bm y})=h_0({\bm x};K,{\bm y}):&=\Av({\bm x}; K,{\bm y})(r_0({\bm x})-\bm1(n \,\, \mathrm{even})), 
    \end{split}
\end{equation}
where
\begin{equation}\label{g-2}
    \mathcal{S}(t):=\sum_{j\ne i}\mathcal{S}_{ij}(t), \quad \mathcal{S}_{ij}(t)h({\bm x}):=c_{ij}^{\mathcal{S}}(t)\frac{n_j({\bm x})+1}{n_i({\bm x})-1}\sum_{a\ne b\in [2n]}\big(h({\bm x}_{ab}^{ij})-h({\bm x})\big).
\end{equation}
In the remainder of this section we may often omit $K$, ${\bm y}$ and $\ell$ from the notation, since they are fixed for the rest of the proof. We conclude this section defining the transition semigroup $\mathcal{U}_{\mathcal{S}}(s,t)=\mathcal{U}_{\mathcal{S}}(s,t;\ell)$ associated to the short range generator $\mathcal{S}(t)$. Note that $\lVert h_t\rVert_\infty\le N^\xi$, for any $t\ge 0$ and any small $\xi>0$, since $\mathcal{U}_{\mathcal{S}}(s,t)$ is a contraction and $\lVert h_0\rVert_\infty\le N^\xi$ by \eqref{eq:ETH}, as a consequence of $h_t(\bm{x})$ being supported on $\bm{x}\in\mathcal{J}^{2n}$.

\subsection{$L^2$--estimates}

To prove the $L^\infty$--bound in Proposition~\ref{pro:flucque}, we first prove an $L^2$--bound in Proposition~\ref{prop:mainimprov} below and then use an ultracontractivity argument for the parabolic PDE \eqref{eq:g1deq} (see \cite[Section 4.4]{normalfluc}) to get an $L^\infty$--bound. To get an $L^2$--bound we will analyse $h_t$, the short--range version of the observable $g_t$ from \eqref{eq:defg}, and then we will show that $h_t$ and $g_t$ are actually close to each other using the following finite speed of propagation (see \cite[Proposition 4.2, Lemmas 4.3--4.4]{normalfluc}):
\begin{lemma}
\label{lem:shortlongapprox}
    Let \(0\le s_1\le s_2\le s_1+\ell N^{-1}\), and \(f\) be a function on \(\Lambda^n\), 
    then for any \({\bm x}\in \Lambda^n\) supported on \(\mathcal{J}\) it holds
    \begin{equation}\label{eq:shortlong}
        \Big| (\mathcal{U}(s_1,s_2)-\mathcal{U}_{\mathcal{S}}(s_1,s_2;\ell) ) f({\bm x}) \Big|\lesssim N^{1+n\xi}\frac{s_2-s_1}{\ell} \| f\|_\infty\,,
    \end{equation}
    for any small \(\xi>0\). The implicit constant in~\eqref{eq:mainbthissec} depends on \(n\), \(\epsilon\), \(\delta\).
\end{lemma}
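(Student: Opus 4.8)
The plan is to follow, with only cosmetic changes, the proof of the corresponding finite-speed-of-propagation estimate for Wigner matrices in \cite[Proposition~4.2 and Lemmas~4.3--4.4]{normalfluc} (see also \cite[Appendix~A.2]{MarcinekThesis}). The only structural difference is that on the event $\widetilde\Omega$ from \eqref{def:Omega} eigenvalue rigidity is now formulated in terms of the quantiles $\gamma_i(t)$ of the time-dependent self-consistent density $\rho_t$; since $|\gamma_i(t)-\gamma_j(t)|\gtrsim |i-j|/N$ for bulk indices, all rigidity-based bounds below are unaffected. Throughout, we use that $\mathcal{U}$ and $\mathcal{U}_{\mathcal{S}}$ are positivity-preserving $L^\infty$-contractions, being semigroups of continuous-time jump processes.

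First I would write the Duhamel identity
\begin{equation*}
\mathcal{U}(s_1,s_2)-\mathcal{U}_{\mathcal{S}}(s_1,s_2;\ell)=\int_{s_1}^{s_2}\mathcal{U}_{\mathcal{S}}(u,s_2;\ell)\,\big(\mathcal{L}(u)-\mathcal{S}(u)\big)\,\mathcal{U}(s_1,u)\,\dif u\,,
\end{equation*}
obtained by differentiating $u\mapsto \mathcal{U}_{\mathcal{S}}(u,s_2;\ell)\mathcal{U}(s_1,u)$ and using \eqref{eq:defsemigroup}, the analogous equation $\partial_t\mathcal{U}_{\mathcal{S}}(s,t;\ell)=\mathcal{S}(t)\mathcal{U}_{\mathcal{S}}(s,t;\ell)$ for the short-range semigroup, and the identity $\partial_u\mathcal{U}_{\mathcal{S}}(u,s_2;\ell)=-\mathcal{U}_{\mathcal{S}}(u,s_2;\ell)\mathcal{S}(u)$ that follows from the semigroup property. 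The point of this ordering is that the outer semigroup is short-range: by the finite speed of propagation of the short-range dynamics (this is \cite[Lemma~4.3]{normalfluc}; its proof uses only the cutoff $|i-j|\le\ell$ in $c_{ij}^{\mathcal{S}}$ and rigidity on $\widetilde\Omega$), and since the time span satisfies $s_2-u\le s_2-s_1\le \ell/N$, the kernel of $\mathcal{U}_{\mathcal{S}}(u,s_2;\ell)$ is concentrated, up to a super-polynomially small error, on pairs of configurations at $\lVert\cdot\rVert_1$-distance $\lesssim \ell^2\ll N$. Hence $\big(\mathcal{U}_{\mathcal{S}}(u,s_2;\ell)g\big)(\bm{x})$ effectively depends on $g(\bm{z})$ only for $\bm{z}$ close to the (fixed) configuration $\bm{x}$.

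Next I would bound $\big(\mathcal{L}(u)-\mathcal{S}(u)\big)\tilde{w}_u$ with $\tilde{w}_u:=\mathcal{U}(s_1,u)f$, for which $\lVert\tilde{w}_u\rVert_\infty\le\lVert f\rVert_\infty$. Writing $\mathcal{L}(u)-\mathcal{S}(u)=\sum\mathcal{L}_{ij}(u)$ over the pairs $(i,j)$ that are either long-range, $|i-j|>\ell$, or cross the boundary of $\mathcal{J}$, and recalling $c_{ij}(u)$ from \eqref{eq:defc}, rigidity on $\widetilde\Omega$ gives $\sum_{j:\,|i-j|>\ell}c_{ij}(u)\lesssim N/\ell$; combined with $\frac{n_j(\bm{z})+1}{n_i(\bm{z})-1}\sum_{a\ne b}\big|\tilde{w}_u(\bm{z}_{ab}^{ij})-\tilde{w}_u(\bm{z})\big|\lesssim \lVert f\rVert_\infty$ (with constants depending on $n$) and the fact that $\bm{z}$ has particles on at most $2n$ sites, this bounds the contribution of the long-range pairs by $\lesssim (N/\ell)\lVert f\rVert_\infty$. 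For the boundary pairs, with $|i-j|\le\ell$, $i\in\mathcal{J}$, $j\notin\mathcal{J}$, the coefficient $c_{ij}(u)$ need not be small, but such a term vanishes unless $\bm{z}$ carries a particle within distance $\ell$ of $\partial\mathcal{J}$; since $\bm{x}$ is supported on $\mathcal{J}$ and, in all applications, at distance $\gg\ell^2$ from $\partial\mathcal{J}$ (as $\ell\ll K\ll\sqrt{N}$), the finite speed of propagation from the previous paragraph shows that the outer semigroup $\mathcal{U}_{\mathcal{S}}(u,s_2;\ell)$ carries only a negligible amount of mass from $\bm{x}$ to such configurations; this is precisely \cite[Lemma~4.4]{normalfluc}. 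Inserting these bounds into the Duhamel identity, using once more the $L^\infty$-contractivity of $\mathcal{U}_{\mathcal{S}}$, and integrating over $u\in[s_1,s_2]$ yields the factor $s_2-s_1$ and the asserted estimate $N^{1+n\xi}\frac{s_2-s_1}{\ell}\lVert f\rVert_\infty$, with $N^{n\xi}$ absorbing the rigidity loss on $\widetilde\Omega$, the finite-speed-of-propagation tails, and the $n$-dependent combinatorial constants.

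I expect the main obstacle to be the treatment of the boundary pairs: in contrast to the long-range pairs, their coefficients $c_{ij}(u)$ are not small, so their smallness has to be extracted entirely from the finite speed of propagation of the outer short-range semigroup, which requires the careful combinatorial description of the configurations reachable from $\bm{x}$ within the short time window $[s_1,s_2]$ carried out in \cite[Lemmas~4.3--4.4]{normalfluc}. Transferring that analysis to the deformed setting is routine once rigidity on $\widetilde\Omega$ is available.
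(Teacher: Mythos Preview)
Your proposal is correct and follows essentially the same approach as the paper: the paper does not give its own proof of this lemma but simply cites \cite[Proposition~4.2, Lemmas~4.3--4.4]{normalfluc}, and you have outlined precisely that argument (Duhamel formula with the short-range semigroup on the outside, rigidity-based bound $\sum_{|i-j|>\ell}c_{ij}\lesssim N/\ell$ for the long-range pairs, and finite speed of propagation to dispose of the boundary pairs). The only adaptation needed is the bulk rigidity input on $\widetilde\Omega$, which you correctly identify as routine.
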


To estimate several terms in the analysis of \eqref{g-1} we will rely on the multi--resolvent local laws from Proposition~\ref{pro:mresllaw} (in combination with the extensions in Lemma~\ref{lem:addllaw} in Lemma \ref{lem:2Gllfaraway}). For this purpose,  for a small \(\omega>2\xi>0\), we define the very high probability event (see Lemmas~\ref{lem:addllaw}--\ref{lem:2Gllfaraway})
\begin{equation}  
    \label{eq:hatomega}
    \begin{split}
    \small
           & \widehat{\Omega}=\widehat{\Omega}_{\omega, \xi}:= \\
       &\bigcap_{\substack{e_i\in\mathbf{B}_\kappa, \atop |\Im z_i|\ge N^{-1+\omega}}}\Bigg[\bigcap_{k=2}^n \left\{\sup_{0\le t \le T}\left|\langle G_t(z_1)\mathring{A}_1\dots G_t(z_k)\mathring{A}_k\rangle-\bm1(k=2)\langle M(z_1, \mathring{A}_1, z_2) \mathring{A}_2\rangle\right|\le \frac{N^{\xi+k/2-1}}{\sqrt{N\eta}}\right\} \\
        &\qquad\quad\cap \left\{\sup_{0\le t \le T}\big|\langle G_t(z_1)\mathring{A}_1\rangle\big|\le \frac{N^\xi}{N\sqrt{|\Im z_1|}}\right\}\Bigg] \\
        &\bigcap_{z_1,z_2: e_1\in\mathbf{B}_\kappa,\atop |e_1-e_2|\ge c_1, |\Im z_i|\ge N^{-1+\omega}} \left\{\sup_{0\le t \le T}\big|\langle (G_t(z_1)B_1G_t(z_2)B_2\rangle\big|\le N^\xi\right\}\,,
    \end{split}
\end{equation}
where $\mathring{A}_1,\dots,\mathring{A}_k$ are regular matrices defined as in Definition~\ref{def:regobs} (here we used the short--hand notation $\mathring{A}_i=\mathring{A}_i^{z_i,z_{i+1}}$), 
\begin{equation}
\label{eq:defm12}
\langle M(z_1, \mathring{A}_1, z_2) \mathring{A}_2\rangle=\langle  M(z_1)\mathring{A}_1 M(z_2)\mathring{A}_2\rangle +\frac{\langle M(z_1)\mathring{A}_1M(z_2)\rangle\langle M(z_2)\mathring{A}_2M(z_1)\rangle}{1-\langle M(z_1)M(z_2)\rangle}\,,
\end{equation}
\(\eta:=\min\{|\Im z_i| : i\in[k]\}\), $c_1>0$ is a fixed small constant, and $B_1,B_2$ are norm bounded deterministic matrices. We remark that for $|e_1-e_2|\ge c_1$ we have the norm bound $\lVert M(z_1,B_1, z_2)\rVert\lesssim 1$, with $M(z_1,B_1, z_2)$ being defined in \eqref{eq:Mdef}. Then, by standard arguments (see e.g. \cite[Eq. (4.30)]{A2}), we conclude the bound (recall that $\widetilde{\Omega}_\xi$ from \eqref{def:Omega} denotes the rigidity event)
\begin{equation}
\label{eq:apriori}
    \max_{i,j\in \mathcal{J}}|\langle{\bm u}_i(t), A {\bm u}_j(t)\rangle |\le \frac{N^{\omega}}{\sqrt{N}} \qquad \,\,
    \mbox{on \(\;\widehat{\Omega}_{\omega,\xi}\)}\cap\widetilde{\Omega}_\xi\,,
\end{equation}
simultaneously for all $i,j\in \mathcal{J}$ and $0\le t\le T$. Additionally, using the notation $\rho_{i,t}:=|\Im \langle M_t(z_i)\rangle|$, on $\widehat{\Omega}_{\omega,\xi}\cap\widetilde{\Omega}_\xi$ it also holds that 
\begin{equation}
\label{eq:overweak}
|\langle{\bm u}_i(t), A {\bm u}_j(t)\rangle|\le N^\omega \sqrt{\frac{\langle \Im G(\gamma_i(t)+\ii\eta)A\Im G(\gamma_j(t)+\ii\eta)A  \rangle}{N\rho_{i,t}\rho_{j,t}}}\lesssim \frac{N^\omega}{N^{1/4}}\,,
\end{equation}
when one among $i$ and $j$ is in the bulk and $|i-j|\ge c N$, for some small constant  $c$ depending on $c_1$ from \eqref{eq:hatomega}. Here we used that for the index in the bulk, say $i$, we have $\rho_{i,t}\sim 1$ and for the other index $\rho_{j,t}\gtrsim N^{-1/2}$ as a consequence of $\eta\gg N^{-1}$. We point out that this non optimal bound $N^{-1/4}$, instead of the optimal $N^{-1/2}$, follows from the fact that the bound from Lemma~\ref{lem:2Gllfaraway} is not optimal when one of the two spectral parameters in close to an edge; this is exactly the same situation as in \cite[Eq. (4.31)]{A2} where we get an analogous non optimal bound for overlaps of eigenvectors that are not in the bulk.

We are now ready to prove the main technical proposition of this section. Note the additional term $KN^{-1/2}$ in the error 
$\mathcal{E}$ in \eqref{eq:basimpr} compared to \cite[Proposition 4.2]{normalfluc} and \cite[Proposition 4.4]{A2}; this is a consequence of the fact the $p_{ii}$'s in \eqref{eq:defpij} are not correctly centred. We stress that the base point $\bm{y}$ in Proposition~\ref{prop:mainimprov} is fixed throughout the remainder of this section.

\begin{proposition}\label{prop:mainimprov}
    For any parameters satisfying \(N^{-1}\ll \eta\ll T_1\ll \ell N^{-1}\ll K N^{-1}\ll N^{-1/2}\), and any small \(\epsilon, \xi>0\) it holds
    \begin{equation}\label{eq:l2b}
        \lVert h_{T_1}(\cdot; \ell, K, {\bm y})\rVert_2\lesssim K^{n/2}\mathcal{E},
    \end{equation}
    with 
    \begin{equation}\label{eq:basimpr}
        \mathcal{E}:= N^{n\xi}\left(\frac{N^\epsilon\ell}{K}+\frac{NT_1}{\ell}+\frac{N\eta}{\ell}+\frac{N^\epsilon}{\sqrt{N\eta}}+\frac{1}{\sqrt{K}}+\frac{K}{\sqrt{N}}\right),
    \end{equation}
    uniformly in particle configurations $\bm{y}$ such that $y_a=i_0$, for any $a\in [2n]$ and $i_0\in \mathcal{J}$, and eigenvalue trajectory \({\bm \lambda}\) 
    in the high probability event \(\widetilde{\Omega}_\xi \cap \widehat{\Omega}_{\omega,\xi}\).
\end{proposition}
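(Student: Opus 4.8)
The plan is to prove Proposition~\ref{prop:mainimprov} by a Grönwall-type argument for the time derivative of $\lVert h_{t}\rVert_2^2$ along the short-range flow \eqref{g-1}, following the strategy of \cite[Section 4.3]{normalfluc} and \cite[Section 4.2.1]{A2}, but carefully tracking the new error terms caused by the fact that the diagonal overlaps $p_{ii}$ in \eqref{eq:defpij} are centred by the $i$-independent constant $C_0$ rather than by their true (and $i$-dependent) expectation. First I would compute
\[
\partial_t \lVert h_t\rVert_2^2 = -2 D_{\mathcal{S}}(h_t) + (\text{error from the mismatch of centring}),
\]
where $D_{\mathcal{S}}$ is the Dirichlet form associated to the short-range generator $\mathcal{S}(t)$; the negative Dirichlet term provides the dissipation, and the point is to show that the extra inhomogeneous terms generated because $r_t$ (hence $h_t$) does not exactly solve \eqref{eq:1dequa}--\eqref{eq:1dkernel} are controlled by the error $\mathcal{E}$ in \eqref{eq:basimpr}. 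The discrepancy between $q_t$ and a genuine solution of the perfect-matching PDE is, by the discussion around \eqref{eq:choicec0c1}, of relative size $\mathcal{O}(K/N)$ on the ball of radius $K$ around $\bm{y}$, which is exactly the origin of the new $K/\sqrt N$ term.

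The key steps, in order, would be: (i) write $h_t = h_t^{(0)} + (\text{correction})$, where $h_t^{(0)}$ evolves under the genuine short-range flow with the centred-and-rescaled initial data, and estimate the correction using that the centring mismatch $\frac{\langle A \Im M(\gamma_{i_0})\rangle}{\langle \Im M(\gamma_{i_0})\rangle} - \frac{\langle A \Im M(\gamma_i)\rangle}{\langle \Im M(\gamma_i)\rangle} = \mathcal{O}(K/N)$ on the relevant index range together with the a priori bound \eqref{eq:ETH}; (ii) for the genuine piece, differentiate the $L^2$ norm, use the negativity of the Dirichlet form, and bound the ``flat'' part of $h_t^{(0)}$ by splitting off the constant $\bm 1(n\text{ even})$ which is annihilated by $\mathcal{S}(t)$; (iii) control the off-diagonal long-range contributions that were cut off in passing from $\mathcal{L}$ to $\mathcal{S}$ using the finite speed of propagation Lemma~\ref{lem:shortlongapprox}, which produces the $NT_1/\ell$ and $N\eta/\ell$ terms; (iv) estimate the ``observable'' input — namely the quantities $\langle \Im G \mathring A \Im G \mathring A^*\rangle$ and their discrete analogues appearing after one application of the generator — via the multi-resolvent local law of Proposition~\ref{pro:mresllaw} for $k=2$ on the event $\widehat{\Omega}_{\omega,\xi}$, together with Lemma~\ref{lem:2Gllfaraway} for well-separated spectral parameters, producing the $N^\epsilon/\sqrt{N\eta}$ and $1/\sqrt K$ terms; (v) handle the initial averaging over the scale-$K$ window, where the $N^\epsilon \ell/K$ term comes from the interaction range $\ell$ leaking out of the averaging window during time $T_1$, exactly as in \cite{normalfluc}. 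Finally I would integrate the differential inequality from $0$ to $T_1$ and take square roots to get \eqref{eq:l2b}, with the factor $K^{n/2}$ coming from $\lVert h_0\rVert_2 \lesssim K^{n/2} \lVert h_0\rVert_\infty$ since $h_0$ is supported on a set of $\pi$-measure $\sim K^{n}$ of configurations within distance $\sim K$ of $\bm{y}$.

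I expect the main obstacle to be a clean bookkeeping of step (i): one must show that replacing the true $i$-dependent centring by the fixed constant $C_0$ perturbs $h_t$ only by $\mathcal{O}(K N^{n\xi}/\sqrt N)$ in $L^\infty$ (hence by $\mathcal{O}(K^{n/2+1} N^{n\xi}/\sqrt N)$ in the relevant $L^2$ norm, which after dividing by the $K^{n/2}$ prefactor gives precisely the $K/\sqrt N$ term in $\mathcal{E}$), uniformly along the flow and in the presence of the generator $\mathcal{S}(t)$ acting on this perturbation. This is delicate because the perturbation is not itself a solution of a nice homogeneous equation; one has to use the Duhamel formula for $\mathcal{U}_{\mathcal{S}}(s,t)$, the contraction property of the semigroup in $L^\infty$, and the fact that the ``source'' generated by differentiating the mismatched centring is, at each time, bounded by $\mathcal{O}(K/N)$ times already-controlled overlap quantities. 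A secondary technical point is ensuring that all the multi-resolvent inputs are invoked only for spectral parameters in the bulk $\mathbf{B}_\kappa$ with $|\Im z_i| \ge N^{-1+\omega}$, so that Proposition~\ref{pro:mresllaw} applies with the stated (non-optimal but sufficient) error $N^{\xi+k/2-1}/\sqrt{N\eta}$; configurations straying near an edge are dealt with by the weaker bound \eqref{eq:overweak}, which is why one only loses an extra $N^{-1/4}$ rather than breaking the argument. Once these are in place, the Grönwall integration and the ultracontractivity step (deferred, as in \cite[Section 4.4]{normalfluc}, to the passage from Proposition~\ref{prop:mainimprov} to Proposition~\ref{pro:flucque}) are routine.
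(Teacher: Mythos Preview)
Your overall strategy (Grönwall on $\lVert h_t\rVert_2^2$, finite speed of propagation, multi-resolvent input, and tracking the new $K/\sqrt N$ error from the centring mismatch) is in the right spirit and matches the paper in broad outline. However, there is a genuine gap: you are missing the key device that makes the argument work, namely the replacement of the short-range generator $\mathcal{S}(t)$ by the auxiliary operator $\mathcal{A}(t)$ of \eqref{eq:defAgen} via the Dirichlet-form comparison Lemma~\ref{lem:replacement}. This replacement is what converts the single-pair jumps of $\mathcal{S}$ into simultaneous jumps in \emph{all} $n$ coordinate pairs, so that after summing over the target indices $\bm j$ with the Poisson-type kernel $\prod_r a_{i_r j_r}$ one lands exactly on a product of $\Im G$'s and can invoke the multi-resolvent local law (as in \eqref{eq:longcomp}). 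Your step~(iv), ``after one application of the generator'', does not produce this structure: a single application of $\mathcal{S}$ moves only one pair of particles, and the resulting sum over a single index $j$ does not assemble into a resolvent trace. Equally important, the $\mathcal{A}$-replacement is what yields the strong dissipation term $-\tfrac{C}{\eta}\lVert h_t\rVert_2^2$ in \eqref{eq:finb}; mere negativity of $D_{\mathcal S}$, as in your step~(ii), gives only $\partial_t\lVert h_t\rVert_2^2\le 0$ and hence $\lVert h_{T_1}\rVert_2\le \lVert h_0\rVert_2\sim K^{n/2}N^{n\xi}$, which is far too large. It is precisely the fact that $T_1\gg\eta$ together with the rate $1/\eta$ that kills the initial data and leaves the forcing term $\mathcal{E}^2 K^n$.

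There is also a structural confusion in your first displayed equation: $h_t$ is, by definition \eqref{g-1}, the \emph{exact} solution of $\partial_t h_t=\mathcal S(t)h_t$, so $\partial_t\lVert h_t\rVert_2^2=2\langle h_t,\mathcal S(t)h_t\rangle$ with no inhomogeneous ``mismatch'' term. The centring mismatch does not enter the evolution equation for $h_t$; it enters when, inside the $\mathcal A$-expansion, one undoes the semigroup and the averaging to relate $h_t(\bm x_{\bm a\bm b}^{\bm i\bm j})$ back to the perfect-matching observable $g_t$ via \eqref{eq:fundrelicon}. It is at that step that one compares $r_0$ with $g_0$ (for the specific choice \eqref{eq:choicec0c1} of $C_0,C_1$) and uses $\lVert r_0-g_0\rVert_\infty\lesssim N^\xi K/\sqrt N$ from \eqref{eq:rgclose}, picking up the new $K/\sqrt N$ error. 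Your proposed decomposition $h_t=h_t^{(0)}+(\text{correction})$ with a Duhamel argument for the correction is not how the paper proceeds and, as written, does not explain how the needed dissipation rate is obtained for $h_t^{(0)}$ either.
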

\begin{proof}

The proof of this proposition is very similar to the one of \cite[Proposition 4.2]{normalfluc} and \cite[Proposition~4.4]{A2}, we thus only explain the main differences here. In the following by the star over $\sum$ we denote that the summation runs over two $n$-tuples of fully distinct indices. The key idea in this proof is that in order to rely on the multi--resolvent local laws \eqref{eq:hatomega} we replace the operator $\mathcal{S}(t)$ in \eqref{g-1} with the new operator
\begin{equation}\label{eq:defAgen}
        \mathcal{A}(t):=\sum_{{\bm i}, {\bm j}\in [N]^n}^*\mathcal{A}_{ {\bm i}{\bm j}}(t), \quad \mathcal{A}_{{\bm i}{\bm j}}(t)h({\bm x}):=\frac{1}{\eta}\left(\prod_{r=1}^n a_{i_r,j_r}^\mathcal{S}(t)\right)\sum_{{\bm a}, {\bm b}\in [2n]^n}^*(h({\bm x}_{{\bm a}{\bm b}}^{{\bm i}{\bm j}})-h({\bm x})),
    \end{equation}
    where
    \begin{equation}
        a_{ij}=a_{ij}(t):=\frac{\eta}{N((\lambda_i(t)-\lambda_j(t))^2+\eta^2)},
    \end{equation}
    and \(a_{ij}^\mathcal{S}\) are their short range version defined as in~\eqref{eq:ccutoff} with $c_{ij}(t)$ replaced with $a_{ij}(t)$, and
    \begin{equation}
        \label{eq:lotsofjumpsop}
        {\bm x}_{{\bm a}{\bm b}}^{{\bm i}{\bm j}}:={\bm x}+\left(\prod_{r=1}^n \delta_{x_{a_r}i_r}\delta_{x_{b_r}i_r}\right)\sum_{r=1}^n (j_r-i_r) ({\bm e}_{a_r}+{\bm e}_{b_r}).
    \end{equation}
  The main idea behind this replacement is that infinitesimally $\mathcal{S}(t)$ averages only in one direction at a time, whilst $\mathcal{A}(t)$ averages in all direction simultaneously. This is expressed by the fact that  ${\bm x}^{ij}_{ab}$ from~\eqref{eq:jumpop} changes two entries of ${\bm x}$ per time, instead  ${\bm x}_{{\bm a}{\bm b}}^{{\bm i}{\bm j}}$ 
    changes all the coordinates of ${\bm x}$ at the same time, i.e. let ${\bm i}:=(i_1,\dots, i_n), {\bm j}:=(j_1,\dots, j_n)\in [N]^n$, with $\{i_1,\dots,i_n\}\cap\{j_1,\dots, j_n\}=\emptyset$, then ${\bm x}_{{\bm a}{\bm b}}^{{\bm i}{\bm j}}\ne {\bm x}$ if and only if for all $r\in [n]$ it holds that $x_{a_r}=x_{b_r}=i_r$.  Technically, the replacement of $\mathcal{S}(t)$ by $\mathcal{A}(t)$ can be performed at the level of Dirichlet forms.
    \begin{lemma}[Lemma 4.6 of \cite{normalfluc}]
        \label{lem:replacement}
        Let \(\mathcal{S}(t)\), \(\mathcal{A}(t)\) be defined in~\eqref{g-2} and~\eqref{eq:defAgen}, respectively,
        and let $\mu$ denote the uniform measure on $\Lambda^n$ for which  \(\mathcal{A}(t)\) is reversible.
        Then there exists a constant \(C(n)>0\) such that
        \begin{equation}\label{eq:fundbound}
            \langle h, \mathcal{S}(t) h\rangle_{\Lambda^n, \pi}\le C(n) \langle h, \mathcal{A}(t) h\rangle_{\Lambda^n,\mu}\le 0,
        \end{equation}
        for any \(h\in L^2(\Lambda^n)\), on the very high probability event \(\widetilde{\Omega}_\xi \cap \widehat{\Omega}_{\omega,\xi}\).
    \end{lemma}

We start noticing the fact that by \eqref{g-1} it follows
    \begin{equation}\label{eq:l2der}
        \partial_t \lVert h_t\rVert_2^2=2\langle h_t, \mathcal{S}(t) h_t\rangle_{\Lambda^n}.
    \end{equation}
    Then, combining this with \eqref{eq:fundbound}, and using that \({\bm x}_{{\bm a}{\bm b}}^{{\bm i}{\bm j}}={\bm x}\) unless \({\bm x}_{a_r}={\bm x}_{b_r}=i_r\) for all \(r\in [n]\), we conclude that
    \begin{equation}\label{eq:boundderl2}
        \begin{split}
            \partial_t \lVert h_t\rVert_2^2&\le C(n) \langle h_t,\mathcal{A}(t) h_t\rangle_{\Lambda^n,\mu} \\
            &=\frac{C(n) }{2\eta}\sum_{{\bm x}\in \Lambda^n}\sum_{{\bm i},{\bm j}\in [N]^n}^* \left(\prod_{r=1}^n a_{i_r j_r}^\mathcal{S}(t)\right)\sum_{{\bm a}, {\bm b}\in [2n]^n}^*\overline{h_t}({\bm x})\big(h_t({\bm x}_{{\bm a}{\bm b}}^{{\bm i} {\bm j}})-h_t({\bm x})\big)\left(\prod_{r=1}^n \delta_{x_{a_r}i_r}\delta_{x_{b_r}i_r}\right).
        \end{split}
    \end{equation}
       Then, proceeding as in the proof of \cite[Proposition 4.5]{normalfluc} (see also \cite[Eq. (4.40)]{A2}), we conclude that
    \begin{equation}
        \label{eq:finb}
        \partial_t\lVert h_t\rVert_2^2\le -\frac{C_1(n)}{2\eta}\langle h_t\rangle_2^2+\frac{C_3(n)}{\eta}\mathcal{E}^2K^n,
    \end{equation}
    which implies \(\lVert h_{T_1}\rVert_2^2\le C(n) \mathcal{E}^2 K^n\), by a simple Gronwall inequality, using that \(T_1\gg \eta\).
    
    We point out that to go from \eqref{eq:boundderl2} to \eqref{eq:finb} the proof is completely analogous to \cite[Proof of Proposition 4.5]{normalfluc}, with the only exception being the proof of  \cite[Eqs. (4.41), (4.43)]{normalfluc}. We thus now explain how to obtain the analog of \cite[Eqs. (4.41), (4.43)]{normalfluc} in the current case as well. The fact that we now have the bound \eqref{eq:overweak} rather than the stronger bound $N^{-1/3}$ as in \cite[Eq. (4.31)]{A2} does not cause any difference in the final estimate. We thus focus on the main new difficulty in the current analysis, i.e. that in \eqref{g-1} we choose the initial condition depending on $r_0$ rather than $g_0$. We recall that the difference between $r_0$ and $g_0$ is that $r_0$ is defined in such a way all the eigenvector overlaps are precisely centred and normalised in an $i$--dependent way, whilst for $g_0$ we can choose the $i$--independent constant $C_0,C_1$ so that only the overlap corresponding to a certain base point $i_0$ is exactly centred and normalised, whilst the nearby overlaps are centred and normalised only modulo a  negligible error $K/N$ (see also the paragraph below \eqref{eq:qg} for a detailed explanation). This additional difficulty requires that to prove the analog of \cite[Eqs. (4.41), (4.43)]{normalfluc} we need to estimate the error produced by this mismatch.

Using that the function \(f({\bm x})\equiv \bm1(n \,\, \mathrm{even})\) is in the kernel of \(\mathcal{L}(t)\), for any fixed \({\bm x}\in\Gamma\), and for any fixed \({\bm i}\), \({\bm a}\), \({\bm b}\), we conclude (recall the notation from \eqref{eq:xnotation})
\begin{equation}\label{eq:fundrelicon}
\begin{split}
&h_t({\bm x}_{{\bm a}{\bm b}}^{{\bm i} {\bm j}})\\
&=\mathcal{U}_\mathcal{S}(0,t)\big((\Av r_0)({\bm x}_{{\bm a}{\bm b}}^{{\bm i}{\bm j}})-(\Av\bm1(n \,\, \mathrm{even}))({\bm x}_{{\bm a}{\bm b}}^{{\bm i}{\bm j}})\big) \\
&=\Av({\bm x}_{{\bm a}{\bm b}}^{{\bm i}{\bm j}})\big(\mathcal{U}_\mathcal{S}(0,t)r_0({\bm x}_{{\bm a}{\bm b}}^{{\bm i}{\bm j}})-\bm1(n\,\, \mathrm{even})\big)+\mathcal{O}\left(\frac{N^{\epsilon+n\xi} \ell}{K}\right) \\
&=\left(\Av({\bm x})+\mathcal{O}\left(\frac{\ell}{K}\right)\right)\left(\mathcal{U}(0,t)r_0({\bm x}_{{\bm a}{\bm b}}^{{\bm i}{\bm j}})-\bm1(n\,\, \mathrm{even})+\mathcal{O}\left(\frac{N^{1+n\xi}t}{\ell}\right)\right)+\mathcal{O}\left(\frac{N^{\epsilon+n\xi} \ell}{K}\right) \\
&=\Av({\bm x})\big(g_t({\bm x}_{{\bm a}{\bm b}}^{{\bm i}{\bm j}})-\bm1(n\,\, \mathrm{even})\big)+\mathcal{O}\left(\frac{N^{\epsilon+n\xi} \ell}{K}+\frac{N^{1+n\xi}t}{\ell}+\frac{N^\xi K}{\sqrt{N}}\right),
\end{split}
\end{equation}
where in the definition of $g_t$ from \eqref{eq:deff}, \eqref{eq:defg} we chose
\begin{equation}
\label{eq:choicec0c1}
C_0:=\frac{\langle A\Im M(\gamma_{i_0})\rangle }{\langle \Im M(\gamma_{i_0})\rangle}, \qquad C_1:=\mathrm{Var}_{\gamma_{i_0}}(A),
\end{equation}
and the error terms are uniform in \({\bm x}\in\Gamma\). Here $i_0$ is the index defined below \eqref{eq:basimpr}. The first three inequalities are completely analogous to \cite[Eq. (4.41)]{normalfluc}. We now explain how to obtain the last inequality at the price of the additional negligible error $KN^{-1/2}$. Recall the definition of $r_t$ from \eqref{eq:defq}, \eqref{eq:defr}, then we now show that for any $\bm{x}$ supported in the bulk it holds
\begin{equation}
\label{eq:rgclose}
\lVert r_0(\bm{x})-g_0(\bm{x})\rVert_\infty\lesssim \frac{N^\xi K}{\sqrt{N}},
\end{equation}
for $C_0,C_1$ chosen as in \eqref{eq:choicec0c1}. Using \eqref{eq:rgclose}, together with $\mathcal{U}(0,t)g_0=g_t$, this proves the last equality in \eqref{eq:fundrelicon}. The main input in the proof of  \eqref{eq:rgclose} is the following approximation result
\begin{equation}
\label{eq:appr}
\mathring{A}^{\gamma_{i_0}}-\mathring{A}^{\gamma_{i_r}}=\mathcal{O}(|\gamma_{i_0}-\gamma_{i_r}|)I=\mathcal{O}(KN^{-1})I.
\end{equation}
We now explain the proof of \eqref{eq:rgclose}; for simplicity we present the proof only in the case $n=2$.  To prove \eqref{eq:rgclose} we see that
\[
2|\langle {\bm u}_i,\mathring{A}^{\gamma_{i_0}}  {\bm u}_j\rangle|^2+\langle {\bm u}_i,\mathring{A}^{\gamma_{i_0}}  {\bm u}_i\rangle\langle {\bm u}_j,\mathring{A}^{\gamma_{i_0}}  {\bm u}_j\rangle=2|\langle {\bm u}_i,\mathring{A}^{\gamma_i}  {\bm u}_j\rangle|^2+\langle {\bm u}_i,\mathring{A}^{\gamma_i}  {\bm u}_i\rangle\langle {\bm u}_j,\mathring{A}^{\gamma_j}  {\bm u}_j\rangle+\mathcal{O}\left(\frac{1}{N}\cdot\frac{N^\xi K}{\sqrt{N}}\right),
\]
where we used \eqref{eq:appr} to replace the "wrong" $\mathring{A}^{\gamma_{i_0}}$ with the "correct" $\mathring{A}^{\gamma_i}$ together with the a priori a bound $\langle {\bm u}_i,\mathring{A}^{\gamma_i}  {\bm u}_j \rangle \le N^\xi N^{-1/2}$. Then multiplying this relation by $N$ we obtain \eqref{eq:rgclose}. Additionally, since $r_0$ and $g_0$ contain a different rescaling in terms of $\mathrm{Var}_{\gamma_{i_0}}(A)$ and $\mathrm{Var}_{\gamma_i}(A)$, we also used that by similar computations
\[
\mathrm{Var}_{\gamma_{i_0}}(A)=\mathrm{Var}_{\gamma_i}(A)+\mathcal{O}\left(\frac{N^\xi K}{\sqrt{N}}\right).
\]
In particular, we used this approximation to compensate the mismatch that only the diagonal overlaps corresponding to the index $i_0$ are properly centred and normalised in the definition of $g_0$, whilst for nearby indices we use this approximation to replace the approximate centering $C_0$ and normalisation $C_1$ from \eqref{eq:choicec0c1} with the correct one, which is the one in the definition of $r_0$.

Then proceeding as in the proof of \cite[Eq. (4.41)]{A2} we conclude the analog of \cite[Eq. (4.43)]{normalfluc}:
\begin{equation}
        \begin{split}
            \label{eq:3}
            &\sum_{{\bm j}}^*\left(\prod_{r=1}^n a_{i_r j_r}^\mathcal{S}(t)\right)\big(g_t({\bm x}_{{\bm a}{\bm b}}^{{\bm i} {\bm j}})-\bm1(n\,\, \mathrm{even})\big) \\
            &= \sum_{{\bm j}}\left(\prod_{r=1}^n a_{i_r j_r}(t)\right)\left(\frac{N^{n/2}}{\mathrm{Var}_{\gamma_{i_0}}(A)^{n/2} 2^{n/2}(n-1)!!}\sum_{G\in \mathcal{G}_{{\bm \eta}^{{\bm j}}}}P(G)-\bm1(n\,\, \mathrm{even})\right) \\
            &\quad+\mathcal{O}\left(\frac{N^{n\xi}}{N\eta}+\frac{N^{1+n\xi}\eta}{\ell}+\frac{N^\xi K}{\sqrt{N}}\right).
        \end{split}
    \end{equation}
    Given \eqref{eq:3}, the remaining part of the proof is completely analogous to \cite[Eqs. (4.44)--(4.51)]{normalfluc} except for the slightly different computation
    \begin{equation}
\begin{split}
\label{eq:longcomp}
&\frac{\langle \Im G(\lambda_{i_{r_1}}+\ii \eta)\mathring{A}^{\gamma_{i_0}}\Im G (\lambda_{i_{r_2}}+\ii \eta)\mathring{A}^{\gamma_{i_0}}\rangle}{\mathrm{Var}_{\gamma_{i_0}}(A)} \\
&\qquad\qquad=-\frac{1}{4\mathrm{Var}_{\gamma_{i_0}}(A)}\sum_{\sigma,\tau\in \{+,-\}}\langle  G(\lambda_{i_{r_1}}+\sigma\ii \eta)\mathring{A}^{\gamma_{i_0}} G (\lambda_{i_{r_2}}+\tau\ii \eta)\mathring{A}^{\gamma_{i_0}}\rangle\\
&\qquad \qquad= -\frac{1}{4\mathrm{Var}_{\gamma_{i_0}}(A)}\sum_{\sigma,\tau\in \{+,-\}}\left\langle G(\lambda_{i_{r_1}}+\ii \sigma \eta)\mathring{A}^{\gamma_{i_{r_1}}+\ii \sigma\eta,\gamma_{i_{r_2}}+\ii\tau\eta} G (\lambda_{i_{r_2}}+\ii \tau \eta)\mathring{A}^{\gamma_{i_{r_2}}+\ii \tau\eta, \gamma_{i_{r_1}}+\ii \sigma\eta}\right\rangle \\
&\qquad\qquad\quad+\mathcal{O}\left(\frac{K}{N^2\eta^{3/2}}+\frac{K^2}{N^2\eta} +\frac{1}{N\sqrt{\eta}}\nc\right)\\
&\qquad\qquad=\langle \Im M(\gamma_{i_0})\rangle^2+\mathcal{O}\left(\frac{K}{N^2\eta^{3/2}}+\frac{K^2}{N^2\eta}+\frac{1}{\sqrt{N\eta}}\right),
\end{split}
\end{equation}
which replaces  \cite[Eqs. (4.47)]{normalfluc}. Here $\mathring{A}^{\gamma_{i_{r_1}}\pm \ii \eta,\gamma_{i_{r_2}}\pm \ii\eta}$ is defined as in Definition~\ref{def:regobs}. We also point out that in the second equality we used the approximation (see \eqref{eq:lipprop})
\begin{equation}
\label{eq:appr2reg}
\mathring{A}^{\gamma_{i_0}}=\mathring{A}^{\gamma_{i_{r_1}}\pm\ii\eta,\gamma_{i_{r_2}}\pm \ii \eta}
+\mathcal{O}\big(|\gamma_{i_{r_1}}-\gamma_{i_0}|
+|\gamma_{i_{r_2}}-\gamma_{i_0}|+\eta\big)I
=\mathring{A}^{\gamma_{i_{r_1}},\gamma_{i_{r_2}}}+\mathcal{O}\left(\frac{K}{N}+\eta\right)I,
\end{equation}
together with (here we present the estimate only for one representative term, the other being analogous)
\begin{equation}
\begin{split}
\langle  G(\lambda_{i_{r_1}}+\ii \eta) G (\lambda_{i_{r_2}}+\ii \eta)\mathring{A}^{\gamma_{i_0}}\rangle&=\langle  G(\lambda_{i_{r_1}}+\ii \eta)G (\lambda_{i_{r_2}}+\ii \eta)\mathring{A}^{\gamma_{i_{r_2}}+\ii \eta,\gamma_{i_{r_2}}+\ii\eta}\rangle+\mathcal{O}\left(1+\frac{K}{N\eta}\right) \\
&=\langle M(\gamma_{i_{r_2}}+\ii  \eta, \mathring{A}^{\gamma_{i_{r_2}}+\ii \eta, \gamma_{i_{r_1}}+\ii\eta}, \gamma_{i_{r_1}}+\ii \eta) \rangle+\mathcal{O}\left( 1+\frac{1}{N\eta^{3/2}}+\frac{K}{N\eta}\right)\\
& =\mathcal{O}\left(1+\frac{1}{N\eta^{3/2}}+\frac{K}{N\eta}\right),
\end{split}
\end{equation}
which follows by \eqref{eq:hatomega} and Lemma~\ref{lem:Mbound} to estimate the deterministic term, together
 with the integral representation from \cite[Lemma~5.1]{iid} (see also \eqref{eq:integr identity} later),
  to bound the error terms arising from the replacement \eqref{eq:appr2reg}. Additionally, in the third equality we used the local for two resolvents from \eqref{eq:hatomega}:
\begin{equation}
\begin{split}
\label{eq:explain}
&-\frac{1}{4}\sum_{\sigma,\tau\in \{+,-\}}\left\langle G(\lambda_{i_{r_1}}+\ii \sigma \eta)\mathring{A}^{\gamma_{i_{r_1}}+\ii \sigma\eta,\gamma_{i_{r_2}}+\ii\tau\eta} G (\lambda_{i_{r_2}}+\ii \tau \eta)\mathring{A}^{\gamma_{i_{r_2}}+\ii \tau\eta, \gamma_{i_{r_1}}+\ii \sigma\eta}\right\rangle \\
&\qquad\quad =-\frac{1}{4}\sum_{\sigma,\tau\in \{+,-\}}\left\langle M(\gamma_{i_{r_1}}+\ii \sigma \eta, \mathring{A}^{\gamma_{i_{r_1}}+\ii \sigma\eta,\gamma_{i_{r_2}}+\ii\tau\eta}, \gamma_{i_{r_2}}+\ii \tau\eta)\mathring{A}^{\gamma_{i_{r_2}}+\ii \tau\eta, \gamma_{i_{r_1}}+\ii \sigma\eta}\right\rangle+\mathcal{O}\left(\frac{1}{\sqrt{N\eta}}\right) \\
&\qquad\quad = \langle \Im M(\gamma_{i_0})\rangle^2\mathrm{Var}_{\gamma_{i_0}}(A)+\mathcal{O}\left(\frac{1}{\sqrt{N\eta}}+\frac{K}{N}\right),
\end{split}
\end{equation}
with the deterministic term defined in \eqref{eq:defm12}, and in the second equality we used the approximation \eqref{eq:appr2reg} again. We remark that here we presented this slightly different (compared to \cite{normalfluc, A2}) computation only for chain of length $k=2$, the computation for longer chains is completely analogous and so omitted. This concludes the proof of \eqref{eq:l2b}.

\end{proof}

We conclude this section with the proof of Proposition~\ref{pro:flucque}.

\begin{proof}[Proof of Proposition~\ref{pro:flucque}]

Combining the $L^2$--bound on $h_t$ from Proposition~\ref{prop:mainimprov}  and the finite speed of propagation estimates in Lemma~\ref{lem:shortlongapprox} we can enhance this $L^2$--bound to an $L^\infty$--bound completely analogously to the proof of \cite[Proposition 3.2 ]{normalfluc} presented in \cite[Section 4.4]{normalfluc}.

\end{proof}

\section{Proof of Proposition~\ref{pro:mresllaw}}
\label{sec:llaw}
Our strategy for proving Proposition \ref{pro:mresllaw} (in the much more involved $\eta \le 1$ regime)
is to derive a system of {\it master inequalities}  (Proposition~\ref{prop:master}) 
for the errors in the local laws by cumulant expansion, then use an iterative scheme
to gradually improve their estimates. The cumulant expansion naturally introduces longer resolvent chains, potentially leading
to an uncontrollable hierarchy, so our
master inequalities are complemented by a set of {\it reduction inequalities} (Lemma~\ref{lem:reduction})
to estimate longer chain in terms of shorter ones.
We have used a similar strategy  in~\cite{multiG, A2} for Wigner matrices, but now, analogously to \cite{iid}, dealing with non-Hermitian i.i.d.~matrices, many new error terms due to several adjustments of the $z$-dependent two-point regularisations need to be handled. By the strong analogy to \cite{iid}, our proof of the master inequalities formulated in Proposition \ref{prop:master} and given in Section \ref{subsec:proofmaster} will be rather short and focus on the main differences between \cite{iid} and the current setup.

 As the basic control quantities, analogously to \cite{multiG, iid}, in the sequel of the proof, we  introduce the normalised differences
\begin{align} \label{eq:Psi avk}
	\Psi_k^{\rm av}(\boldsymbol{z}_k, \boldsymbol{A}_k) &:= N \eta^{k/2} |\langle  G_1 A_1 \cdots G_{k} A_k  -  M(z_1, A_1, ... , z_{k}) A_k  \rangle |\,, \\
	\label{eq:Psi isok}
	\Psi_k^{\rm iso}(\boldsymbol{z}_{k+1}, \boldsymbol{A}_{k}, \boldsymbol{x}, \boldsymbol{y}) &:= \sqrt{N \eta^{k+1}} \left\vert  \big(  G_1 A_1 \cdots A_{k} G_{k+1} - M(z_1, A_1, ... , A_{k}, z_{k+1}) \big)_{\boldsymbol{x} \boldsymbol{y}} \right\vert
\end{align}
for $k \in \N$, where we used the short hand notations
\begin{equation*} 
	G_i:= G(z_i)\,, \quad \eta := \min_i |\Im z_i|\,, \quad \boldsymbol{z}_k :=(z_1, ... , z_{k})\,, \quad \boldsymbol{A}_k:=(A_1, ... , A_{k})\,.
\end{equation*}
The deterministic matrices $\Vert A_i \Vert \le 1$, $i \in [k]$, are assumed to be \emph{regular} (i.e., $A_i  = \mathring{A}^{z_i, z_{i+1}}$,
 see Definition~\ref{def:regobs}) and the deterministic counterparts used in \eqref{eq:Psi avk} and \eqref{eq:Psi isok} are given recursively in Definition \ref{def:Mdef}. For convenience, we extend the above definitions to $k=0$ by
\begin{equation*} 
	\Psi_0^{\rm av}(z):= N \eta |\langle G(z) - M(z)\rangle |\,, \quad \Psi_0^{\rm iso}(z, \boldsymbol{x}, \boldsymbol{y}) := \sqrt{N \eta} \big| \big(G(z) - M(z)\big)_{\boldsymbol{x} \boldsymbol{y}} \big|
\end{equation*}
and observe that
\begin{equation} \label{eq:single G}
	\Psi_0^{\rm av} + \Psi_0^{\rm iso} \prec 1
\end{equation}
is the usual single-resolvent local law from \eqref{eq:singlegllaw}, where here and in the following the arguments of $\Psi_k^{\rm av/iso}$ shall occasionally be omitted.
We remark that the index $k$ counts the number of regular matrices in the sense of Definition~\ref{def:regobs}. 

Throughout the entire argument, let $\epsilon > 0$ and $\kappa > 0$ be \emph{arbitrary} but fixed, and let 
\begin{equation} \label{eq:Omega}
	\mathbf{D}^{(\epsilon, \kappa)}:= \big\{  z \in \C :  \Re z \in \mathbf{B}_{\kappa}\,, \ N^{100} \ge |\Im z| \ge N^{-1+\epsilon}\big\}
\end{equation}
be the \emph{spectral domain}, where the $\kappa$-bulk $\mathbf{B}_\kappa$ has been introduced in \eqref{eq:bulk}.  Strictly speaking, we would need to define an entire (finite) \emph{family} of slightly enlarged spectral domains along which the above mentioned iterative scheme for proving Proposition \ref{pro:mresllaw} is conducted. Since this has been carried out in detail in \cite{iid} (see, in particular, \cite[Figure 2]{iid}), we will neglect this technicality and henceforth assume all bounds on $\Psi_k^{\rm av/iso}$ to be \emph{uniform} on $\mathbf{D}^{(\epsilon, \kappa)}$ in the following sense. 

\begin{definition}[Uniform bounds in the spectral domain] \label{def:epsi unif}  Let $\epsilon > 0$ and $\kappa > 0$ as above and let $k \in \N$. 
	We say that the bounds
	\begin{equation} \label{eq:epsi unif}
		\begin{split}
			\big\vert \langle G(z_1) B_1 \ \cdots \ G(z_k) B_k - M(z_1, B_1, ... , z_k) B_k \rangle \big\vert &\prec \mathcal{E}^{\rm av}\,, \\[2mm]
			\left\vert  \big( G(z_1) B_1 \ \cdots \ B_k G(z_{k+1}) - M(z_1, B_1, ... , B_k, z_{k+1})  \big)_{\boldsymbol{x} \boldsymbol{y}}  \right\vert &\prec \mathcal{E}^{\rm iso}
		\end{split}
	\end{equation}
	hold \emph{$(\epsilon, \kappa)$-uniformly} (or simply \emph{uniformly}) for some deterministic control parameters $\mathcal{E}^{\rm av/iso} = \mathcal{E}^{\rm av/iso}(N, \eta)$, depending only on $N$ and $ \eta:= \min_i |\Im z_i|$, if the implicit constant in \eqref{eq:epsi unif} are uniform in bounded deterministic matrices $\Vert B_j \Vert \le 1$, deterministic vectors $\Vert \boldsymbol{x} \Vert , \Vert \boldsymbol{y} \Vert \le 1$, and \emph{admissible} spectral parameters $z_j\in \mathbf{D}^{(\epsilon, \kappa)}$ satisfying $ 1 \ge \eta:= \min_j |\Im z_j|$. 
	
		Moreover, we may allow for additional restrictions on the deterministic matrices. For example, we may talk about uniformity under the additional assumption that some (or all) of the matrices are \emph{regular} (in the sense of Definition \ref{def:regobs}). 
\end{definition}

Note that \eqref{eq:epsi unif} is stated for a fixed choice of spectral parameters $z_j$ in the left hand side, but it is in fact equivalent to an apparently stronger statement, when the same bound holds with a supremum over the spectral parameters (with the same constraints). While one implication is trivial, the other direction follows from \eqref{eq:epsi unif} by a standard \emph{grid argument} (see, e.g., the discussion after \cite[Definition~3.1]{multiG}).

We can now formulate Proposition \ref{pro:mresllaw}, in the language of our basic control quantities $\Psi_k^{\rm av/iso}$. 
\begin{lemma}[Estimates on $\Psi^{\rm av/iso}_1$ and $\Psi^{\rm av/iso}_2$] \label{lem:multiGll}
	For any $\epsilon > 0$ and $\kappa > 0$ we have
	\begin{equation}\label{lem62}
		\Psi_1^{\rm av} + \Psi_1^{\rm iso} \prec 1 \qquad \text{and} \qquad 	\Psi_2^{\rm av}  + 	\Psi_2^{\rm iso}\prec \sqrt{N \eta}
	\end{equation}
	$(\epsilon, \kappa)$-uniformly in regular matrices.  
\end{lemma}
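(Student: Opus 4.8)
The plan is to follow the by-now standard strategy for multi-resolvent local laws developed in \cite{multiG, A2} for Wigner matrices and adapted to energy-dependent regularisations in \cite{iid}: derive a closed system of \emph{master inequalities} for the control quantities $\Psi_k^{\rm av/iso}$ by a single cumulant expansion, complement it with \emph{reduction inequalities} bounding chains with more regular matrices by chains with fewer, and finally run a self-improving iteration on the exponents of the a priori bounds along the finite family of slightly growing spectral domains built on $\mathbf{D}^{(\epsilon,\kappa)}$. The base of the induction is the single-resolvent local law \eqref{eq:single G}, i.e.\ $\Psi_0^{\rm av} + \Psi_0^{\rm iso} \prec 1$, which holds uniformly on $\mathbf{D}^{(\epsilon,\kappa)}$ under Assumption~\ref{ass:Mbdd} by \cite{slowcorr, edgelocallaw}. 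The $\eta > 1$ regime of Proposition~\ref{pro:mresllaw} is elementary: it follows by induction on the chain length using only $\Vert M(z)\Vert \lesssim 1/|\Im z|$, exactly as in \cite[Appendix~B]{multiG}, so throughout we may assume $\eta \le 1$.

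For the master inequalities I would start from the identity $G(z) = M(z) - M(z)\,\underline{W G(z)} + M(z)\langle G(z)-M(z)\rangle G(z)$, where $\underline{WG}$ denotes $WG$ with its partial expectation in the entries of $W$ subtracted, insert it into $\langle G_1 A_1 \cdots G_k A_k\rangle$ and into its isotropic analogue, and expand $\underline{W\,\cdot}$ by cumulants. The second-order term, after resummation, reproduces precisely the deterministic approximation $M(z_1, A_1, \ldots, z_k)$ of Definition~\ref{def:Mdef} together with the inverse stability operator $\mathcal{B}_{1k}^{-1}$; the higher cumulants and the off-diagonal pieces produce (a) longer chains, contributing $\Psi_{k+1}$ and $\Psi_{k+2}$, to be removed by the reduction step; (b) products of shorter chains, controlled inductively; and (c) the genuinely new terms of the deformed setting, where a matrix $A_j$ that was regular with respect to $(z_j, z_{j+1})$ now sits next to a resolvent at a shifted spectral parameter. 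For those I would invoke the Lipschitz property \eqref{eq:lipprop}, writing $A_j = \mathring{A}_j^{w_j, w_{j+1}} + \mathcal{O}\big(|z_j - w_j| + |z_{j+1}-w_{j+1}|\big)\,I$: the regular remainder keeps the same $k$, while the identity contribution yields a chain with one fewer regular matrix, hence a lower-index $\Psi$ which has already been estimated. Collecting everything gives master inequalities of the same structure as in \cite{iid}, schematically $\Psi_1^{\rm av} \prec 1 + (\text{lower order}) + N^{-c}\big(\Psi_2^{\rm av} + (\Psi_1^{\rm iso})^2 + \cdots\big)$, and analogously for $\Psi_1^{\rm iso}$, $\Psi_2^{\rm av}$, $\Psi_2^{\rm iso}$.

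The reduction inequalities estimate a chain with $k$ regular matrices by splitting it in the middle, applying Cauchy--Schwarz, and using the Ward identity $G(z)G(z)^* = \Im G(z)/\Im z$ together with $\langle \Im G_1 A \Im G_2 A^*\rangle \prec 1$ for regular $A$ — which is itself essentially the $k=2$ averaged law and is therefore part of the same bootstrap — plus the norm and trace bounds on the deterministic terms $M(z_1, A_1, \ldots, z_{k+1})$ from Lemma~\ref{lem:Mbound}. This expresses $\Psi_{k+1}, \Psi_{k+2}$ in terms of $\Psi_{\le 2}$, closing the system. Feeding the reduction bounds into the master inequalities yields a finite closed system for $(\Psi_1^{\rm av}, \Psi_1^{\rm iso}, \Psi_2^{\rm av}, \Psi_2^{\rm iso})$ which is \emph{contractive} in the exponents: an a priori bound $\Psi_k \prec \psi_k$ valid on one domain of the family implies $\Psi_k \prec \psi_k^{\rm new}$ on the next (slightly smaller) domain, with $\psi^{\rm new}$ strictly better than $\psi$ unless one has already reached the fixed point $\Psi_1^{\rm av} + \Psi_1^{\rm iso} \prec 1$, $\Psi_2^{\rm av} + \Psi_2^{\rm iso} \prec \sqrt{N\eta}$. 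After $O(1/\epsilon)$ iterations the bounds of \eqref{lem62} hold on $\mathbf{D}^{(\epsilon,\kappa)}$, and the grid argument noted after Definition~\ref{def:epsi unif} upgrades the fixed-$z$ statement to the uniform one.

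The main obstacle, and the only essential departure from the Wigner case of \cite{multiG, A2}, is the bookkeeping forced by the energy dependence of the regularisation: every time the cumulant expansion or the reduction step moves a regular matrix next to a resolvent at a different spectral parameter one must re-regularise via \eqref{eq:lipprop}, and one must verify that the resulting $\mathcal{O}(|z-w|)\,I$ corrections are always absorbed into \emph{strictly shorter} chains and never feed back to enlarge the hierarchy — in particular that no term with the ``wrong sign'' $\mathfrak{s}\tau = -1$ of Equation~(3.7) in \cite{iid} survives, which is exactly the simplification isolated around Definition~\ref{def:regobs}. Keeping the stability operators $\mathcal{B}_{mn}$ invertible with an $\mathcal{O}(1)$-controlled inverse throughout uses that $\Re z_j \in \mathbf{B}_\kappa$ and the no-cusp Assumption~\ref{ass:Mbdd}, just as in the single-resolvent theory.
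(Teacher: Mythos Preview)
Your proposal is correct and follows essentially the same route as the paper: master inequalities (Proposition~\ref{prop:master}) obtained by cumulant expansion after the underlined representation, reduction inequalities (Lemma~\ref{lem:reduction}) to close the hierarchy at $k\le 2$, and the self-improving iteration over a nested family of spectral domains as in \cite[Section~4.3]{iid}. One small inaccuracy: you invoke Assumption~\ref{ass:Mbdd} (no cusps, global $\Vert M\Vert$-bound) both for the single-resolvent input and for the stability operator, but neither is needed here since all spectral parameters lie in $\mathbf{B}_\kappa$, where $\Vert M\Vert\lesssim 1$ and \eqref{eq:singlegllaw} hold for arbitrary bounded $D$; Assumption~\ref{ass:Mbdd} only enters later, in Lemma~\ref{lem:2Gllfaraway}.
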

\begin{proof}[Proof of Proposition \ref{pro:mresllaw}] 
The $\eta\ge 1$ case was already explained right after Proposition \ref{pro:mresllaw}.  The more critical $\eta\le 1$ case 	immediately follows from Lemma \ref{lem:multiGll}. 
\end{proof}

\subsection{Master inequalities and reduction lemma: Proof of Lemma \ref{lem:multiGll}} \label{subsec:masterandred}
We now state the relevant part of a non-linear infinite hierarchy of coupled master inequalities 
for $\Psi^{\rm av}_k$ and $\Psi^{\rm iso}_k$. In fact, for our purposes, it is sufficient to have only the inequalities for $k \in [2]$. 
Slightly simplified versions of this master inequalities will be used in Appendix \ref{app:techlem} for general $k \in \N$.
 The proof of Proposition \ref{prop:master} is given in 
Section~\ref{subsec:proofmaster}
\begin{proposition}[Master inequalities, see Proposition 4.9 in \cite{iid}]  \label{prop:master} Assume that  for some deterministic
control parameters $ \psi_j^{\rm av/iso}$ we have that 
	\begin{equation} \label{eq:apriori Psi}
		\Psi_j^{\rm av/iso} \prec \psi_j^{\rm av/iso}\,, \quad j \in [4]\,,
	\end{equation}
holds	uniformly in regular matrices. Then we have
	\begin{subequations}
		\begin{align}
			\label{eq:masterineq Psi 1 av}
			\Psi_1^{\rm av} &\prec 1 + \frac{\psi_1^{\rm av}}{N \eta} + \frac{\psi_1^{\rm iso} +  (\psi_2^{\rm av})^{1/2}  }{(N \eta)^{1/2}} + \frac{(\psi_2^{\rm iso})^{1/2}}{(N \eta)^{1/4}}	\,, \\
			\label{eq:masterineq Psi 1 iso}
			\Psi_1^{\rm iso} &\prec 1 + \frac{\psi_1^{\rm iso} + \psi_1^{\rm av} }{(N \eta)^{1/2}} + \frac{(\psi_2^{\rm iso})^{1/2}}{(N \eta)^{1/4}} \,, \\
			\label{eq:masterineq Psi 2 av}
			\Psi_2^{\rm av} &\prec 1 + \frac{(\psi_1^{\rm av})^{2} + {(\psi_1^{\rm iso})^2} + \psi_2^{\rm av}}{N \eta}+ \frac{\psi_2^{\rm iso} + (\psi_4^{\rm av})^{1/2}  }{(N \eta)^{1/2}}  + \frac{(\psi_3^{\rm iso})^{1/2} + (\psi_4^{\rm iso})^{1/2}}{(N \eta)^{1/4}}    \,, \\
			\label{eq:masterineq Psi 2 iso}
			\Psi_2^{\rm iso} &\prec 1 + \psi_1^{\rm iso} + \frac{ \psi_1^{\rm av} \psi_1^{\rm iso} + (\psi_1^{\rm iso})^2 }{N \eta}+ \frac{\psi_2^{\rm iso} + (\psi_1^{\rm iso} \psi_3^{\rm iso})^{1/2}}{(N \eta)^{1/2}}  +   \frac{(\psi_3^{\rm iso})^{1/2} + \hspace{-2pt}(\psi_4^{\rm iso})^{1/2}}{(N \eta)^{1/4}}\,,
		\end{align}
	\end{subequations}
	again uniformly in regular matrices. 
\end{proposition}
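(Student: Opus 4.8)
The plan is to follow the strategy of \cite[Section~4]{iid}, specialised to the deformed Wigner setup $H = W+D$, and to foreground only the points where the two analyses genuinely differ. The starting point is the self-consistent identity for $G = G(z)$ derived from the MDE \eqref{eq:MDE}: using the standard second-moment renormalisation $\underline{W G}$ one has, schematically, $G - M = -M\,\underline{W G} + M\langle G - M\rangle G$. Feeding a chain $G_1 A_1 \cdots G_k A_k$ through this identity in its first slot and taking the normalised trace — respectively sandwiching between deterministic vectors $\boldsymbol x,\boldsymbol y$ in the isotropic case — reduces the chain length by one at the price of the underlined term $\E[\underline{W G_1 A_1 \cdots}]$, which is then treated by a cumulant expansion. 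The second-order (Gaussian) contribution is precisely the one that, after resummation through the stability operator $\mathcal{B}_{1k}$ from \eqref{eq:stabop}, rebuilds the deterministic approximation $M(z_1, A_1, \ldots, z_k)$ of Definition \ref{def:Mdef}; the higher-order cumulants and the ``reconnection'' terms produce the genuine error contributions.

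First I would collect the a priori inputs: the single-resolvent local law \eqref{eq:single G}, the norm and trace bounds on the deterministic objects from Lemma \ref{lem:Mbound} (which supply the leading $O(1)$ terms on the right-hand sides of \eqref{eq:masterineq Psi 1 av}--\eqref{eq:masterineq Psi 2 iso}), and the assumed bounds $\Psi_j^{\rm av/iso}\prec \psi_j^{\rm av/iso}$ for $j\in[4]$. The core of the argument is then the bookkeeping of the cumulant expansion. Each reconnection of the Wigner matrix to a resolvent further along the chain either (i) closes a loop and generates a factor $\langle\cdots\rangle$ feeding the recursion for $M(z_1,\ldots)$, where the crucial point is that since the $A_i$ are \emph{regular} (Definition \ref{def:regobs}), $\mathcal{B}$ restricted to the relevant subspace is invertible with $O(1)$ norm — the cutoff $\mathbf{1}_\delta(z_j,z_{j+1})$ guarantees that denominators such as $1-\langle M M\rangle$ stay bounded away from zero, so the resummation never lands on the near-singular direction $\Im M$; or (ii) produces a \emph{longer} chain, of length up to $2k$, which is split by Cauchy--Schwarz into two ``doubled'' chains controlled by the a priori bounds $\psi_j^{\rm av/iso}$ and then recombined via Young's inequality. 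The powers of $N\eta$ in the denominators of the master inequalities are exactly the bookkeeping of the $\eta^{k/2}$, resp.\ $\eta^{(k+1)/2}$, normalisations in \eqref{eq:Psi avk}--\eqref{eq:Psi isok} together with the $1/\sqrt{N\eta}$ gain per cumulant; since $k\le 2$ only finitely many such terms occur.

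The genuinely new feature compared with the pure Wigner analysis of \cite{multiG, A2} — and the one inherited from, but in fact milder than, \cite{iid} — is the \emph{energy dependence} of the regularisation. Whenever a cumulant-expansion step alters the resolvent content of a chain (splitting $G_j$ into $M_j$ plus fluctuation, or inserting a new resolvent at a new spectral parameter), the observables adjacent to the modified slots cease to be regular with respect to the \emph{new} neighbouring pair of spectral parameters and must be re-regularised. I would handle this exactly as in \cite{iid}: by the Lipschitz property \eqref{eq:lipprop}, $\mathring{A}^{z_1,z_2} = \mathring{A}^{w_1,w_2} + \mathcal{O}(|z_1-w_1|+|z_2-w_2|)\,I$, with the correction a scalar multiple of the identity; reinserting this splits the term into a fully regular chain (to which the induction applies) plus chains carrying one fewer regular matrix and an extra identity insertion, which are either shorter or estimated by Lemma \ref{lem:Mbound} and the integral representation of \cite[Lemma~5.1]{iid}. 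Because we retain only one of the two summands in the general two-point regularisation (cf.\ the discussion preceding \eqref{eq:lipprop}), this step is strictly simpler than in \cite{iid}; uniformity over $\mathbf{D}^{(\epsilon,\kappa)}$ follows by the usual grid argument.

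I expect the main obstacle to be precisely this re-regularisation bookkeeping: one must check that every identity-insertion term produced by \eqref{eq:lipprop} is genuinely subleading, i.e.\ that the $\mathcal{O}(|z_j - z_{j+1}| + \eta)$ loss is compensated either by a shortening of the chain or by a $1/(N\eta)$-type gain, and that the resummation through $\mathcal{B}_{1k}$ never sees the unstable direction — which is exactly what regularity of the $A_i$ and the cutoff $\mathbf{1}_\delta$ are designed to enforce. Once the $\eta$- and $N$-counting is fixed, organising the finitely many cumulant terms into the stated right-hand sides is routine.
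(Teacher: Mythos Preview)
Your proposal is correct and takes essentially the same approach as the paper: both explicitly follow \cite[Proposition~4.9 and Section~5]{iid}, specialised to the deformed Wigner setup by dropping the chiral $E_-$ terms, and proceed via the underlined representation (Lemma~\ref{lem:underlined}), a high-moment cumulant expansion, the contour-integral reduction of $G^2$ (your reference to \cite[Lemma~5.1]{iid}, the paper's \eqref{eq:integr identity}), re-regularisation through the Lipschitz property \eqref{eq:lipprop}, and closure by Young's inequality. Your sketch is slightly imprecise in one place---the stability-operator resummation that ``rebuilds $M$'' happens already in the underlined lemma (Lemma~\ref{lem:underlined}), \emph{before} the cumulant expansion of the $2p$-th moment, and the Gaussian term of that expansion instead produces the longer chains $\langle GGA'GA\rangle$, $\langle G^*GA'G^*A^*\rangle$ that are then reduced by the contour integral and the Ward identity---but this does not affect the correctness of the plan.
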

As shown in the above proposition, resolvent chains of length $k=1,2$ are estimated by resolvent chains up to length $2k$. In order to avoid the indicated infinite hierarchy of master inequalities with higher and higher $k$ indices, we will need the following \emph{reduction lemma}.

\begin{lemma}[Reduction inequalities, see Lemma 4.10 in \cite{iid}] \label{lem:reduction} 
	As in~\eqref{eq:apriori Psi}, assume that 
	$\Psi_j^{\rm av/iso} \prec \psi_j^{\rm av/iso}$ holds for $1 \le j \le 4$ uniformly in regular matrices. Then we have
	\begin{equation} \label{eq:reduction av}
		\Psi_4^{\rm av} \prec (N\eta)^2 + (\psi_2^{\rm av})^2\,,
	\end{equation}
	uniformly in regular matrices, and
	\begin{equation} \label{eq:reduction iso}
		\begin{split}
			\Psi_3^{\rm iso} &\prec N \eta \left( 1+ \frac{\psi_2^{\rm iso}}{\sqrt{N\eta}} \right) \left( 1 + \frac{\psi_2^{\rm av}}{N\eta} \right)^{1/2}\,,\\
			\Psi_4^{\rm iso} &\prec (N \eta)^{3/2}\left( 1+ \frac{\psi_2^{\rm iso}}{\sqrt{N\eta}} \right) \left( 1 + \frac{\psi_2^{\rm av}}{N\eta} \right)\,
		\end{split}
	\end{equation}
	again uniformly in regular matrices. 
\end{lemma}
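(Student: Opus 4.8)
The plan is to prove all three reduction inequalities by the same mechanism: fold the long resolvent chain onto itself with a Cauchy--Schwarz step, collapse the resulting conjugate pairs of resolvents with Ward identities, and feed the emerging length-two chains into the a priori hypotheses $\Psi_2^{\rm av/iso} \prec \psi_2^{\rm av/iso}$ together with the deterministic bounds of Lemma~\ref{lem:Mbound}. For the averaged bound \eqref{eq:reduction av} I would use cyclicity of the trace, the inequality $|\langle \mathcal{C}_1 \mathcal{C}_2 \rangle| \le \langle \mathcal{C}_1 \mathcal{C}_1^* \rangle^{1/2} \langle \mathcal{C}_2 \mathcal{C}_2^* \rangle^{1/2}$ for the normalised Hilbert--Schmidt inner product, and $AA^* \le \|A\|^2 I$ for the surplus matrices, to get
\begin{equation*}
\big| \langle G_1 A_1 G_2 A_2 G_3 A_3 G_4 A_4 \rangle \big| \lesssim \langle G_1 A_1 G_2 G_2^* A_1^* G_1^* \rangle^{1/2} \; \big\langle G_3 A_3 G_4 G_4^* A_3^* G_3^* \big\rangle^{1/2}\,.
\end{equation*}
For the isotropic chains $(G_1 A_1 G_2 A_2 G_3 A_3 G_4)_{\boldsymbol{x}\boldsymbol{y}}$ and $(G_1 A_1 \cdots A_4 G_5)_{\boldsymbol{x}\boldsymbol{y}}$ one splits $\langle \boldsymbol{x}, PQ \boldsymbol{y}\rangle \le \langle \boldsymbol{x}, PP^* \boldsymbol{x}\rangle^{1/2} \langle \boldsymbol{y}, Q^*Q \boldsymbol{y}\rangle^{1/2}$ into folded half-chains, one handled isotropically and one reduced to an averaged trace.

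In each folded half-chain the two innermost resolvents form a conjugate pair $G(z)G(z)^*$, so repeated use of the Ward identity $G(z)G(z)^* = \Im G(z)/\Im z$ (together with positivity of $\mathrm{sgn}(\Im z)\Im G(z)$) collapses it: each step yields an explicit factor $1/|\Im z_j|$ and removes one resolvent, leaving length-two objects $\langle \Im G_a A \Im G_b A^* \rangle$ in the averaged case or $\langle \boldsymbol{x}, G_a A \Im G_b A^* G_a^* \boldsymbol{x}\rangle$ in the isotropic case, together with a possible left-over single-resolvent factor $\langle \Im G_c\rangle$ or $\langle \boldsymbol{x}, \Im G_c \boldsymbol{x}\rangle$ which is $\prec 1$ by the single-resolvent local law \eqref{eq:single G}. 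Expanding $\Im G = (G - G^*)/2\mathrm{i}$ writes the length-two objects as linear combinations of length-two chains $\langle G(z_a^\pm) A G(z_b^\pm) A^* \rangle$ and their isotropic analogues, which are bounded using $\Psi_2^{\rm av/iso} \prec \psi_2^{\rm av/iso}$ (giving the $\psi_2$-dependent factors on the right-hand sides) and the bounds on $M(z_1,A_1,z_2)$ and $M(z_1,A_1,z_2,A_2,z_3)$ from Lemma~\ref{lem:Mbound} (whose worst-case sizes, combined with the accumulated $1/|\Im z_j|$ factors and the prefactors in the definitions of $\Psi_k^{\rm av/iso}$, yield the ``floor'' terms $(N\eta)^2$ in \eqref{eq:reduction av} and $(N\eta)^{3/2}$, $N\eta$ in \eqref{eq:reduction iso}). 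The regularity of the original $A_i$ enters precisely here, since it is what forces these $M$-traces to be $\mathcal{O}(1)$ rather than $\mathcal{O}(1/\eta)$.

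The main obstacle is the bookkeeping. First, each $A_j$ was regularised against the pair $(z_j,z_{j+1})$, but after folding it — and its adjoint $A_j^*$, which appears with conjugated and cyclically permuted parameters — it must be re-regularised using the Lipschitz property \eqref{eq:lipprop}, $A_j = \mathring{A}_j^{z_a^\pm, z_b^\pm} + \mathcal{O}(|\Im z_j|)\, I$; one then has to check that every error chain carrying the resulting identity insertion is of lower order, using the single-resolvent law, the short-chain estimates, the integral representation of \cite[Lemma~5.1]{iid}, and again Lemma~\ref{lem:Mbound}. Second, the exact powers of $\eta = \min_j|\Im z_j|$ gained from the Ward steps must be balanced against those lost in the $M$-bounds so that the $\psi_2^{\rm av}$- and $\psi_2^{\rm iso}$-dependent terms assemble into precisely the quoted products. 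All of this is carried out in the proof of \cite[Lemma~4.10]{iid}; since the two-point regularisation of Definition~\ref{def:regobs} and the estimate \eqref{eq:lipprop} play exactly the roles of their counterparts there, the argument transfers with only cosmetic changes, the sole new feature being that the energy dependence of the regularisation occasionally forces an extra application of \eqref{eq:lipprop}, which I would carry out explicitly.
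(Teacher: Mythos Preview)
Your high-level strategy (fold via Cauchy--Schwarz, reduce to length-two chains, feed into $\psi_2^{\rm av/iso}$) matches the paper's, and for the two isotropic bounds \eqref{eq:reduction iso} your mechanism does work (in fact it gives something a bit stronger than stated). But for the averaged bound \eqref{eq:reduction av} the specific Cauchy--Schwarz you propose is too lossy and does not prove the stated inequality.

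Concretely: writing $|\langle C_1 C_2\rangle|\le \langle C_1 C_1^*\rangle^{1/2}\langle C_2 C_2^*\rangle^{1/2}$ with $C_1=G_1A_1G_2A_2$, $C_2=G_3A_3G_4A_4$ and then using $A_2A_2^*\le \|A_2\|^2 I$, $A_4A_4^*\le \|A_4\|^2 I$ discards the regularity of $A_2,A_4$. Two Ward identities then give
\[
\big|\langle G_1A_1\cdots G_4A_4\rangle\big|
\;\lesssim\;\frac{1}{\eta^2}\,
\langle \Im G_1 A_1 \Im G_2 A_1^*\rangle^{1/2}
\langle \Im G_3 A_3 \Im G_4 A_3^*\rangle^{1/2}
\;\prec\;\frac{1}{\eta^2}\Big(1+\frac{\psi_2^{\rm av}}{N\eta}\Big),
\]
so after multiplying by $N\eta^2$ (and adding the $M$-term, of size $N\eta$) you obtain only
$\Psi_4^{\rm av}\prec N+\psi_2^{\rm av}/\eta$.
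At $\eta\sim N^{-1+\epsilon}$ this is of order $N$, whereas the claimed bound $(N\eta)^2+(\psi_2^{\rm av})^2$ is of order $N^{2\epsilon}$; plugging your bound into the master inequality \eqref{eq:masterineq Psi 2 av} via the $(\psi_4^{\rm av})^{1/2}/(N\eta)^{1/2}$ term produces a contribution $\eta^{-1/2}$, which prevents the iteration from closing to $\Psi_2^{\rm av}\prec\sqrt{N\eta}$.

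The paper's route is genuinely different here: it writes the length-four trace via spectral decomposition and applies Schwarz \emph{inside the spectral sum}, which keeps all four regular observables in play and produces length-two chains with $|G|$ rather than $\Im G/\eta$ (no $1/\eta^2$ prefactor). These $|G|$-chains are then controlled by the integral representation of \cite[Lemma~6.1]{iid}, yielding $\langle |G|A|G'|A^*\rangle\prec 1+\psi_2^{\rm av}/(N\eta)$ and hence the correct bound $(N\eta)^2+(\psi_2^{\rm av})^2$. The essential point you are missing is that one must not sacrifice the regularity of any $A_j$; the $|G|$-plus-integral-representation device is precisely what allows the Schwarz step to retain all of them.
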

\begin{proof}
This is completely analogous to \cite[Lemma~4.10]{iid} and hence omitted. The principal idea is to write out the
 lhs.~of \eqref{eq:reduction av} and \eqref{eq:reduction iso} by spectral decomposition and tacitly employ a Schwarz inequality. 
 This leaves us with shortened chains, where certain resolvents $G$ are replaced with absolute values $|G|$, 
 which can be handled by means of a suitable integral representation \cite[Lemma~6.1]{iid}
\end{proof}
Now the estimates~\eqref{lem62} follow by combining Proposition \ref{prop:master} and Lemma \ref{lem:reduction}
 in an iterative scheme, which has been carried out in detail in \cite[Section~4.3]{iid}. This completes the proof of Lemma~\ref{lem:multiGll}.
 
\subsection{Proof of the master inequalities in Proposition \ref{prop:master}} \label{subsec:proofmaster} The proof of
 Proposition~\ref{prop:master} is very similar to the proof of the master inequalities in \cite[Proposition~4.9]{iid}. Therefore, 
we shall only elaborate on \eqref{eq:masterineq Psi 1 av} as a showcase 
in some detail and briefly discuss \eqref{eq:masterineq Psi 1 iso}--\eqref{eq:masterineq Psi 2 iso} afterwards.

First, we notice that \cite[Lemma~5.2]{iid} also holds for deformed Wigner matrices (see Lemma \ref{lem:underlined} below). In order to formulate it, recall the definition of the \emph{second order renormalisation}, denoted by underline, from \cite[Equation (5.3)]{iid}. For a function $f(W)$ of the Wigner matrix $W$, we define 
\begin{equation} \label{eq:underline}
	\underline{W f(W)} := W f(W) - \widetilde{\mathbb{E}} \big[   \widetilde{W} (\partial_{\widetilde{W}}f)(W) \big]\,,
\end{equation}
where $\partial_{\widetilde{W}}$ denotes the directional derivative in the direction of $\widetilde{W}$, which is a GUE matrix that is independent of $W$. The expectation is taken w.r.t.~the matrix $\widetilde{W}$. Note that, if $W$ itself a GUE matrix, then $\E \underline{Wf(W)} = 0$, while for $W$ with general single entry distributions, this expectation is independent of the first two moments of $W$. In other words, the underline renormalises the product $W f(W)$ to second order. 

We note that $\widetilde{\E} \widetilde{W} R \widetilde{W} = \langle R \rangle$ and furthermore, that the directional derivative of the resolvent is given by	$\partial_{\widetilde{W}} G = -G \widetilde{W} G$.
For example, in the special case $f(W) = (W + D -z)^{-1} = G$, we thus have
\begin{equation*}
	\underline{WG} = WG + \langle G \rangle G
\end{equation*}
by definition of the underline in \eqref{eq:underline}.

\begin{lemma} \label{lem:underlined} Under the assumption~\eqref{eq:apriori Psi}, 
for any regular matrix $A = \mathring{A}$ we have that 
\begin{equation}
	\langle (G-M)\mathring{A}\rangle = -\langle\underline{WG\mathring{A}'}\rangle+\mathcal{O}_{\prec}\left(\mathcal{E}_1^{\rm av}\right)\,,
	\label{eq:Psi1av_fullunderline}
\end{equation}
for some other regular matrix $A'=\mathring{A}'$, which linearly depends on $A$ (see \eqref{eq:A'} for an explicit formula). 
Here $G=G(z)$ and $\eta:=|\Im z|$.  For the error term we used the shorthand notation
\begin{equation}
	\mathcal{E}_1^{\rm av} :=\frac{1}{N\eta^{1/2}}\left(1+\frac{\psi_1^{\rm av}}{N\eta}\right)\,. 
\end{equation}
\end{lemma}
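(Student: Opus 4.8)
The plan is to run the standard self-consistent-equation argument behind the local law, now for the deformed model $H=W+D$, and then to match every error term against the a priori control parameters $\psi_j^{\rm av/iso}$. Throughout, $z$ is an admissible spectral parameter with $\Re z\in\mathbf{B}_\kappa$; taking the imaginary part of \eqref{eq:MDE} then gives both the uniform boundedness $\lVert M(z)\rVert\lesssim 1$ and the \emph{scalar stability} $\lvert 1-\langle M(z)^2\rangle\rvert\gtrsim 1$, which are the two structural inputs from the deformed MDE that we shall use.

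First I would record the renormalised resolvent identity. Writing $\underline{WG}=WG+\langle G\rangle G$ (this is $WG-\widetilde{\mathbb{E}}[\widetilde{W}(\partial_{\widetilde{W}}G)]$, since $\partial_{\widetilde{W}}G=-G\widetilde{W}G$ and $\widetilde{\mathbb{E}}[\widetilde{W}R\widetilde{W}]=\langle R\rangle$) and combining $(W+D-z)G=I$ with the MDE in the form $(D-z-\langle M\rangle)M=I$, one obtains
\begin{equation*}
G-M=-M\,\underline{WG}+\langle G-M\rangle\,MG\,.
\end{equation*}
Tracing this gives the scalar equation $\langle G-M\rangle\big(1-\langle M^2\rangle-\langle M(G-M)\rangle\big)=-\langle\underline{WGM}\rangle$; since $\lvert\langle M(G-M)\rangle\rvert\prec(N\eta)^{-1}$ by \eqref{eq:single G} and $\lvert 1-\langle M^2\rangle\rvert\gtrsim1$, this yields
\begin{equation*}
\langle G-M\rangle=-\frac{\langle\underline{WGM}\rangle}{1-\langle M^2\rangle}\Big(1+\mathcal{O}_\prec\big((N\eta)^{-1}\big)\Big)\,.
\end{equation*}

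Next, tracing the displayed resolvent identity against the regular matrix $\mathring{A}$ and using $\langle M\,\underline{WG}\,B\rangle=\langle\underline{WGBM}\rangle$ for deterministic $B$, one gets $\langle(G-M)\mathring{A}\rangle=-\langle\underline{WG\mathring{A}M}\rangle+\langle G-M\rangle\langle MG\mathring{A}\rangle$. The decisive point is to substitute the scalar equation for $\langle G-M\rangle$ \emph{before} expanding $\langle MG\mathring{A}\rangle=\langle M^2\mathring{A}\rangle+\langle(G-M)\mathring{A}M\rangle$: then the correction $\langle(G-M)\mathring{A}M\rangle$ multiplies the \emph{renormalised} scalar $\langle\underline{WGM}\rangle\prec(N\eta^{1/2})^{-1}$ instead of the bare $\langle G-M\rangle\prec(N\eta)^{-1}$, which is precisely what makes the error fit. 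Collecting terms,
\begin{equation*}
\langle(G-M)\mathring{A}\rangle=-\Big\langle\underline{WG\Big(\mathring{A}M+\tfrac{\langle M^2\mathring{A}\rangle}{1-\langle M^2\rangle}\,M\Big)}\Big\rangle+\mathcal{O}_\prec(\mathcal{E}_1^{\rm av})\,,
\end{equation*}
where the error is the sum of (i) the $\mathcal{O}_\prec((N\eta)^{-1})$ relative error of the scalar equation times $\langle\underline{WGM}\rangle$, i.e.\ $\mathcal{O}_\prec((N^2\eta^{3/2})^{-1})$; and (ii) $\langle(G-M)\mathring{A}M\rangle$ — which, decomposing $\mathring{A}M$ into its $(z,z)$-regularisation (controlled by $\Psi_1^{\rm av}\prec\psi_1^{\rm av}$, cf.\ Definition~\ref{def:regobs}) plus a bounded multiple of $I$ (controlled by \eqref{eq:single G}) is $\prec\psi_1^{\rm av}(N\eta^{1/2})^{-1}+(N\eta)^{-1}$ — times $\langle\underline{WGM}\rangle\prec(N\eta^{1/2})^{-1}$; both contributions are $\le\mathcal{E}_1^{\rm av}=(N\eta^{1/2})^{-1}(1+\psi_1^{\rm av}(N\eta)^{-1})$. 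All of this rests on the basic renormalised single-resolvent bound $\lvert\langle\underline{WGB}\rangle\rvert\prec(N\eta^{1/2})^{-1}\lVert B\rVert$ for bounded deterministic $B$, which is the usual cumulant expansion and is the only step where the non-Gaussianity of $W$ enters (via third and higher cumulants, all lower order in the $\kappa$-bulk since $M(z)$ and its $z$-derivatives are bounded there). Finally, splitting the matrix $\mathring{A}M+\tfrac{\langle M^2\mathring{A}\rangle}{1-\langle M^2\rangle}M$ as $\mathring{A}'+c'I$ with $\mathring{A}'$ regular and $c'$ bounded, the term $c'\langle\underline{WG}\rangle\prec(N\eta^{1/2})^{-1}$ is absorbed into $\mathcal{E}_1^{\rm av}$, giving \eqref{eq:Psi1av_fullunderline} with the regular matrix $A'=\mathring{A}'$ of \eqref{eq:A'}; linearity of $A\mapsto A'$ is manifest.

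The main obstacle is the error bookkeeping rather than any single estimate: one must carry out the approximations in the order above (substitute the scalar equation first) so that every error term is paired with a renormalised quantity of size $(N\eta^{1/2})^{-1}$ rather than a bare $(N\eta)^{-1}$, and one must keep track of the genuinely $z$-dependent objects of the deformed model — the scalar $\langle M(z)^2\rangle$, the regularisation functional, and the matrix $A'$ — verifying their uniform boundedness and Lipschitz dependence on $z$ for $\Re z\in\mathbf{B}_\kappa$. Beyond this, the proof is a verbatim adaptation of \cite[Lemma~5.2]{iid}.
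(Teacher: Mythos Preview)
Your derivation of the identity is essentially the paper's: after unravelling your scalar substitution, you arrive at the same expression
\[
\langle(G-M)\mathring{A}\rangle=-\big\langle\underline{WG}\,\mathcal{X}[\mathring{A}]M\big\rangle+\langle G-M\rangle\big\langle(G-M)\mathcal{X}[\mathring{A}]M\big\rangle,
\]
with $\mathcal{X}[\mathring{A}]M=\mathring{A}M+\tfrac{\langle M^2\mathring{A}\rangle}{1-\langle M^2\rangle}M$. The discrepancy is entirely in the error estimation, and there is a genuine gap.

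You repeatedly invoke the bound $\lvert\langle\underline{WGB}\rangle\rvert\prec (N\eta^{1/2})^{-1}$ for a general bounded $B$, calling it ``the usual cumulant expansion''. But the only input available at this stage is the single-resolvent local law \eqref{eq:singlegllaw}, which gives $\lvert\langle(G-M)B\rangle\rvert\prec(N\eta)^{-1}$ and hence only $\lvert\langle\underline{WGB}\rangle\rvert\prec(N\eta)^{-1}$; for $\eta\le 1$ this is \emph{weaker} than what you claim. Indeed, from the exact scalar identity $\langle\underline{WGM}\rangle=-\langle G-M\rangle(1-\langle GM\rangle)$ one sees that your bound would be equivalent to the improved local law $\lvert\langle G-M\rangle\rvert\prec(N\eta^{1/2})^{-1}$, which is not what \eqref{eq:singlegllaw} asserts. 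With only the standard $(N\eta)^{-1}$ bound, your error (i) becomes $(N\eta)^{-2}$, and your final step $c'\langle\underline{WG}\rangle$ with merely bounded $c'$ becomes $(N\eta)^{-1}$; neither fits into $\mathcal{E}_1^{\rm av}=(N\eta^{1/2})^{-1}(1+\psi_1^{\rm av}/(N\eta))$ for small $\eta$ (take $\eta=N^{-1+\epsilon}$ with $\epsilon<1/3$).

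The paper avoids this in two ways. First, it works with the exact identity above from the start, so your error (i) from the scalar-equation approximation never appears --- the only quadratic error is $\langle G-M\rangle\langle(G-M)\mathcal{X}[\mathring{A}]M\rangle$. Second, and crucially, it uses that the identity-component in the splitting $\mathcal{X}[\mathring{A}]M=\mathring{A}'+c'I$ satisfies $c'=\mathcal{O}(\eta)$, not merely $c'=\mathcal{O}(1)$; this is precisely the content of \eqref{eq:A'}, inherited from \cite[Lemma~5.4]{iid} and relying on the regularity of $\mathring{A}$. With $c'=\mathcal{O}(\eta)$ and only the standard bound $\langle G-M\rangle\prec(N\eta)^{-1}$, the error splits as $\langle G-M\rangle\langle(G-M)\mathring{A}'\rangle\prec\psi_1^{\rm av}/(N^2\eta^{3/2})$ plus $\mathcal{O}(\eta)\big(\lvert\langle\underline{WG}\rangle\rvert+\lvert\langle G-M\rangle\rvert^2\big)\prec N^{-1}$, both of which sit inside $\mathcal{E}_1^{\rm av}$. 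Your argument is easily repaired by incorporating these two points; the detour through the approximate scalar equation is then unnecessary.
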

By simple complex conjugation of \eqref{eq:Psi1av_fullunderline}, we may henceforth assume that $z = e + \ii \eta$ with $\eta > 0$. The representation \eqref{eq:Psi1av_fullunderline} will be verified later. Now, using \eqref{eq:Psi1av_fullunderline} we compute the even moments of $\langle (G-M)\mathring{A}\rangle$ as
\begin{equation} \label{eq:G-M=underline}
    \E\left\vert\langle(G-M)A\rangle\right\vert^{2p}=\left\vert -\E \langle \underline{WG}A'\rangle\langle (G-M)A\rangle^{p-1}\langle (G-M)^*A^*\rangle^p\right\vert+\mathcal{O}_\prec \left(\left(\mathcal{E}_1^{\rm av}\right)^{2p}\right)
\end{equation}
and then apply a so-called \emph{cumulant expansion} to the first summand.
More precisely, we write out the averaged traces and employ an integration by parts (see, e.g., \cite[Eq.~(4.14)]{ETHpaper})
\begin{equation} \label{eq:cumex}
\E w_{ab} f(W) = \E |w_{ab}|^2 \E \partial_{w_{ba}}f(W) + ...  \quad \text{with} \quad \E |w_{ab}|^2 = \frac{1}{N}\,,
\end{equation}
indicating higher derivatives and an explicit error term, which can be made arbitrarily small, depending on the number of involved derivatives (see, e.g., \cite[Proposition 3.2]{slowcorr}). We note that, if $W$ were a GUE matrix, the relation \eqref{eq:cumex} would be exact without higher derivatives, which shall be discussed below. 

Considering the explicitly written \emph{Gaussian term} in \eqref{eq:cumex} for the main term in \eqref{eq:G-M=underline}, we find that it is bounded from above by (a $p$-dependent constant times)
\begin{equation}
\E\left[\frac{\vert \langle GGA'GA\rangle\vert + \vert\langle G^*GA'G^*A^*\rangle\vert}{N^2}\vert \langle (G-M)A\rangle\vert^{2p-2} \right]\,. 
     \label{eq:psi1av GC}
\end{equation}
The main technical tool to estimate 
\eqref{eq:psi1av GC} is the following contour integral representation for the square of resolvent (see \cite[Lemma~5.1]{iid}). This is given by
\begin{equation}
G(z)^2=\frac{1}{2\pi i}\int\limits_{\Gamma}\frac{G(\zeta)}{(\zeta-z)^2}\dif \zeta\,,
\label{eq:integr identity}
\end{equation}
where the contour $\Gamma=\Gamma(z)$ is the boundary of a finite disjoint union of half bands $J\times [\ii \tilde{\eta},\ii \infty)$
 which are parametrised counter-clockwise. Here $J$ is a finite disjoint union\footnote{Note that, for our concrete setting \eqref{eq:integr identity}, one closed interval (i.e.~half of Figure \ref{fig:decomp}) would be sufficient. However, we formulated it more generally here in order to ease the relevant modifications for the (omitted) proofs of \eqref{eq:masterineq Psi 1 iso}--\eqref{eq:masterineq Psi 2 iso}.} of closed intervals, which we take as $\mathbf{B}_{\kappa'}$ for a suitable 
 $ \kappa' \in (0, \kappa)$ -- to be chosen below -- and hence contains $e=\Re z$; the parameter $\Tilde{\eta}$ is 
 chosen to be smaller than, say, $\eta/2$. 
\begin{figure}[htbp]
	\centering
\includegraphics{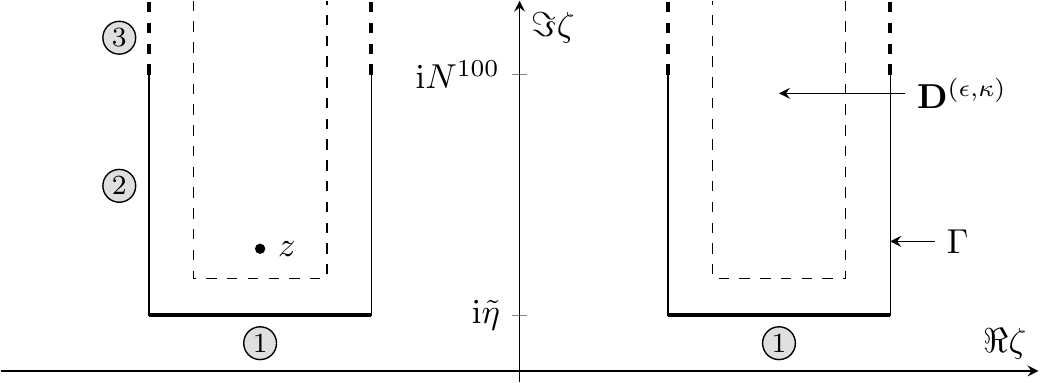}
	\caption{The contour $\Gamma$ is split into three parts (see \eqref{eq:contourdecomp}). Depicted is the situation, where the bulk $\mathbf{B}_{\kappa}$ consists of  two components. The boundary of the associated domain $\mathbf{D}^{(\epsilon, \kappa)}$ is indicated by the two U-shaped dashed lines. Modified version of \cite[Figure 4]{iid}.}
	\label{fig:decomp}
\end{figure}
Applying the integral identity \eqref{eq:integr identity} to the product $GG$ in the term $\langle GGA'GA\rangle$ yields that
\begin{equation}
\left\vert\langle GGA'GA\rangle\right\vert \lesssim \left\vert\int\limits_{\Gamma}\frac{\langle G(\zeta)A'GA\rangle}{(\zeta-z)^2}\dif\zeta\right\vert\,. 
\label{eq:psi1av GC1}
\end{equation}

Now, we split the contour $\Gamma$ in three parts, i.e.
\begin{equation} \label{eq:contourdecomp}
	\Gamma = \Gamma_1 + \Gamma_2 + \Gamma_3\,. 
\end{equation}
As depicted in Figure \ref{fig:decomp}, the first part, $\Gamma_1$, of the contour consists of the entire horizontal part of $\Gamma$. The second part, $\Gamma_2$, covers the vertical components up to $|\Im \zeta| \le N^{100}$. Finally, $\Gamma_3$ consists of the remaining part with $|\Im \zeta| > N^{100}$. 
The contribution coming from $\Gamma_3$ can be estimated with a trivial norm bound on $G$. In order to estimate the integral over $\Gamma_2$, we choose the parameter $\kappa'$ in the definition of $J = \mathbf{B}_{\kappa'}$ in such a way that the distance between $z$ and $\Gamma_2$ is greater than $\delta> 0$ from Definition \ref{def:regobs}. Hence, for $\zeta \in \Gamma_2$, every matrix is considered regular w.r.t.~$(z,\zeta)$. 

Therefore, after splitting the contour and estimating each contribution as just described, we find, with the aid of Lemma \ref{lem:Mbound}, 
\begin{equation*}
\left\vert\langle GGA'GA\rangle\right\vert \prec \left(1+\frac{\psi_2^{\rm av}}{N\eta}\right)+\int\limits_J\frac{\vert \langle G(x+\ii\Tilde{\eta})A'G(e+\ii \eta)A\rangle\vert}{(x-e)^2+\eta^2}\dif x \,.
\end{equation*}
 In this integral over $J= \mathbf{B}_{\kappa'}$, the horizontal part $\Gamma_1$, we decompose $A$ and $A'$ in accordance to the spectral parameters of the adjacent resolvents and use the following regularity property:
\begin{align*}
A&=\mathring{A}^{e+\ii \eta,e+\ii \eta}=\mathring{A}^{e+\ii \eta,x+\ii \tilde{\eta}}+\mathcal{O}(\vert x-e\vert+\eta) I \,, \\
A'&=\mathring{\left(A'\right)}^{e+\ii \eta,e+\ii \eta}=\mathring{\left(A'\right)}^{x+\ii \tilde{\eta},e+\ii \eta}+\mathcal{O}(\vert x-e\vert+\eta) I \,. 
\end{align*}
Now the integral over $J$ is represented as a sum of four integrals: one of them contains two regular matrices, the rest have at 
least one identity matrix with a small error factor.
 For the first one we use the same estimates as for the integral over the vertical part $\Gamma_2$. For the other terms, 
 thanks to the identity matrix, we use a resolvent identity, e.g.   for $G(x+i\Tilde{\eta})I G(e+i\eta)$ and
  note that the$\big(\vert x-e\vert+\eta\big)$-error improves the original $1/\eta$-blow up of $\int_J \frac{1}{(x-e)^2+\eta^2}\dif x$ to 
  an only $|\log \eta|$-divergent singularity, which is incorporated into `$\prec$'. 

For the term $\langle G^*GA'G^*A^*\rangle$ from \eqref{eq:psi1av GC}, we use a similar strategy: After an application of the Ward identity $G^* G = \Im G/\eta$, we decompose the deterministic matrices $A, A'$ according to the spectral parameters of their neighbouring resolvents in the product. This argument gives us that each of terms $\langle GGA'GA\rangle$ and $\langle G^*GA'G^*A^*\rangle$ is stochastically dominated by 
\begin{equation*}
    \frac{1}{\eta}\left(1+\frac{\psi_2^{\rm av}}{N\eta}\right).
\end{equation*}
Contributions stemming from higher order cumulants in \eqref{eq:psi1av GC} are estimated exactly in the same way as in \cite[Section~5.5]{iid}. The proof of~\eqref{eq:masterineq Psi 1 av} is
 concluded by applying Young inequalities to \eqref{eq:psi1av GC} (see \cite[Section~5.1]{iid}). 

Finally we show that \eqref{eq:Psi1av_fullunderline} holds:

\begin{proof}[Proof of Lemma \ref{lem:underlined}]
The proof of this representation is simpler than the proof of its analogue~\cite[Lemma~5.2]{iid} because
 in the current setting all terms in~\cite[Lemma~5.2]{iid} 
 containing the particular chiral symmetry matrix $E_{-}$ are absent. In the same way as in \cite{iid} we arrive at the identity
\begin{equation}
	\langle (G-M)A\rangle = -\langle \underline{WG} \mathcal{X}\left[A\right]M \rangle+\langle G-M\rangle \langle (G-M)\mathcal{X}[A]M \rangle\,,
	\label{eq:G1av underline1}
\end{equation}
where we introduced the bounded linear operator $\mathcal{X}[B] := \big(1 - \langle M \cdot M \rangle\big)^{-1}[B]$. Indeed, boundedness follows from the explicit formula
\begin{equation*}
\mathcal{X}[B] = B + \frac{\langle MBM\rangle}{1 - \langle M^2 \rangle}
\end{equation*}
by means of the lower bound
\begin{equation*}
\vert 1 - \langle M^2 \rangle \vert = \vert (1 - \langle M M^* \rangle) - 2\ii \langle M \Im M \rangle \vert   \ge 2 \langle \Im M \rangle^2 \gtrsim 1\,, 
\end{equation*}
obtained by taking the imaginary part of the MDE \eqref{eq:MDE}, in combination with $\Vert M \Vert \lesssim 1$.

 Next, completely analogously to \cite[Lemma~5.4]{iid}, we find the decomposition
\begin{equation} \label{eq:A'}
\mathcal{X}[A]M = \big(\mathcal{X}[A]M\big)^\circ + \mathcal{O}(\eta) I = \mathring{A}' + \mathcal{O}(\eta)I \,, 
\qquad A':=\mathcal{X}[A]M\,. 
\end{equation}
Plugging this into \eqref{eq:G1av underline1}, we thus infer
\begin{equation}
	\langle (G-M)A\rangle = -\langle \underline{WG}A'\rangle + \langle G-M\rangle \langle (G-M)\mathring{A}'\rangle + \left(-\langle \underline{WG}\rangle + \langle G-M\rangle^2\right) \mathcal{O}(\eta),
	\label{eq:G1av underline2}
\end{equation}
The second term in the rhs.~of \eqref{eq:G1av underline2} is obviously bounded by $\psi_1^{\rm av}/(N^2\eta^{3/2})$ and in the third term we use the usual local law \eqref{eq:singlegllaw} to estimate it by $\mathcal{O}_\prec\big(N^{-1}\big)$.
Combining these information gives~\eqref{eq:Psi1av_fullunderline}. 
\end{proof}

Notice that the above arguments leading to \eqref{eq:masterineq Psi 1 av} are completely identical to the ones 
required in the proof of the analogous master inequality in \cite[Proposition~4.9]{iid} with one minor
 but key modification: Every term involving the chiral symmetry matrix $E_-$ in \cite{iid} is simply absent and hence, 
with the notation of~\cite{iid},  all  sums over signs $\sum_{\sigma = \pm}\cdots$ collapse to a single summand with $E_+ \equiv I$. With this recipe, the proofs of \eqref{eq:masterineq Psi 1 iso}--\eqref{eq:masterineq Psi 2 iso} are completely analogous to the ones given in \cite[Sections 5.2--5.5]{iid} and hence omitted.

\appendix

\section{Additional technical Lemmas}
\label{app:techlem}
In this appendix, we prove several technical lemmas underlying the proofs our main results.

\subsection{Bounds on averaged multi-resolvent chains}
For the proof of Theorem \ref{theo:newCLT} we need to extend the key estimate 
$\big\vert \big\langle G(z_1) \mathring{A}_1 G(z_2) \mathring{A}_2 \big\rangle\big\vert \prec 1$ 
 for $\Re z_1, \Re z_2 \in \mathbf{B}_\kappa$ from \eqref{eq:2Gbddbulk}, which underlies the proof of the ETH in Theorem \ref{theo:ETH}, in two directions.
 First, we need to consider chains with an arbitrary number of resolvents in Lemma \ref{lem:addllaw},
but we will only need a quite weak suboptimal  bound which makes its proof quite direct and short. 
 Second, in Lemma \ref{lem:2Gllfaraway}, we no longer restrict $\Re z_1, \Re z_2 \in \mathbf{B}_\kappa$ to the bulk, but assume that $|\Re z_1 - \Re z_2| + |\Im z_1| + |\Im z_2| \ge \nu$ for some $N$-independent constant $\nu > 0$ and additionally allow for arbitrary (non-regular) matrices $A_1, A_2$. This second extension requires  Assumption~\ref{ass:Mbdd} on the deformation $D$, i.e.~the boundedness of $M(z)$ also for $\Re z \notin \mathbf{B}_\kappa$; this is a slightly stronger requirement than just the boundedness of $D$ 
 assumed in  Theorem \ref{theo:ETH}. 
 Both extensions are relevant for constructing the high probability event $\widehat{\Omega}$ in \eqref{eq:hatomega} for the DBM analysis. The proofs of Lemma~\ref{lem:addllaw} and 
 Lemma~\ref{lem:2Gllfaraway} are simple extensions and slight adjustments of 
 the arguments used in Proposition \ref{pro:mresllaw} and they will only be sketched.
\begin{lemma}
\label{lem:addllaw}
Fix $\epsilon>0$, $\kappa > 0$, $k\in \N$, and consider $z_1,\dots,z_{k} \in \C \setminus \R$ with $\Re z_j \in \mathbf{B}_\kappa$. Consider regular matrices $A_1,\dots,A_k$ with $\lVert A_i\rVert\le 1$, deterministic vectors ${\bm x}, {\bm y}$ with $\lVert {\bm x}\rVert +\lVert {\bm y} \rVert\lesssim 1$, and set $G_i:=G(z_i)$. Define
\begin{equation}
\mathcal{G}_k:=\widehat{G}_1A_1\dots A_{k-1}\widehat{G}_kA_k, \qquad \widehat{G}_j\in \{G_j,|G_j|\}\,. 
\end{equation}
Then, uniformly in $\eta:=\min_j |\Im z_j|\ge N^{-1+\epsilon}$, we have 
\begin{equation}
\label{eq:avellawlong}
\big|\langle \mathcal{G}_k \rangle\big|\prec \frac{N^{k/2-1}}{\sqrt{N\eta}}\,. 
\end{equation}
\end{lemma}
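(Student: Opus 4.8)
The plan is to run the same cumulant-expansion and master-inequality machinery used for Proposition~\ref{pro:mresllaw} (Sections~\ref{subsec:masterandred}--\ref{subsec:proofmaster}), now for chains of arbitrary length $k$, but aiming only at the stated \emph{suboptimal} bound, so that the delicate iterative improvement of \cite[Section~4.3]{iid} can be replaced by a single crude pass. \textbf{First, I would remove the $|G_j|$'s.} Writing each $\widehat G_j$ via its spectral decomposition and applying Cauchy--Schwarz at the positions where $\widehat G_j = |G_j|$ -- exactly the manoeuvre used in the proof of Lemma~\ref{lem:reduction} -- one bounds $|\langle \mathcal G_k\rangle|$ by a product of traces of chains containing only genuine resolvents, at the cost of at most doubling the length and of harmless factors $(N\eta)^{O(1)}$; the residual $|G|$'s are handled through the Ward identity $|G(z)|^2 = \eta^{-1}\Im G(z)$ together with the integral representation \eqref{eq:integr identity} (cf.~\cite[Lemma~6.1]{iid}). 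Hence it suffices to prove \eqref{eq:avellawlong} for $\mathcal G_k = G_1A_1\cdots G_kA_k$ with all $A_i$ regular.

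\textbf{Master inequalities for general $k$.} I would introduce the normalised quantity $\Phi_k := N^{2-k/2}\eta^{1/2}\,|\langle G_1A_1\cdots G_kA_k\rangle|$, so that the claim reads $\Phi_k\prec 1$, and assume a priori $\Phi_j\prec\varphi_j$ for $1\le j\le 2k$ uniformly in regular matrices. Applying the second-order renormalisation (underline, cf.~\eqref{eq:underline} and Lemma~\ref{lem:underlined}) to one resolvent and expanding in cumulants as in Section~\ref{subsec:proofmaster}, the Gaussian term -- after the integral identity \eqref{eq:integr identity} and the decomposition of the adjacent observables into their regular parts plus $\mathcal O(\cdot)I$-corrections -- is controlled by longer chains $\langle G_1A_1\cdots G_mA_m\rangle$ with $k<m\le 2k$ and regular observables, while higher cumulants and the identity-matrix corrections are subleading by the same bookkeeping as in \cite[Section~5.5]{iid}. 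The essential mechanism, unchanged from the Wigner and i.i.d.\ cases, is that each regular matrix improves the naive power count by $\eta^{1/2}$, which is precisely absorbed into the normalisation of $\Phi_k$; this yields a master inequality of the schematic form $\Phi_k\prec 1+\sum_{k<m\le 2k}(N\eta)^{-c_m}\varphi_m^{d_m}+(\text{subleading})$ with strictly positive exponents $c_m$.

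\textbf{Closing the hierarchy.} The longer chains on the right I would feed back through the reduction inequalities of Lemma~\ref{lem:reduction}, in their evident extension to general length (again spectral decomposition plus Schwarz combined with \eqref{eq:integr identity}), bounding $\Phi_m$ for $m\le 2k$ in terms of $\Phi_{\le k}$ and powers of $N\eta$. Starting from the trivial input $\Phi_j\prec (N\eta)^{C}$ for some fixed large $C$ (immediate from $\|G_j\|\le\eta^{-1}$) and using $N\eta\ge N^{\epsilon}$ to swallow all $N^{o(1)}$ factors, a \emph{single} substitution into the master inequalities already forces $\Phi_k\prec 1$, i.e.~\eqref{eq:avellawlong}; because of the generous slack in the target exponent no fixed-point argument is required.

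\textbf{Expected main obstacle.} The delicate part will be the bookkeeping of the two-point regularisation of Definition~\ref{def:regobs}: it is tied to the spectral parameters of the two \emph{adjacent genuine resolvents}, so after the reductions and the $|G_j|\to G_j$ replacements one must re-identify which observables are still regular (and with respect to which parameters) and absorb every mismatch -- produced by the Lipschitz estimate \eqref{eq:lipprop} -- into $\mathcal O(\cdot)I$ terms, exactly as in \cite[Sections~5.2--5.5]{iid} but with all sums over chiral signs collapsed to the single summand $E_+\equiv I$. Since only the suboptimal bound is needed, these corrections remain comfortably lower-order and the argument closes without the full iteration of \cite{iid}.
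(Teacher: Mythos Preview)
Your proposal is workable in spirit but takes a substantially heavier route than the paper, and contains one inaccuracy that would need fixing.

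\textbf{Different route.} The paper avoids setting up master inequalities (cumulant expansion plus underlined representation) for general $k$ altogether. Instead it proceeds in three short steps. First, using \emph{only} the Schwarz-type reduction inequalities \eqref{eq:even}--\eqref{eq:odd} and the $k=2$ input $|\langle GAGA\rangle|\prec 1$ from Proposition~\ref{pro:mresllaw}, it establishes the weaker bound $|\langle\mathcal G_k\rangle|\prec N^{k/2-1}$ (and its isotropic analogue) by plain induction on $k$ --- no cumulant expansion at all. Second, these a~priori bounds are fed into a cumulant expansion \emph{only for $k=3,4$} to pick up the extra $(N\eta)^{-1/2}$. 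Third, for $k\ge 5$ the reduction inequalities are applied once more: splitting a $k$-chain into two shorter chains, at least one factor has length $\ge 3$ and already carries the $(N\eta)^{-1/2}$ gain, so induction closes. This is far more economical than your plan of running the full underlined-lemma and cumulant machinery for every $k$, with the attendant bookkeeping of two-point regularisations along arbitrarily long chains.

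\textbf{The gap in your closure argument.} Your claim that the trivial bound from $\|G_j\|\le\eta^{-1}$ gives $\Phi_j\prec(N\eta)^C$ is not correct: it gives a genuine power of $N$, not of $N\eta$. Concretely, $|\langle G_1A_1\cdots G_kA_k\rangle|\le\eta^{-k}$ yields $\Phi_k\le N^{2-k/2}\eta^{1/2-k}$, which at $\eta=N^{-1+\epsilon}$ is of size $N^{k/2+3/2-O(\epsilon)}$. Feeding this into a master inequality of the form $\Phi_k\prec 1+(N\eta)^{-c_m}\varphi_m^{d_m}$ with $d_m\sim 1/2$ does \emph{not} give $\Phi_k\prec 1$ in one substitution; you would need the usual logarithmic number of self-improving iterations, i.e.\ precisely the fixed-point scheme of \cite[Section~4.3]{iid} that you hoped to bypass. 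This is repairable --- the iteration does converge --- but it contradicts your ``single crude pass'' premise and makes your route no simpler than the full machinery, whereas the paper's reduction-based induction genuinely sidesteps it.
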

\begin{proof}

We only consider the case when $\mathcal{G}_k=G_1A_1\dots A_{k-1}G_kA_k$, 
the general case when some $G_j$ is replaced with $|G_j|$ is completely analogous and so omitted. 
To keep the notation short with a slight abuse of 
 notation we will often denote $(GA)^k=G_1A_1\dots A_{k-1}G_kA_k$.

 We split the proof into three steps.
In  Step (i)  we first prove the slightly weaker bound $\big| \langle \mathcal{G}_k \rangle\big|\prec N^{k/2-1}$ for 
any $k\ge 3$ and a similar bound in isotropic sense; then, using Step (i) as an input, we will prove the 
better estimate \eqref{eq:avellawlong} for $k=3,4$; finally we prove  \eqref{eq:avellawlong} for any $k\ge 3$.
\\[1mm]
\textbf{Step (i):} Similarly to the proof of the reduction inequalities in Lemma~\ref{lem:reduction} (see \cite[Lemma 4.10]{iid}) 
we readily see that for $k=2j$ (we omit the indices):
\begin{equation}
\label{eq:even}
\small
\langle (GA)^{2j}\rangle \lesssim N \begin{cases}
\langle |G|A(GA)^{j/2-1}|G|A(G^*A)^{j/2-1}\rangle^2 &\quad j\,\,\mathrm{even}, \\
\langle |G|A(GA)^{(j-1)/2}|G|A(G^*A)^{(j -1)/2}\rangle \langle |G|A(GA)^{(j-3)/2}|G|A(G^*A)^{(j-3)/2}\rangle &\quad j\,\,\mathrm{odd}, \\
\end{cases}
\end{equation}
and for $k=2j-1$:
\begin{equation}
\label{eq:odd}
\langle (GA)^{2j-1}\rangle \lesssim \langle|G|A(GA)^{j-2}|G|A(G^*A)^{j-2} \rangle^{1/2}\langle|G|A(GA)^{j-1}|G|A(G^*A)^{j-1} \rangle^{1/2}.
\end{equation}
We proceed by induction on the length of the chain. First we use \eqref{eq:even} for $j=2$, together with $\langle GAGA\rangle\prec 1$ 
from Proposition~\ref{pro:mresllaw}, to get the bound $\langle \mathcal{G}_4 \rangle\prec N$, and then 
use this bound as an input to obtain $\langle \mathcal{G}_3 \rangle\prec N^{1/2}$ using \eqref{eq:odd}. 
Then proceeding exactly in the same way we prove that if $\langle \mathcal{G}_l \rangle\prec N^{l/2-1}$ 
holds for any $l\le k$, then the same bound holds for chains of length $l=k+1$ and $k+2$ as well. Similarly,
 in the isotropic chains we prove $\langle{\bm x}, \mathcal{G}_k{\bm y} \rangle\prec N^{(k-1)/2}$; 
 this concludes Step (i).
\\[1mm]
\textbf{Step (ii):} Given the bounds $\langle \mathcal{G}_k \rangle\prec N^{k/2-1}$, 
$ \langle{\bm x}, \mathcal{G}_k{\bm y} \rangle\prec N^{(k-1)/2}$, the estimate in \eqref{eq:avellawlong} for $k=3,4$
 immediately follows by writing the equation for $\mathcal{G}_k$, performing cumulant expansion and
  using the corresponding bounds on $M(z_1, ... , z_k)$ from Lemma \ref{lem:Mbound}. 
  This was done in \cite[Proof of Proposition 3.5]{multiG}, hence we omit the details. 
\\[1mm]
\textbf{Step (iii):} The proof of \eqref{eq:avellawlong} for $k\ge 5$ proceeds by induction.
 We first show that it holds for $k=5,6$ and then we prove that if it holds for $k-2$ and $k-1$ 
 then it holds for $k$ and $k+1$ as well. 

By Step (ii), it follows that \eqref{eq:avellawlong} holds for $k=3,4$, and for $k=2$ we have $\langle GAGA\rangle \prec 1$. 
Then by \eqref{eq:even} we immediately conclude that the same bound is true for $k=6$, which together with \eqref{eq:odd} 
also imply the desired bound for $k=5$. The key point is that \eqref{eq:even} splits a longer $k$-chain ($k$ even) into a 
product  of shorter chains of length $k_1$, $k_2$ with $k_1+k_2=k$. As long as $k\ge 5$, at least one of the shorter chain
 has already length at least three, so we gain the  factor $(N\eta)^{-1/2}$. In fact chains of length $k=2$, from which we 
 do not gain any extra factor, $| \langle GAGA\rangle | \prec 1$, appear only once when we apply \eqref{eq:even} for $k=6$. 
 But in this case the other factor is a chain of length four with a gain of a $(N\eta)^{-1/2}$ factor. Similarly, \eqref{eq:odd} 
 splits the long $k$ chain ($k$ odd) into the square root of two chains of length $k-1$, and $k+1$, and for $k\ge 5$ we 
 have the $(N\eta)^{-1/2}$ factor from both. The induction step then readily follows by using again \eqref{eq:even}--\eqref{eq:odd} 
 as explained above; this concludes the proof. In fact, in most steps of the induction we gain more than one factor
  $(N\eta)^{-1/2}$; this would allow us to improve the bound \eqref{eq:avellawlong}, but for the purpose of the 
  present paper the suboptimal estimate \eqref{eq:avellawlong} is sufficient.
\end{proof}
We now turn to the second extension of \eqref{eq:2Gbddbulk} allowing for arbitrary spectral parameters, i.e.~not necessarily in the bulk, but separated by a safe distance $\nu > 0$. 
\begin{lemma} \label{lem:2Gllfaraway} Fix $\epsilon, \nu > 0$ and let the deformation $D \in \C^{N \times N}$ satisfy Assumption \ref{ass:Mbdd}. Let $z_1, z_2 \in \C \setminus \R$ be spectral parameters with $\Delta :=|\Re z_1 - \Re z_2| + |\Im z_1| + |\Im z_2| \ge \nu> 0$ and 
$B_1, B_2 \in \C^{N \times N}$ bounded deterministic matrices. 
Then, uniformly in $\eta := \min \big( |\Im z_1|, |\Im z_2|\big) \ge N^{-1+\epsilon}$, it holds that
	\begin{equation}
\big| \big\langle  G(z_1) B_1 G(z_2) B_2\big\rangle \big| \prec 1\,. 
	\end{equation}
\end{lemma}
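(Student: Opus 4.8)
The plan is to follow the cumulant‑expansion scheme behind Proposition~\ref{pro:mresllaw} (the $k=2$ averaged case), with two structural simplifications. First, since $\Delta\ge\nu$, we may shrink the parameter $\delta$ in Definition~\ref{def:regobs} (say to $\delta<\nu/3$) so that $\mathbf{1}_\delta(z_1,z_2)=\mathbf{1}_\delta(z_2,z_1)=0$; then \emph{every} bounded matrix is automatically regular with respect to $(z_1,z_2)$ and $(z_2,z_1)$, so the only genuine extension over Proposition~\ref{pro:mresllaw} is the removal of the bulk constraint $\Re z_j\in\mathbf{B}_\kappa$. This is precisely where Assumption~\ref{ass:Mbdd} enters: it guarantees $\Vert M(z_j)\Vert\lesssim 1$ for all $z_j$ and makes the single‑resolvent local law \eqref{eq:singlegllaw} available for arbitrary $\Re z_j$ (via \cite{edgelocallaw}).

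The one new analytic input is the uniform bound $\Vert \mathcal{B}(z_1,z_2)^{-1}\Vert\lesssim_\nu 1$ for the two‑body stability operator from \eqref{eq:stabop}. Since $\mathcal{B}(z_1,z_2)$ is a rank‑one perturbation of the identity, $\mathcal{B}(z_1,z_2)^{-1}[X]=X+\langle X\rangle M(z_1)M(z_2)/\big(1-\langle M(z_1)M(z_2)\rangle\big)$, and this reduces to showing $|1-\langle M(z_1)M(z_2)\rangle|\gtrsim_\nu 1$ whenever $\Delta\ge\nu$. I would verify this by a short case analysis using the MDE \eqref{eq:MDE} and Assumption~\ref{ass:Mbdd}: if $\mathrm{dist}(\Re z_j,\mathrm{supp}\,\rho)\gtrsim 1$ for some $j$ then $\langle M(z_1)M(z_2)\rangle$ is small; if both $\Re z_j$ are near $\mathrm{supp}\,\rho$ but $|\Im z_1|+|\Im z_2|\gtrsim\nu$, one uses $|1-\langle M(z)^2\rangle|\ge 2\langle\Im M(z)\rangle^2$ together with a continuity argument in the spectral parameters; and if $|\Re z_1-\Re z_2|\gtrsim\nu$ one uses the identity $1-\langle M(z_1)M(z_2)\rangle=(z_1-z_2)\langle M(z_1)M(z_2)\rangle/\langle M(z_1)-M(z_2)\rangle$ obtained from the MDE (the real‑part‑separated case being, in effect, the statement recalled just above \eqref{eq:apriori}). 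Granted this, the recursion \eqref{eq:M_definitionapp} immediately yields $\Vert M(z_1,B_1,z_2)\Vert\lesssim_\nu 1$, and more generally all deterministic approximations appearing in the expansion stay bounded, since the loop $G(z_1)\cdots G(z_2)\cdots$ alternates between the two well‑separated parameters and no coinciding‑parameter stability operator with an $\eta$‑blow‑up is produced at second order.

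With these inputs I would rerun the cumulant expansion exactly as in the proof of \eqref{eq:masterineq Psi 1 av} (together with the Ward identity, the contour representation \eqref{eq:integr identity} for repeated resolvents, and the single‑resolvent law \eqref{eq:single G}) to obtain $\big|\langle(G(z_1)B_1G(z_2)-M(z_1,B_1,z_2))B_2\rangle\big|\prec(N\eta)^{-1/2}$; crucially, because $\mathcal{B}(z_1,z_2)^{-1}$ is bounded, the resulting system of inequalities closes in one step instead of through the iterative scheme of \cite{iid}. Combined with $\Vert M(z_1,B_1,z_2)\Vert\lesssim 1$ this gives $|\langle G(z_1)B_1G(z_2)B_2\rangle|\prec 1$, as claimed.

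The main obstacle is the uniform lower bound $|1-\langle M(z_1)M(z_2)\rangle|\gtrsim_\nu 1$ in the full generality $\Delta\ge\nu$ — in particular for spectral parameters near the spectral edges, where $\langle\Im M\rangle$ degenerates — which is exactly where Assumption~\ref{ass:Mbdd} is indispensable. The remaining work is the routine but lengthy bookkeeping confirming that, in the separated regime, the contributions of higher‑order cumulants and of repeated‑resolvent sub‑chains are all absorbed into the claimed error, as in \cite{iid,multiG}; here the weak target ($\prec 1$ rather than a sharp power of $(N\eta)^{-1}$) leaves ample room.
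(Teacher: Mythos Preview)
Your approach is essentially the same as the paper's: both recognise that $\Delta\ge\nu$ makes the stability operator $\mathcal{B}(z_1,z_2)$ boundedly invertible (the paper quotes the estimate $|1-\langle M_1M_2\rangle|\gtrsim\Delta\wedge 1$ directly, so your case analysis is unnecessary), so every observable effectively behaves as regular, and then rerun the master--inequality/reduction machinery under Assumption~\ref{ass:Mbdd} to reach $\big|\langle(G_1B_1G_2-M(z_1,B_1,z_2))B_2\rangle\big|\prec(N\eta)^{-1/2}$. Your shortcut of shrinking $\delta$ so that $\mathbf{1}_\delta(z_1,z_2)=0$ is a clean way to phrase this.

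One oversimplification is worth flagging. Your claim that ``no coinciding-parameter stability operator with an $\eta$-blow-up is produced'' and that the system ``closes in one step'' is not quite right. You cite \eqref{eq:masterineq Psi 1 av}, but the relevant case is $k=2$, and the Gaussian contribution there produces five-resolvent terms such as $\langle G_2B_2G_1B_1G_2G_1B_1G_2B_2\rangle$. After resolving the adjacent pair $G_2G_1$ via the resolvent identity (or a contour integral when $z_1\approx z_2$ but both have large imaginary parts, a subcase of $\Delta\ge\nu$ you do not mention), the resulting four-resolvent chains are \emph{not} all alternating: e.g.\ $\langle G_2B_2G_1B_1G_2B_1G_2B_2\rangle$ contains the unit $G_2B_1G_2$ with coinciding parameters. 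The paper handles this by a further Schwarz/reduction step into two-resolvent chains, estimating the non-good ones trivially by $1/\eta$ while enough good ones survive to close the bound. So reduction inequalities are still genuinely needed here, not merely ``routine bookkeeping''; the paper's sketch makes this explicit.
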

\begin{proof} The proof is very similar to that of Proposition \ref{pro:mresllaw}, relying on a system of 
\emph{master inequalities} (Proposition \ref{prop:master}) complemented by the \emph{reduction inequalities} 
(Lemma \ref{lem:reduction}),  we just comment on the minor differences. 

Recall that the naive size, in averaged sense, of a chain
\begin{equation}\label{GBG1}
G_1B_1G_2B_2\ldots G_{k-1} B_{k-1} G_k
\end{equation}
 with $k$ resolvents
and arbitrary deterministic matrices  in between is of order $\eta^{-k+1}$; generically this is the size of the corresponding 
deterministic term in the usual multi-resolvent local law
(see \cite[Theorem 2.5]{multiG} with $a = 0$ for the case of Wigner matrices)
\begin{equation} \label{eq:Psikavisonaive}
	\begin{split}
|\langle  G_1 B_1 \cdots G_{k} B_k  -  M(z_1, B_1, ... , z_{k}) B_k  \rangle | \prec \frac{1}{N \eta^k} \\
 \left\vert  \big(  G_1 B_1 \cdots B_{k} G_{k+1} - M(z_1, B_1, ... , B_{k}, z_{k+1}) \big)_{\boldsymbol{x} \boldsymbol{y}} \right\vert \prec \frac{1}{\sqrt{N \eta} \, \eta^k}
	\end{split}
\end{equation}
with the customary short hand notations
\begin{equation*} 
	G_i:= G(z_i)\,, \quad \eta := \min_i |\Im z_i|\,, \quad \boldsymbol{z}_k :=(z_1, ... , z_{k})\,, \quad \boldsymbol{B}_k:=(B_1, ... , B_{k})\,.
\end{equation*}
In the following, we will consider every deterministic matrix $B_j$ together with its neighbouring resolvents, $G_jB_j G_{j+1}$, 
which we will call the \emph{unit} of $B_j$. 
Two units are called \emph{distinct} if they do not share a common resolvent
and we will count the number of such distinct units.\footnote{However, in the averaged case, one of the $k$ resolvents can be ``reused'' in this counting.} The main mechanism for the improvement over~\eqref{eq:Psikavisonaive} in Proposition \ref{pro:mresllaw} 
for \emph{regular} matrices was that  
for every distinct
unit $G_j B_j G_{j+1}$ in the initial resolvent chain with a \emph{regular} $B_j$, the naive size of  $M$ gets reduced by an $\eta$-factor, yielding the bound $\eta^{-\lfloor k/2\rfloor+1 }$ in \eqref{eq:Mboundtrace}
when all matrices are regular.\footnote{This improvement was also termed as the $\sqrt{\eta}$-rule, asserting that
 every regular matrix improves the $M$ bound and the error in the local law by a factor $\sqrt{\eta}$.
 This formulation is somewhat imprecise, the $M$ bound always involves integer $1/\eta$-powers;
 the correct counting is that each \emph{distinct} unit of regular matrices yields a factor $\eta$. However, the $\sqrt{\eta}$-rule applies
 to the error term.}  In the most relevant regime of small $\eta\sim N^{-1+\epsilon}$
this improvement in $M$ is (almost) matched by the corresponding 
 improvement in the error term, see  $N^{k/2-1}$ in~\eqref{eq:avellaw} (except that for odd $k$, the error is bigger 
 by an extra 
 $\eta^{-1/2}\sim N^{1/2}$). 
   
 The key point is that  if the spectral parameters $z_j$ and $z_{j+1}$ are "far away"
 in the sense that 
\begin{equation} \label{eq:Deltai}
\Delta_{j} := |\Re z_j- \Re z_{j+1}| + |\Im z_j| + |\Im z_{j+1}| \ge \nu > 0\,,
 \end{equation} 
 then \emph{any} matrix $B_j$ in the chain $\ldots G_j B_j G_{j+1}\ldots $
behaves as if it were regular. The reason is that the corresponding stability operator  $\mathcal{B}_{j,j+1}$ from 
\eqref{eq:stabop} (explicitly given in \eqref{eq:B-1explicit} and \eqref{eq:betalowerbound} below) has no singular direction,
its inverse  is bounded, i.e.
\begin{equation*} 
\Vert \mathcal{B}_{j,j+1}^{-1}[R] \Vert \lesssim \Vert R \Vert \quad \text{for all} \quad j \in [k]\,, \quad R \in \C^{N \times N}\,. 
\end{equation*}
For example, using the definition~\eqref{eq:M_definitionapp}, we have
 \begin{equation}\label{Mb}
  \| M(z_1, B, z_2)\|\lesssim 1
 \end{equation}
 whenever $\Delta_{12}\ge \nu$, hence $\mathcal{B}_{12}^{-1}$ is bounded. 
Therefore, when mimicking the proof of Proposition \ref{pro:mresllaw}, 
instead of counting regular matrices with distinct units,
we need to count the distinct units within the chain~\eqref{GBG1} for which the corresponding spectral parameters are far away;
their overall effects are the same -- modulo a minor  difference, that now the errors for odd $k$ do not get increased 
by $\eta^{-1/2}$ when compared to the $M$-bound (see later). 

To be  more precise, in our new setup 
we introduce the 
modified\footnote{Notice that for odd $k$, the $\eta$-power in the prefactor is slightly different from
those in~\eqref{eq:Psi avk} and \eqref{eq:Psi isok}.} normalised differences 
\begin{align} \label{eq:Psitilde avk}
	\widetilde{\Psi}_k^{\rm av}(\boldsymbol{z}_k, \boldsymbol{B}_k) &:= N \eta^{\lfloor k/2 \rfloor} |\langle  G_1 B_1 \cdots G_{k} B_k 
	 -  M(z_1, B_1, ... , z_{k}) B_k  \rangle |\,, \\
	\label{eq:Psitilde isok}
	\widetilde{\Psi}_k^{\rm iso}(\boldsymbol{z}_{k+1}, \boldsymbol{B}_{k}, \boldsymbol{x}, \boldsymbol{y}) &:= \sqrt{N \eta} \, \eta^{\lfloor k/2 \rfloor} \left\vert  \big(  G_1 B_1 \cdots B_{k} G_{k+1} - M(z_1, B_1, ... , B_{k}, z_{k+1}) \big)_{\boldsymbol{x} \boldsymbol{y}} \right\vert
\end{align}
for $k \in \N$ 	as a new set of basic control quantities (cf.~\eqref{eq:Psi avk} and \eqref{eq:Psi isok}). 
The deterministic counterparts $M$ used in \eqref{eq:Psitilde avk} and \eqref{eq:Psitilde isok} are again given recursively in
  Definition \ref{def:Mdef}. Contrary to \eqref{eq:Psi avk} and \eqref{eq:Psi isok}, the deterministic matrices 
  $\Vert B_j \Vert \le 1$, $j \in [k]$, are \emph{not} assumed to be regular.
   This ``lack of regularity'' is compensated by the requirement that consecutive spectral parameters $z_j, z_{j+1}$ of the unit
    of $B_j$ satisfy~\eqref{eq:Deltai}.
 Just as in Definition \ref{def:regobs}, in case of \eqref{eq:Psitilde avk}, the indices in \eqref{eq:Deltai} are
  understood cyclically modulo $k$. Chains satisfying \eqref{eq:Deltai} for all $j \in [k]$ are called \emph{good}. Hence, in a good
  chain one can potentially gain a factor $\eta$ from every unit $G_j B_j G_{j+1}$. Therefore, analogous to the regularity requirement for \emph{all} deterministic matrices in Definition \ref{def:epsi unif}, the normalised differences $\widetilde{\Psi}_k^{\rm av/iso}$ in \eqref{eq:Psitilde avk} and \eqref{eq:Psitilde isok} will only be used for good chains. 
 
 As already indicated above, the analogy between our new setup and the setup of Proposition \ref{pro:mresllaw} is not perfect due to the following reason: For $k=1$ the error bounds in~\eqref{eq:Psikavisonaive}
 improve by $\sqrt{\eta}$ for $B_1$ being a regular matrix, but for $\Delta_1 \ge \nu > 0$, the improvement is by a full power of $\eta$.\footnote{For the averaged case, this improvement is really artificial, since the $\Delta_1 \ge \nu$-requirement means that $\eta \gtrsim 1$. }
 This discrepancy causes slightly different $\eta$-powers for odd $k$ in all estimates (cf.~\eqref{eq:Psitilde avk} and \eqref{eq:Psitilde isok}).

We now claim  that for good chains, the requirement \eqref{eq:Deltai} for all $j\in [k]$ reduces the naive sizes
of the errors in the usual multi-resolvent local laws~\eqref{eq:Psikavisonaive} at least 
by a factor $\eta^{\lceil k/2 \rceil}$ for $k=1,2$.  
Previously, in the proof of Proposition \ref{pro:mresllaw}, these sizes got reduced by a factor $\sqrt{\eta}$ for every matrix $B_j$ which was regular in the sense of Definition \ref{def:regobs}. Now, compared to this \emph{regularity gain}, the main effect for our new, say, \emph{$\nu$-gain} for good chains is that for every $j \in [k]$, the inverse of the stability operator \eqref{eq:stabop} (explicitly given in \eqref{eq:B-1explicit} and \eqref{eq:betalowerbound} below) is bounded, i.e.
\begin{equation} \label{eq:stabopbounded}
\Vert \mathcal{B}_{j,j+1}^{-1}[R] \Vert \lesssim \Vert R \Vert \quad \text{for all} \quad j \in [k]\,, \quad R \in \C^{N \times N}\,. 
\end{equation}

Armed with \eqref{eq:stabopbounded}, completely analogously to Proposition \ref{pro:mresllaw}, one then starts a proof of the master inequalities (similar to those in Proposition \ref{prop:master}): First, one establishes suitable \emph{underlined lemmas} (cf.~Lemma~\ref{lem:underlined} and \cite[Lemmas 5.2, 5.6, 5.8, and 5.9]{iid}), where now no \emph{splitting} 
of observables into singular and regular parts (see, e.g., \eqref{eq:A'} and \cite[Equation (5.35)]{iid}) is necessary,
 since the bounded matrices $B_j$ are arbitrary. Afterwards, the proof proceeds by cumulant expansion (see \eqref{eq:psi1av GC}), where resolvent chains of length $k$ are estimated by resolvent chains of length up to $2k$. 
 This potentially infinite hierarchy is truncated by suitable reduction inequalities, 
as in Lemma~\ref{lem:reduction}.  
Along this procedure, 
we also create non-good chains, but  a direct inspection\footnote{We spare the reader from presenting 
the case by case checking for the new setup,
 but we point out that this is doable since Lemma~\ref{lem:2Gllfaraway}, as well as  Proposition \ref{pro:mresllaw}, 
concern chains of length at most $k\le 2$. Extending these local laws for general $k$ is possible, but it would require a more
systematic power-counting of good chains.} shows that 
there are always sufficiently many good chains left that
provide the necessary improvements,  exactly  as   in the proof of Proposition \ref{pro:mresllaw}.

Just to indicate this mechanism, consider,
 for example, the Gaussian term appearing in the cumulant expansion of \eqref{eq:Psitilde avk} for $k=2$, analogous to \eqref{eq:psi1av GC}.
 In this case we encounter  the following term with five resolvents that we immediately estimate in terms of chains with four resolvents: 
 \begin{equation} \label{eq:exampleterm}
\frac{\vert \langle G_2 B_2 G_1 B_1 G_2 G_1 B_1 G_2 B_2 \rangle\vert }{N^2} \prec \frac{\vert \langle G_2 B_2 G_1 B_1 G_2 B_1 G_2 B_2 \rangle\vert + \vert \langle G_2 B_2 G_1 B_1 G_1 B_1 G_2 B_2 \rangle\vert }{N^2}\,.
 \end{equation}
Here we used that $\Delta =|\Re z_1 - \Re z_2| + |\Im z_1| + |\Im z_2| \ge \nu> 0$  to reduce  $G_2 G_1$ 
to a single $G$ term. Strictly speaking,  the estimate~\eqref{eq:exampleterm} directly follows from the
resolvent identity $G_2G_1= (G_2-G_1)/(z_1-z_2)$
only when $|z_1-z_2| \sim |\Re z_1 - \Re z_2| + |\Im z_1-\Im z_2|\gtrsim \nu$; this latter
condition  follows from  $\Delta\ge \nu$
only if $\Im z_1\cdot \Im z_2 <0$. In the remaining case, when $z_1\approx z_2$ but both 
with a large imaginary part (since $\Delta\ge \nu$), we can use an appropriate contour integral representation
$$
     G(z_2)G(z_1) =\frac{1}{2\pi \ii} \int_\Gamma  \frac{G(\zeta)}{(\zeta-z_1)(\zeta-z_2)} {\rm d}\zeta
$$
similar to~\eqref{eq:integr identity} with a contour well separated from $z_1, z_2$.  
Hence  we obtain a four-resolvent chain $\langle G_2 B_2 G_1 B_1 G(\zeta) B_1 G_2 B_2 \rangle$
on the rhs.~of~\eqref{eq:exampleterm}, where 
the spectral parameter $\zeta$ of the
new $G(\zeta)$ resolvent is "far away" from the other spectral parameters, and it can be treated 
 as  $\langle G_2 B_2 G_1 B_1 G_j B_1 G_2 B_2 \rangle$, $j=1,2$.
In fact, in our concrete application of Lemma~\ref{lem:2Gllfaraway} we always know
that not only $\Delta\ge \nu$, but already $|\Re z_1 - \Re z_2|\ge \nu$, hence the  argument
with the resolvent identity is always sufficient.

Note that the two chains on 
the rhs.~of \eqref{eq:exampleterm} cannot  be directly cast in the form \eqref{eq:Psitilde avk}, since 
not every unit has well separated spectral parameters (e.g. we have $G_2B_1G_2$), hence these chains are not good. 
However, after application of a reduction inequality (see \eqref{eq:even}), we find that 
\begin{equation*}
\vert \langle G_2 B_2 G_1 B_1 G_2 B_1 G_2 B_2 \rangle\vert \prec N \big( \langle |G_2| B_2 |G_1| B_2^* \rangle \langle |G_1|B_1 |G_2|B_1^* \rangle \langle |G_2| B_1 |G_2| B_1^*\rangle  \langle |G_2| B_2 |G_2| B_2^*\rangle\big)^{1/2}
\end{equation*}
and analogously for the second summand in \eqref{eq:exampleterm}. Estimating the two shorter  non-good 
chains involving only $G_2$ by $1/\eta$ via a trivial Schwarz inequality, this yields that
\begin{equation*}
\frac{\vert \langle G_2 B_2 G_1 B_1 G_2 G_1 B_1 G_2 B_2 \rangle\vert }{N^2} \prec \frac{ \big( \langle |G_2| B_2 |G_1| B_2^* \rangle \langle |G_1|B_1 |G_2|B_1^* \rangle \big)^{1/2}}{N\eta}\,,
\end{equation*}
where the remaining shorter chains are good and can be estimated in terms of $\widetilde{\Psi}_2^{\rm av}$.
 A similar mechanism works for any other term. This completes the discussion of the discrepancies between the current setup and Proposition \ref{pro:mresllaw}.

 Notice that this argument always assumes that we have a single resolvent local 
law and that $M$'s are bounded.  At potential cusps in the scDos $\rho$ we do not have a single resolvent local law 
 (see the discussion below \eqref{eq:singlegllaw}) and  the estimates
  on $\Vert M(z)\Vert$ for $\Re z$ close to edges (and cusps) of $\rho$ may deteriorate for general deformation $D$. 
  However, these two phenomena are simply excluded by Assumption \ref{ass:Mbdd} 
  on the deformation $D$ (see also Remark \ref{rmk:Mbdd}). 
In particular, this assumption allows us to show  exactly
 same estimates on $M(z_1, ... , z_k)$ as given in Lemma~\ref{lem:Mbound}, 
 which serve as an input in the proof sketched above. 

To conclude, similarly to Proposition \ref{pro:mresllaw}, our method again shows that
\begin{equation*}
\big| \big\langle  \big(G(z_1) B_1 G(z_2)  - M(z_1, B_1, z_2)\big)B_2\big\rangle \big| \prec \frac{1}{(N \eta)^{1/2}}
\end{equation*}
which, together with the corresponding bound~\eqref{Mb}, 
 immediately yields the desired bound and completes the proof of Lemma~\ref{lem:2Gllfaraway}.
\end{proof}

\subsection{Proof of Lemma \ref{lem:Mbound}}
	The proof is completely analogous to the proof of Lemma 4.3 from \cite{iid}, hence we only show how the three main technical aspects of the latter should be adjusted to our setup of deformed Wigner matrices.  In general, the setup of~\cite{iid} is more
	complicated due to the chiral symmetry which involves summations over signs $\sigma=\pm$.
	As a rule of thumb, we can obtain the necessary $M$-formulas for our current case just by mechanically using
	the corresponding formulas in~\cite{iid} and drop the $\sigma=-1$ terms. 
	\\[1mm]
\underline{Recursive Relations:} The principal idea is to derive several different recursive relations for $M(z_1, ... , z_k)$ 
(which itself is defined by one of those in Definition \ref{def:Mdef})
 by a so-called \emph{meta argument} \cite{metaargument, iid}.  These alternative recursions can then be employed to prove Lemma \ref{lem:Mbound} iteratively in the number of spectral parameters. 
 These recursive relations are identical to those in Lemma D.1
 of the \href{https://arxiv.org/abs/2301.03549}{arXiv: 2301.03549} version of~\cite{iid} 
 when dropping the $\sigma = -1$ terms in Eqs. (D.1) and (D.2) therein
  and writing the $N \times N$ identity instead of $E_+$.  
\\[1mm]
\underline{Stability Operator:} The inverse of the stability operator \eqref{eq:stabop} can be expressed in the following explicit form
	\begin{equation} \label{eq:B-1explicit}
		\mathcal{B}_{12}^{-1}[R]=R+\frac{\langle R \rangle}{1-\langle M_1M_2\rangle}M_1 M_2 = R + \frac{1}{\beta_{12}}  \langle R \rangle M_1 M_2 \,,
	\end{equation}
where $\beta_{12} := 1 - \langle M_1 M_2 \rangle$ is the only non-trivial eigenvalue of $\mathcal{B}_{12}$. 
Completely analogously to \cite[Lemma~B.2~(b)]{iid}, it holds that 
\begin{equation} \label{eq:betalowerbound}
|\beta_{12}| \gtrsim\big( |\Re z_1 - \Re z_2| + |\Im z_1| + |\Im z_2|\big) \wedge 1\,,
\end{equation}
 which, in combination with \eqref{eq:B-1explicit}, in particular  implies \eqref{eq:Mboundnorm} for $k=1$ and \eqref{eq:Mboundtrace} for $k=2$.
	\\[1mm]
	\underline{Longer Chains:}	In order to prove \eqref{eq:Mboundnorm} for $k=3$, similarly to \cite{iid} we verify 
	at first \eqref{eq:Mboundnorm} for $k=2$ in the case when exactly one observable is regular. 
	For this purpose we again use the recursive relation of the form Eq.~(D.12)
	of the \href{https://arxiv.org/abs/2301.03549}{arXiv: 2301.03549} version of~\cite{iid}:
	\begin{equation*}
		M(z_1,A_1,z_2,A_2,z_3)
		=M(z_1,\mathcal{X}_{12}[A_1]M_2A_2,z_3)+M(z_1,\mathcal{X}_{12}[A_1]M_2,z_3)\langle M(z_2,A_2,z_3)\rangle\,,
	\end{equation*}
where we denoted the linear operator $\mathcal{X}_{mn}$ as
\begin{equation} \label{eq:Xmn}
\mathcal{X}_{mn}[R] := \big( 1 - \langle M_m \cdot M_n\rangle \big)^{-1}[R] \qquad \text{for} \quad R \in \C^{N \times N}\,. 
\end{equation}
	Now, similarly to the arguments around Eq.~(D.13)
	of the \href{https://arxiv.org/abs/2301.03549}{arXiv: 2301.03549} version of~\cite{iid}, 
	we observe a balancing cancellation in the last term, which comes from the continuity with 
	respect to one of spectral parameters of the regular part of a deterministic matrix when another 
	spectral parameter is fixed (see \eqref{eq:lipprop}). \qed


\begin{thebibliography}{10} 
	\bibitem{1506.05095} O.~H.~Ajanki, L.~Erd\H{o}s, T.~Krüger. \emph{Quadratic vector equations on complex upper half-plane}. American Mathematical Society Vol.~261. No.~1261  (2019)
	
	\bibitem{firstcorr} O.~H.~Ajanki, L.~Erd\H{o}s, T.~Krüger. Stability of the matrix Dyson equation and random matrices with correlations. \emph{Probability Theory and Related Fields} \textbf{173}, 293--373 (2019)
	
	
	
	\bibitem{1804.07752} J.~Alt, L.~Erd\H{o}s, T.~Krüger. The Dyson Equation with Linear Self-Energy: Spectral Bands, Edges and Cusps. \emph{Documenta Mathematica} \textbf{25} 1421--1539 (2020)
	
	\bibitem{edgelocallaw} J.~Alt, L.~Erd\H{o}s, T.~Krüger, D.~Schröder. Correlated random matrices: band rigidity and edge universality. \emph{The Annals of Probability} \textbf{48}, 963--1001 (2020)
	
	
	\bibitem{Equipartition} Z.~Bao, L.~Erd\H{o}s, K.~Schnelli. Equipartition principle for Wigner matrices. \emph{Forum of Mathematics, Sigma} \textbf{9}, E44 (2021)
	
	\bibitem{BSL2018} P.~Bialas, J.~Spiechowicz, J.~Łuczka.  Partition of energy for a dissipative quantum oscillator.
	\emph{Scientific Reports} \textbf{8}, 16080 (2018)

	
	\bibitem{BialasSpiechowiczLuczka} P.~Bialas, J.~Spiechowicz, J.~Łuczka. Quantum analogue of energy equipartition theorem. \emph{Journal of Physics A: Mathematical and Theoretical} \textbf{52}, 15LT01 (2019)
	
	
	\bibitem{Benigni}  L.~Benigni. Eigenvectors distribution and quantum unique ergodicity for deformed Wigner matrices.
	\emph{Annales de l’Institut Henri Poincar\'e - Probabilit\'es et Statistiques} \textbf{56}, 2822--2867 (2020)
	
	\bibitem{CipolloniBenigni2022} L.~Benigni, G.~Cipolloni. Fluctuations of eigenvector overlaps and the Berry conjecture for Wigner matrices.
	arXiv: \href{https://arxiv.org/abs/2212.10694}{2212.10694} (2022)
	
	\bibitem{BenigniLopatto2103.12013} L.~Benigni, P.~Lopatto. Fluctuations in local Quantum Unique Ergodicity for generalized Wigner matrices. \emph{Communications in Mathematical Physics} \textbf{391}, 401--454 (2022)
	
		\bibitem{BenigniLopattoDelocalization} L.~Benigni, P.~Lopatto. Optimal delocalization for generalized Wigner matrices. \emph{Advances in Mathematics} \textbf{396}, 108109 (2022)
	
	\bibitem{biane} P.~Biane. On the Free Convolution with a Semi-circular Distribution. \emph{Indiana University Mathematics Journal} \textbf{46}, 705--718 (1997)
	
	\bibitem{BloemendalEKYY}A.~Bloemendal, L.~Erd\H{o}s, A.~Knowles, H.-T.~Yau, J.~Yin. Isotropic local laws for sample covariance and generalized Wigner matrices. \emph{Electronic Journal of Probability} \textbf{19}, 1--53 (2014)
	
	
	\bibitem{BourgadeYau1312.1301} P.~Bourgade, H.-T.~Yau. The eigenvector moment flow and local quantum unique ergodicity. \emph{Communications in Mathematical Physics} \textbf{350}, 231--278 (2017)
	
	\bibitem{BourgadeYauYin1807.01559} P.~Bourgade, H.-T.~Yau, J.~Yin.  Random Band Matrices in the Delocalized Phase I: Quantum Unique Ergodicity and Universality. \emph{Communications in Pure and Applied Mathematics} \textbf{73}, 1526--1596 (2020)
	
	
	\bibitem{ETHpaper} G.~Cipolloni, L.~Erdős, D.~Schröder. Eigenstate Thermalization Hypothesis for Wigner matrices. \emph{Communications in Mathematical Physics} \textbf{388}, 1005--1048 (2021).
	
	
	\bibitem{thermalization} G.~Cipolloni, L.~Erdős, D.~Schröder. Thermalisation for Wigner matrices. \emph{Journal of Functional Analysis} \textbf{282}, 109394 (2022)
	
	\bibitem{normalfluc} G.~Cipolloni, L.~Erdős, D.~Schröder. Normal fluctuation in quantum ergodicity for Wigner matrices. \emph{The Annals of Probability} \textbf{50}, 984--1012 (2022)
	
	\bibitem{multiG} G.~Cipolloni, L.~Erdős, D.~Schröder. Optimal multi-resolvent local laws for Wigner matrices. \emph{Electronic Journal of Probability} \textbf{27}, 1--38 (2022)
	
	\bibitem{A2} G.~Cipolloni, L.~Erdős, D.~Schröder. Rank-uniform local law for Wigner matrices. \emph{Forum of Mathematics, Sigma} \textbf{10}, E96 (2022)
	
	\bibitem{iid} G.~Cipolloni, L.~Erdős, J.~Henheik, D.~Schröder. Optimal lower bound on eigenvector overlaps for non-Hermitian  random matrices. arXiv:  \href{https://arxiv.org/abs/2301.03549}{2301.03549v4} (2023).
	
	
	\bibitem{metaargument} N.~Cook, W.~Hachem, J.~Najim, D.~Renfrew. Non-Hermitian random matrices with a variance profile (I): deterministic equivalents and limiting ESDs. \emph{Electronic Journal of Probability} \textbf{23} No.~110, 1--61 (2018)
	
	\bibitem{deutsch} J.~M.~Deutsch. Quantum statistical mechanics in a closed system. \emph{Physical Review A} \textbf{43}, 2046--2049 (1991) 
	

	
	\bibitem{ESYDeloc} L.~Erd\H{o}s, B.~Schlein, H.-T.~Yau. Semicircle law on short scales and delocalization of eigenvectors for Wigner random matrices. \emph{The Annals of Probability} \textbf{37} 815--852 (2009)
	
	\bibitem{semicirclegeneral} L.~Erd\H{o}s, A.~Knowles, H.-T.~Yau, J.~Yin. The local semicircle law for a general class of random matrices. {\it Electronic Journal of Probability} \textbf{18}, No. 59, 1--58 (2013) 
	
	   	\bibitem{EYbook} L. Erd\H{o}s, H.-T. Yau. A Dynamical Approach to Random Matrix Theory, \textit{Courant Lecture Notes in Mathematics} 28 (2017)
	
	\bibitem{slowcorr} L.~Erd\H{o}s, T.~Krüger, D.~Schröder. Random matrices with slow correlation decay. \emph{Forum of Mathematics, Sigma} \textbf{7}, E8 (2019)
	
		\bibitem{cuspI} L.~Erd\H{o}s, T.~Krüger, D.~Schröder. Cusp Universality for Random Matrices I: Local Law and the Complex Hermitian Case. \emph{Communications in Mathematical Physics} \textbf{378}, 1203--1278 (2020)
	
	\bibitem{KnowlesYin13} A.~Knowles, J.~Yin. Eigenvector distribution of Wigner matrices.
	\emph{Probab. Theory Related Fields}, \textbf{155},
	543--582 (2013)
	
	\bibitem{KnowlesYinIsotropic} A.~Knowles, J.~Yin. The isotropic semicircle law and deformation of Wigner matrices. \emph{Communications on Pure and Applied Mathematics} \textbf{66} 1663--1749 (2013)
	
	
	\bibitem{LandonYau} B.~Landon, H.-T.~Yau. Convergence of local statistics of Dyson Brownian motion. \emph{Communications in Mathematical Physics} \textbf{355}, 949--1000 (2017)
	
	\bibitem{LeeSchnelli} J.~O.~Lee, K.~Schnelli. Local deformed semicircle law and complete delocalization for Wigner matrices with random potential. \emph{Journal of Mathematical Physics} \textbf{54}, 103504 (2013)
	
 
 \bibitem{Luczka} J. Łuczka. Quantum counterpart of classical equipartition of energy. \emph{Journal of Statistical Physics} \textbf{179},
 839--845 (2020)
	
	\bibitem{MarcinekThesis}  J. Marcinek. High dimensional normality of noisy eigenvectors. \emph{Ph.D. Thesis, Harvard University} (2020)
	
	\bibitem{MarcinekYau2005.08425} J.~Marcinek, H.-T.~Yau. High dimensional normality of noisy eigenvectors. \emph{Communications in Mathematical Physics} \textbf{395}, 1007--1096 (2022)
	
	
	\bibitem{Shnirelman} A. I. \u{S}nirel’man. Ergodic properties of eigenfunctions. \emph{Uspekhi Matematicheskikh Nauk} \textbf{29}, 181--182 (1974)
	
	\bibitem{SBL2018} J.~Spiechowicz, P.~Bialas, J.~Łuczka. 
	Quantum partition of energy for a free Brownian particle: impact of dissipation.
	 \emph{Physics Review A} \textbf{98}, 052107 (2018)
	

	
	\bibitem{TVEigenvector} T.~Tao, V.~Vu. Random matrices: universal properties of eigenvectors.
	\emph{Random Matrices: Theory Applications} \textbf{1}, 1150001, 27 (2012)
	
\end{thebibliography}
\end{document}